\crefname{figure}{Figure}{Figure}
\newtheorem{remark}{Remark}
\DeclareMathAlphabet{\mathpzc}{OT1}{pzc}{m}{it}
\newcommand{\Maz}{\mathcal{HB}}
\newcommand{\MazE}{\sim_{\Maz}}
\newcommand{\VHB}{\mathcal{\mathcal{VHB}}}
\newcommand{\VHBE}{\sim_{\VHB}}
\newcommand{\SysLeafEvents}{\SysEvents_{\neq\RootProcess}}
\newcommand{\LeafEvents}[1]{\SysEvents_{\neq\RootProcess}(#1)}
\newcommand{\Leaves}{\System \setminus\{ \RootProcess \}}
\mathchardef\mhyphen="2D % Define a "math hyphen"
\newcommand{\RootProcess}{p_1}
\newcommand{\Poly}{\mathsf{poly}}
\newcommand{\WriteExtend}{\mathsf{WExtend}}
\newcommand{\ov}{\overline}
\newcommand{\Source}{\mathsf{Source}}
\newcommand{\Optimal}{\mathsf{Optimal}}
\newcommand{\OptimalObs}{\mathsf{Optimal}^*}
\newcommand{\ExtendPO}{\mathsf{Extend}}
\newcommand{\Width}[1]{\mathsf{width}(#1)}
\newcommand{\MWidth}[1]{\mathsf{Mwidth}(#1)}
\newcommand{\VCDPOR}{\operatorname{VC-DPOR}}
\newcommand{\DCDPOR}{\operatorname{DC-DPOR}}
\newcommand{\NegativeAnnotation}{\mathcal{C}}
\newcommand{\Trace}{t}
\newcommand{\SysReads}{\mathcal{R}}
\newcommand{\SysWrites}{\mathcal{W}}
\newcommand{\Read}{r}
\newcommand{\TO}{\mathsf{TO}}
\newcommand{\Write}{w}
\newcommand{\Event}{e}
\newcommand{\HB}[3]{#1\mathsf{\to}_{#2}#3}
\newcommand{\CHB}[3]{#1\mathsf{\mapsto}_{#2}#3}
\newcommand{\NCHB}[3]{#1\mathsf{\not \mapsto}_{#2}#3}
\newcommand{\Value}{\mathsf{val}}
\newcommand{\CandidateSet}{M}
\newcommand{\GuardingRead}{\mathsf{Guard}}
\newcommand{\LocalStates}{\mathcal{L}}
\newcommand{\MutateRoot}{\mathsf{ExtendRoot}}
\newcommand{\MutateLeaf}{\mathsf{ExtendLeaf}}
\newcommand{\Unordered}[3]{#1\parallel_{#2} #3}
\newcommand{\Ordered}[3]{#1 \not \parallel_{#2} #3}
\newcommand{\Refines}{\sqsubseteq}
\newcommand{\StrictRefines}{\sqsubset}
\newcommand{\LeafRefines}{\preccurlyeq}
\newcommand{\Confl}[2]{#1 \Join #2}
\newcommand{\Enabled}{\mathsf{enabled}}
\newcommand{\Events}[1]{\SysEvents(#1)}
\newcommand{\Reads}[1]{\SysReads(#1)}
\newcommand{\Writes}[1]{\SysWrites(#1)}
\newcommand{\Project}{|}
\newcommand{\Domain}{\mathsf{dom}}
\newcommand{\ValueDomain}{\mathcal{D}}
\newcommand{\SideDomain}{[2]}
\newcommand{\SideAnnotation}{S}
\newcommand{\AnnotatedPO}{\mathcal{P}}
\newcommand{\AnnotatedPOQ}{\mathcal{Q}}
\newcommand{\AnnotatedPOK}{\mathcal{K}}
\newcommand{\AnnotatedPOF}{\mathcal{F}}
\newcommand{\Image}{\mathsf{img}}
\newcommand{\SeqTrace}{\tau}
\newcommand{\Realize}{\mathsf{Realize}}
\newcommand{\True}{\mathsf{True}}
\newcommand{\False}{\mathsf{False}}
\newcommand{\System}{\mathcal{H}}
\newcommand{\PartialOrders}{\mathcal{A}}
\newcommand{\Process}{p}
\newcommand{\Proc}[1]{\Process(#1)}
\newcommand{\StateSpace}{\mathcal{S}_{\System}}
\newcommand{\Globals}{\mathcal{G}}
\newcommand{\Class}[2]{[#1]_{#2}}
\newcommand{\Location}{\mathsf{loc}}
\newcommand{\SysEvents}{\mathcal{E}}
\newcommand{\State}{s}
\newcommand{\TraceSpace}{\mathcal{T}_{\System}}
\newcommand{\TraceSpaceMax}{\mathcal{T}_{\System}^{\max}}
\newcommand{\Observation}{\mathcal{O}}
\newcommand{\ClosureAlgo}{\mathsf{Closure}}
\newcommand{\SideIndicator}{\mathcal{I}}
\newcommand{\RuleOneAlgo}{\mathsf{Rule1}}
\newcommand{\RuleTwoAlgo}{\mathsf{Rule2}}
\newcommand{\RuleThreeAlgo}{\mathsf{Rule3}}
\newcommand{\Flag}{\mathsf{Flag}}
\newcommand{\GoodWrites}{\mathsf{GoodW}}
\newcommand{\BadWrites}{\mathsf{BadW}}
\newcommand{\VisibleWrites}{\mathsf{VisibleW}}
\newcommand{\HeadWrites}{\mathsf{MinW}}
\newcommand{\TailWrites}{\mathsf{MaxW}}
\newcommand{\bbot}{\rotatebox[origin=c]{90}{$\models$}}
\preto\tabular{\setcounter{magicrownumbers}{0}}
\newcounter{magicrownumbers}
\def \darkred {black!20!red}
\begin{document}

%% Author information
%% Contents and number of authors suppressed with 'anonymous'.
%% Each author should be introduced by \author, followed by
%% \authornote (optional), \orcid (optional), \affiliation, and
%% \email.
%% An author may have multiple affiliations and/or emails; repeat the
%% appropriate command.
%% Many elements are not rendered, but should be provided for metadata
%% extraction tools.

%% Author with single affiliation.
\title{Value-centric Dynamic Partial Order Reduction}

\author{Krishnendu Chatterjee}
%\orcid{nnnn-nnnn-nnnn-nnnn}             %% \orcid is optional
\affiliation{
  %\position{Position1}
  %\department{Department1}              %% \department is recommended
  \institution{IST Austria}            %% \institution is required
  \streetaddress{Am Campus 1}
  \city{Klosterneuburg}
  %\state{State1}
  \postcode{3400}
  \country{Austria}                    %% \country is recommended
}
\email{krishnendu.chatterjee@ist.ac.at}          %% \email is recommended

\author{Andreas Pavlogiannis}
%\orcid{nnnn-nnnn-nnnn-nnnn}             %% \orcid is optional
\affiliation{
  %\position{Position1}
  %\department{Department1}              %% \department is recommended
  \institution{EPFL}            %% \institution is required
  \streetaddress{Route Cantonale}
  \city{Lausanne}
  %\state{State1}
  \postcode{1015}
  \country{Switzerland}                    %% \country is recommended
}
\email{pavlogiannis@cs.au.dk}          %% \email is recommended

\author{Viktor Toman}
%\orcid{nnnn-nnnn-nnnn-nnnn}             %% \orcid is optional
\affiliation{
  %\position{Position1}
  %\department{Department1}              %% \department is recommended
  \institution{IST Austria}            %% \institution is required
  \streetaddress{Am Campus 1}
  \city{Klosterneuburg}
  %\state{State1}
  \postcode{3400}
  \country{Austria}                    %% \country is recommended
}
\email{viktor.toman@ist.ac.at}          %% \email is recommended

\begin{abstract}
The verification of concurrent programs remains an open challenge,
as thread interaction has to be accounted for, which leads to state-space explosion.
Stateless model checking battles this problem by exploring traces rather than states of the program.
As there are exponentially many traces, dynamic partial-order reduction (DPOR) techniques
are used to partition the trace space into equivalence classes, and explore a few representatives from each class.
The standard equivalence that underlies most DPOR techniques is the \emph{happens-before} equivalence,
however recent works have spawned a vivid interest towards coarser equivalences.
The efficiency of such approaches is a product of two parameters: 
(i)~the size of the partitioning induced by the equivalence, and
(ii)~the time spent by the exploration algorithm in each class of the partitioning.

In this work, we present a new equivalence, called \emph{value-happens-before} and show that it has two appealing features.
First, value-happens-before is always at least \emph{as coarse as} the happens-before equivalence, and can be even exponentially coarser.
Second, the value-happens-before partitioning is efficiently explorable when the number of threads is bounded.
We present an algorithm called \emph{value-centric} DPOR ($\VCDPOR$), which explores the underlying partitioning using polynomial time per class.
Finally, we perform an experimental evaluation of  $\VCDPOR$ on various benchmarks, and compare it against other state-of-the-art approaches.
Our results show that value-happens-before typically induces a significant reduction in the size of the underlying partitioning,
which leads to a considerable reduction in the running time for exploring the whole partitioning.
\end{abstract}

\begin{CCSXML}
<ccs2012>
<concept>
<concept_id>10003752.10003790.10011192</concept_id>
<concept_desc>Theory of computation~Verification by model checking</concept_desc>
<concept_significance>500</concept_significance>
</concept>
<concept>
<concept_id>10011007.10011074.10011099.10011692</concept_id>
<concept_desc>Software and its engineering~Formal software verification</concept_desc>
<concept_significance>500</concept_significance>
</concept>
</ccs2012>
\end{CCSXML}

\ccsdesc[500]{Theory of computation~Verification by model checking}
\ccsdesc[500]{Software and its engineering~Formal software verification}

%% Keywords
%% comma separated list
\keywords{concurrency, stateless model checking, partial-order reduction}  %% \keywords are mandatory in final camera-ready submission

%% \maketitle
%% Note: \maketitle command must come after title commands, author
%% commands, abstract environment, Computing Classification System
%% environment and commands, and keywords command.
\maketitle

\section{Introduction}\label{sec:intro}

\noindent{\em Model checking of concurrent programs.} 
The formal analysis of concurrent programs is a key problem in program analysis and verification. 
Concurrency incurs a combinatorial explosion in the behavior of the program, which makes errors hard to reproduce by testing (often identified as {\em Heisenbugs}~\cite{Musuvathi08}).
Thus, the formal analysis of concurrent program requires a systematic exploration of the 
state space, which is addressed by {\em model checking}~\cite{Clarke00}.
However there are two key issues related to model-checking of concurrent programs:
first, is related to the state-space explosion, and second, is related to 
the number of interleavings. 
Below we describe the main techniques to address these problems. 

\noindent{\em Stateless model checking.}
Model checkers typically store a large number of global states, and 
cannot handle realistic concurrent programs. 
The standard solution that is adopted to battle this problem on concurrent programs is 
{\em stateless model checking}~\cite{G96}.
Stateless model-checking methods typically explore traces rather than states of the analyzed program,
and only have to store a small number of traces.
In such techniques, model checking is achieved by a scheduler, which drives the 
program execution based on the current interaction between the threads.
The depth-first nature of the search enables it to be both systematic and memory-efficient. 
Stateless model-checking techniques have been employed successfully in several well-established
tools, e.g., VeriSoft~\cite{Godefroid97,Godefroid05} and {\sc CHESS}~\cite{Musuvathi07b}.

\noindent{\em Partial-order Reduction (POR).}
While stateless model checking deals with the state-space issue, 
one key challenge that remains is exploring efficiently the exponential number of interleavings, which results from non-deterministic interprocess communication.
There exist various techniques for reducing the number of explored interleavings, such as
depth bounding and context bounding~\cite{Lal09,Musuvathi07}.
One of the most well-studied techniques is {\em partial-order reduction (POR)}~\cite{Clarke99,G96,Peled93}.
The main principle of POR is that two interleavings can be regarded as equal
if they agree on the order of conflicting (dependent) events. 
In other words, POR considers certain pairs of traces to be equivalent, and the theoretical 
foundation of POR is the equivalence relation induced on the trace space, known as the {\em happens-before} (or {\em Mazurkiewicz)} equivalence $\Maz$~\cite{Mazurkiewicz87}.
POR algorithms explore at least one trace from each equivalence class and 
guarantee a complete coverage of all behaviors that can occur in any interleaving, 
while exploring only a subset of the trace space. 
For the most interesting properties that arise in formal 
verification, such as safety, race freedom, absence of global deadlocks, and absence of 
assertion violations, POR-based algorithms make sound reports of correctness~\cite{G96}.

\noindent{\em Dynamic Partial-order Reduction (DPOR).}
Dynamic partial-order reduction (DPOR) is an on-the-fly version of POR~\cite{Flanagan05}.
DPOR records conflicts that actually occur during the execution of traces, and thus is able to infer independence more frequently than static POR, which typically relies on over-approximations of conflicting events.
Similar to POR, DPOR-based algorithms guarantee the exploration of at least one trace in each class of the happens-before partitioning.
Recently, an optimal method for DPOR was developed~\cite{Abdulla14} that explores exactly one trace from 
each happens-before equivalence class.
 
\noindent{\em Efficiency of DPOR techniques.} 
The efficiency of DPOR algorithms typically depends on two parameters, namely 
(i)~the size of the trace-space partitioning and
(ii)~the time required to explore each class of the partitioning.
The overall efficiency of the algorithm is a product of the two above, and there is usually 
a trade-off between the two, as coarser partitionings typically make the problem of moving between different classes of the partitioning
computationally harder.

\noindent{\em Beyond the Mazurkiewicz equivalence.}
Lately, there has been a considerable effort into going beyond Mazurkiewicz equivalence, by developing algorithms
that explore partitionings of the trace space induced by equivalence relations that are coarser than Mazurkiewicz~\cite{HUANG15,Chalupa17,Elvira17,Aronis18}.
Such approaches can be broadly classified as {\em oracle-based} methods that rely on NP-hard 
oracles such as SMT-solvers to guide the exploration, and {\em explicit} methods which avoid such computationally expensive oracles.
Explicit methods include DC-DPOR~\cite{Chalupa17} and Optimal DPOR with Observers~\cite{Aronis18},
which rely on equivalences provably coarser than the happens-before equivalence,
as well as Context-sensitive DPOR~\cite{Elvira17} which sometimes might be coarser, but not always.
On the other hand, oracle-based methods include MCR~\cite{HUANG15} and SATCheck~\cite{Demsky15}.

\noindent{\em Value-centric DPOR.}
The happens-before equivalence and most coarser equivalences which admit an efficient exploration
are insensitive to the values that variables take during the execution of a trace.
On the other hand, it is well-understood that equivalences which are sensitive to such values can be very coarse, thereby reducing the size of the trace-space partitioning.
An interesting approach with a value-centric partitioning was recently explored in \cite{HUANG15}.
However, that approach is implicit and relies on expensive NP-oracles repeatedly for guiding the exploration of the partitioning.
This NP bottleneck was identified in that work, and was subsequently only partially improved with static-analysis-based heuristics~\cite{Huang017}.
Hence, the challenge of constructing value-centric equivalences that also admit efficient explorations has remained open.
In the next section, we illustrate the benefits of such equivalences on a small example.

\subsection{A Small Motivating Example}\label{subsec:example}

%\begin{wrapfigure}[]{r}{0.4\textwidth}
\begin{figure}
%\vspace{-8mm}
\small
\centering
\begin{subfigure}[t]{0.15\textwidth}
\begin{align*}
\text{Th}&\text{read}~\Process_{1}:\\
\hline\\[-1em]
1.~& \Write(x,1)\\
\end{align*}
\end{subfigure}
\qquad
\begin{subfigure}[t]{0.15\textwidth}
\begin{align*}
\text{Th}&\text{read}~\Process_{2}:\\
\hline\\[-1em]\
1.~& \Write(x,2)\\
2.~& \Write(x,1)\\
3.~& \Read(x)\\
\end{align*}
\end{subfigure}
\vspace{-2em}
\caption{A toy program with two threads.}
\label{fig:motivating_example}
%\end{wrapfigure}
\end{figure}

Consider the simple program given in \cref{fig:motivating_example}, which consists of two threads communicating over a global variable $x$.
We have two types of events: $\Process_1$ writes to $x$ the value~1, whereas $\Process_2$ first writes to $x$ the value~2, then writes to $x$ the value~1,
and finally it reads the value of $x$ to its local variable.
When we analyze this program, it becomes apparent that a model-checking algorithm can benefit if it takes into account the values written by the write events.
Indeed, denote by $\Write_i^j$ the $j$-th write event of thread $i$, and by $\Read$ the unique read event.
There exist $4$ Mazurkiewicz orderings.
\begin{align*}
&\Trace_1:~\Write_1^1 \Write_2^1 \Write_2^2 \Read \qquad
\Trace_2:~\Write_2^1 \Write_1^1 \Write_2^2 \Read \qquad
\Trace_3:~\Write_2^1 \Write_2^2 \Write_1^1 \Read \qquad
\Trace_4:~\Write_2^1 \Write_2^2 \Read \Write_1^1
\end{align*}
Hence, any algorithm that uses the Mazurkiewicz equivalence for exploring the trace space of the above program will have to explore at least~4 traces.
Moreover, any sound algorithm that is insensitive to values will, in general, explore at least two traces (e.g., $\Trace_1$, and $\Trace_3$),
since the value read by $\Read$ can, in principle, be different in both cases.
This is true, for example, for $\DCDPOR$~\cite{Chalupa17}, which is based on the recently introduced Observation equivalence,
as well as the Optimal DPOR with observers~\cite{Aronis18} (which explores $3$).
On the other hand, it is clear that examining a single trace suffices for visiting all the local states of all threads.
Although minimal, the above example illustrates the advantage that stateless model checking algorithms can gain by being sensitive to the values
used by the events during an execution.

\subsection{Challenges and Our Contributions}

\noindent{\em Challenges.}
The above example illustrates that a value-centric partitioning can be coarse.
To our knowledge, value-centric equivalences have been used systematically (i.e., with provable guarantees) only by implicit methods which rely on NP oracles (e.g., SMT-solvers~\cite{HUANG15,Huang017}),
which makes the exploration of the (reduced) partitionings computationally expensive.
The challenge that arises naturally is to produce a partitioning that 
(a)~is provably \emph{always} coarser than Mazurkiewicz trace equivalence;
and (b)~is \emph{efficiently} explorable 
(i.e., the time required in each step of the search is small/polynomial). 
In this work we address this challenge.

\noindent{\em Our contributions.}
The main contribution of this work is a new value-centric equivalence, called  \emph{value-happens-before} ($\VHB$).
Intuitively, $\VHB$ distinguishes (arbitrarily) a thread of the program, called the \emph{root}, from the other threads, called the \emph{leaves}.
The coarsening of the partitioning is achieved by $\VHB$ by relaxing the happens-before orderings between events that belong to the root and leaf threads.
Given two traces $\Trace_1$ and $\Trace_2$ which have the same happens-before ordering on the events of leaf threads,
$\VHB$ deems $\Trace_1$ and $\Trace_2$ equivalent by using a combination of (i)~the \emph{values} and (ii)~the \emph{causally-happens-before} orderings on pairs of events between the root and the leaves.

\noindent{\em Properties of $\VHB$.}
We discuss two key properties of $\VHB$.
\begin{compactenum}
\item \emph{Soundness.}
The $\VHB$ equivalence is sound for reporting correctness of local-state properties.
In particular, if $\Trace_1\VHBE\Trace_2$, then every trace is guaranteed to visit the same local states in both executions.
Thus, in order to report local-state-specific properties (e.g., absence of assertion violations), it is sound to explore a single representative from each class of the underlying partitioning.
Global-state properties can be encoded as local properties by using a thread to monitor the global state.
Due to this fact, many other recent works on DPOR focus on local-state properties only~\cite{HUANG15,Huang017,Aronis18,Chalupa17}.

\item \emph{Exponentially coarser than happens-before.}
The $\VHB$ is always at least as coarse as the happens-before (or Mazurkiewicz) equivalence,
i.e., if two traces are $\Maz$-equivalent, then they are also $\VHB$-equivalent.
This implies that the underlying $\VHB$ partitioning is never larger than the $\Maz$ partitioning.
In addition, we show that there exist programs for which the $\VHB$ partitioning is exponentially smaller,
thereby getting a significant reduction in one of the two factors that affect the efficiency of DPOR algorithms.
Interestingly, this reduction is achieved even if there are \emph{no} concurrent writes in the program.
\end{compactenum}

\noindent{\em Value-centric DPOR.}
We develop an efficient DPOR algorithm that explores the $\VHB$ partitioning, called $\VCDPOR$.
This algorithm is guaranteed to visit every class of the $\VHB$ partitioning, and for a constant number of threads,
the time spent in each class is polynomial.
Hence, $\VCDPOR$ explores efficiently a value-centric partitioning without relying on NP oracles.
For example, in the program of \cref{fig:motivating_example}, $\VCDPOR$ explores only one trace.

\noindent{\em Experimental results.}
Finally, we make a prototype implementation of $\VCDPOR$ and evaluate it on various classes of concurrency benchmarks.
We use our implementation to assess
(i)~the coarseness of the $\VHB$ partitioning in practice, and
(ii)~the efficiency of $\VCDPOR$ to explore such partitionings.
To this end, we compare these two metrics with existing state-of-the-art explicit DPOR algorithms,
namely, the Source-DPOR~\cite{Abdulla14}, Optimal-DPOR~\cite{Abdulla14}, Optimal-DPOR with observers~\cite{Aronis18}, as well as $\DCDPOR$~\cite{Chalupa17}.
Our results show a significant reduction in the size of the partitioning compared to the partitionings explored by existing techniques, which also typically leads to smaller running times.

\section{Preliminaries}\label{sec:prelim}

\subsection{Concurrent Computation Model}\label{subsec:model}

In this section we define the model of concurrent programs and 
introduce general notation.
We follow a standard exposition found in the 
literature~(e.g., \cite{Abdulla14,Chalupa17}).
For simplicity of presentation we do not consider locks in our model.
Later, we remark how locks can be handled naturally by our approach (see \cref{rem:locks}).

\smallskip\noindent{\bf General notation.}
Given a natural number $i\geq 1$, we denote by $[i]$ the set $\{ 1,2,\dots, i \}$.
Given a map $f\colon X\to Y$, we let $\Domain(f)=X$ and $\Image(f)=Y$ denote the domain and image sets of $f$, respectively.
We represent maps $f$ as sets of tuples $\{ (x, f(x))\}_x$.
Given two maps $f_1, f_2$, we write $f_1=f_2$ to denote that $\Domain(f_1)= \Domain(f_2)$ and for every $x\in \Domain(f_1)$ we have $f_1(x)=f_2(x)$, and we write $f_1\neq f_2$ otherwise.
A binary relation $\sim$ on a set $X$ is an {\em equivalence} relation iff $\sim$ is reflexive, symmetric and transitive. 
Given an equivalence $\sim_E$ and some $x\in X$, we denote by $\Class{x}{E}$ the equivalence class of $x$ under $\sim_E$, i.e.,
$
\Class{x}{E}=\{y\in X:~x\sim_E y\}
$.
The \emph{quotient set} $X/{E}= \{\Class{x}{E}\ |\ x\in X\}$ of $X$ under $\sim_E$ is the set of all equivalence classes of $X$ under $\sim_E$.

\smallskip\noindent{\bf Concurrent program.}
We consider a concurrent program $\System=\{ \Process_i \}_{i=1}^k$ of $k$ threads communicating over shared memory, where $k$ is some arbitrary constant.
For simplicity of presentation, we neglect dynamic thread creation.
We distinguish $\Process_1$ as the \emph{root thread} of $\System$, and refer to the remaining threads $\Process_2,\dots, \Process_k$ as \emph{leaf threads}.
The shared memory consists of a finite set $\Globals$ of global variables, where each variable receives values from a finite value domain $\ValueDomain$.
Every thread executes instructions, which we call \emph{events}, and are of the following types.
\begin{compactenum}
\item A \emph{write event} $\Write$ writes a value $v\in \ValueDomain$ to a global variable $x\in \Globals$.
\item A \emph{read event} $\Read$ reads the value $v\in \ValueDomain$ of a global variable $x\in \Globals$.
\item A \emph{local (invisible) event} is an event that does not access any global variable.
\end{compactenum}
Although typically threads contain local events to guide the control-flow, such events are not relevant in our setting, and will thus be ignored.
For simplicity of exposition, we consider that every thread is represented as an unrolled tree, which captures its unrolled control-flow, and every event is a node in this tree.
In practice, each event is sufficiently identified by its thread identifier and an integer that counts how many preceding events of the same thread have been executed already.
Given an event $\Event$, we denote by $\Proc{\Event}$ the thread of $\Event$ and by $\Location(\Event)$ the unique global variable that $\Event$ accesses.
We denote by $\SysEvents$ the set of all events, by $\SysWrites$ the set of write events, and by $\SysReads$ the set of read events of $\System$.
Given a thread $\Process$, we denote by $\SysEvents_{\Process}$, $\SysWrites_{\Process}$ and $\SysReads_{\Process}$ the set of events, read events and write events of $\Process$, respectively.
In addition, we let $\SysEvents_{\neq \Process} = \bigcup_{\Process'\neq \Process} \SysEvents_{\Process'}$ and similarly for $\SysWrites_{\neq\Process}$ and $\SysReads_{\neq\Process}$,
i.e., $\SysEvents_{\neq\Process}$ denote the set of events of threads other than thread $\Process$, and 
similarly, for $\SysWrites_{\neq\Process}$ and $\SysReads_{\neq\Process}$.
Finally, given a set $X\subseteq \SysEvents$, 
we let $\Writes{X}=X\cap \SysWrites$ and $\Reads{X}=X\cap \SysReads$ for the set of write and read events
of $X$, respectively.

\smallskip\noindent{\bf Concurrent program semantics.}
The semantics of $\System$ are defined by means of a transition system over a state space of global states $\State=(\Value, \LocalStates_1,\dots, \LocalStates_k)$,
where $\Value\colon\Globals \to \ValueDomain$ is a \emph{value function} that maps every global variable to a value, and $\LocalStates_i$ is a \emph{local state} of thread $\Process_i$, which contains the values of the local variables of each thread.
The memory model considered here is sequentially consistent.
Since the setting is standard, we omit here the formal setup and refer the reader to \cite{Godefroid05} for details.
As usual in stateless model checking, we focus our attention on state spaces $\StateSpace$ that are acyclic
(hence our focus is on bounded model checking).

\smallskip\noindent{\bf Traces.}
A (concurrent) \emph{trace} is a sequence of events $\Trace=\Event_1,\dots,\Event_j$ that corresponds to a valid execution of $\System$.
Given a trace $\Trace$, we denote by $\Events{\Trace}$ the set of events that appear in $\Trace$,
and by $\Reads{\Trace}=\Events{\Trace}\cap \SysReads$ (resp., $\Writes{\Trace}=\Events{\Trace}\cap \SysWrites$) the read (resp., write) events in $\Trace$.
%Given some $\ell\in \Nats$, we denote by $\Trace[\ell]$ the prefix of $\Trace$ up to position $\ell$, and we say that $\Trace$ is an \emph{extension} of $\Trace[\ell]$.
We let $\Enabled(\Trace)$ denote the set of enabled events in the state reached after $\Trace$ is executed, and call $\Trace$ \emph{maximal} if $\Enabled(\Trace)=\emptyset$.
We write $\TraceSpace$ and $\TraceSpaceMax$ for the set of all traces and maximal traces, respectively, of $\System$.
Given a set of events $A$, we denote by $\Trace \Project A$ the \emph{projection} of $\Trace$ on $A$, which is the unique subsequence of $\Trace$ that contains all events of $A\cap \Events{\Trace}$, and only those.
%Given a thread $\Process$, we let $\Trace\Project \Process=\Trace\Project \SysEvents_{\Process}$.

\smallskip\noindent{\bf Observation, side and value functions.}
Given a trace $\Trace$ and a read event $\Read\in \Reads{\Trace}$, the \emph{observation} of $\Read$ in $\Trace$ is the last write event $\Write$ that appears before $\Read$ in $\Trace$ such that $\Location(\Read)=\Location(\Write)$.
The \emph{observation function} of $\Trace$ is a function $\Observation_{\Trace}\colon\Reads{\Trace}\to \Writes{\Trace}$ such that $\Observation_{\Trace}(\Read)$ is the observation of $\Read$ in $\Trace$.
The \emph{side function} of $\Trace$ is a function $\SideAnnotation_{\Trace}\colon \Reads{\Trace}\cap \SysReads_{\RootProcess}\to [2]$ such that
$\SideAnnotation_{\Trace}(\Read)=1$ if $\Proc{\Observation_{\Trace}(\Read)} = \RootProcess$ and $\SideAnnotation_{\Trace}(\Read)=2$ otherwise.
In other words, a side function is defined for the read events of the root thread, and 
assigns~1 (resp.,~2) to each read event if it observes a local (resp., remote) write event in the trace
\footnote{Although the definition of side functions might appear arbitrary, we rely on this definition later for computing the $\VHB$ abstraction.}.
The \emph{value function} of $\Trace$ is a function $\Value_{\Trace}\colon\Events{\Trace}\to \ValueDomain$ such that $\Value_{\Trace}(\Event)$ is the value of the global variable $\Location(\Event)$ after the prefix of $\Trace$ up to $\Event$ has been executed.
Note that since each thread is deterministic, this value is always unique and thus $\Value_{\Trace}$ is well-defined.

\subsection{Problem and Complexity Parameters}

\smallskip\noindent{\bf The local-state reachability problem.}
The problem we address in this work is detecting erroneous local states of threads, e.g., whether a thread ever encounters an assertion violation.
The underlying algorithmic problem is that of discovering every possible local state of every thread of $\System$, and checking whether a bug occurs.
In stateless model checking, the focal object for this task is the trace, and algorithms solve the problem by exploring different maximal traces of the trace space $\TraceSpaceMax$.
DPOR techniques use an equivalence $E$ to partition the trace space into equivalence classes, and explore the partitioning $\TraceSpaceMax/E$ instead of the whole space $\TraceSpaceMax$.

\smallskip\noindent{\bf Complexity parameters.}
Given an equivalence $E$ over $\TraceSpaceMax$, the efficiency of an algorithm that explores the partitioning $\TraceSpaceMax/E$ is typically a product of two factors $O(\alpha\cdot \beta)$.
The first factor $\alpha$ is the size of the partitioning itself, i.e., $\alpha=|\TraceSpaceMax/E|$, which is typically exponentially large. 
As we construct coarser equivalences $E$, $\alpha$ decreases.
The second factor $\beta$ captures the amortized time on each explored class, and can be either polynomial (i.e., efficient) or exponential.
There is a tradeoff between $\alpha$ and $\beta$:
typically, for coarser equivalences $E$ the algorithms spend more time to explore each class, and hence $\alpha$ is decreased at the cost of increasing $\beta$. 
Hence, the challenge is to make $\alpha$ as small as possible without increasing $\beta$ much.

\smallskip\noindent{\bf This work.}
In this work, we introduce the value-happens before equivalence $\VHB$ and show that the $\VHB$-partitioning is efficiently explorable.
For a constant number of threads, which is typically the case, $\beta=\Poly(n)$, i.e., $\beta$ is polynomial in the length of the longest trace in $\TraceSpaceMax$.
Since, on the other hand, $\alpha$ is usually exponentially large in $n$, we will not focus on establishing the exact dependency of $\beta$ on $n$.
This helps to keep the exposition of the main message clear and focused.

Due to space restrictions, proofs and some experimental details are relegated to the appendix.

\subsection{Partial Orders}\label{subsec:traces_partial_orders}
Here we introduce some useful notation around partial orders, which are the central objects of our algorithms in later sections.

\smallskip\noindent{\bf Partial orders.}
Given a trace $\Trace$ and a set $X\subseteq \Events{\Trace}$, a \emph{(strict) partial order} $P(X)$ over $X$ is an irreflexive, antisymmetric and transitive relation over $X$
(i.e., $<_{P(X)}\subseteq X\times X$).
When $X$ is clear from the context, we will simply write $P$ for the partial order $P(X)$.
Given two events $\Event_1,\Event_2\in X$, we write $\Event_1\leq_P \Event_2$ to denote that $\Event_1<\Event_2$ or $\Event_1=\Event_2$
Given two distinct events $\Event_1,\Event_2\in X$, we say that $\Event_1$ and $\Event_2$ are \emph{unordered} by $P$, denoted by $\Unordered{\Event_1}{P}{\Event_2}$, if neither $\Event_1<_{P}\Event_2$ nor $\Event_2<_{P} \Event_1$.
Given a set $Y\subseteq X$, we denote by $P\Project Y$ the \emph{projection} of $P$ on the set $Y$,
i.e., $<_{P\Project Y}\subseteq Y\times Y$, and for every pair of events $\Event_1, \Event_2\in Y$, we have that $\Event_1<_{P\Project Y} \Event_2$ iff $\Event_1<_{P} \Event_2$.
Given two partial orders $P$ and $Q$ over a common set $X$, we say that $Q$ \emph{refines} $P$, denoted by $Q\Refines P$, if
for every pair of events $\Event_1, \Event_2\in X$, if $\Event_1<_{P}\Event_2$ then $\Event_1<_{Q}\Event_2$.
We write $Q\StrictRefines P$ to denote that $Q\Refines P$ and $P\not\Refines Q$.
A \emph{linearization} of $P$ is a total order that refines $P$.
Note that a trace $\Trace$ is a partial (and, in fact, total) order over the set $\Events{\Trace}$.

\smallskip\noindent{\bf Conflicting events, width and Mazurkiewicz width.}
Two events $\Event_1, \Event_2$ are called \emph{conflicting}, written $\Confl{\Event_1}{\Event_2}$, if they access the same global variable and at least one writes to the variable.
Let $P$ be a partial order over a set $X$.
The \emph{width} $\Width{P}$ of $P$ is the length of its longest antichain,
i.e., it is the smallest integer $i$ such that for every set $Y\subseteq X$ of size $i+1$ $\Event_1, \Event_2\in Y$ such that $\Ordered{\Event_1}{P}{\Event_2}$.
A set $Y\subseteq X$ is called \emph{pairwise conflicting} if for every pair of distinct events $\Event_1,\Event_2\in Y$, we have that $\Confl{\Event_1}{\Event_2}$.
We define the \emph{Mazurkiewicz width} $\MWidth{P}$ of $P$ as the smallest integer $i$ such that 
for every pairwise conflicting set $Y\subseteq X$ of size $i+1$ there exists a pair $\Event_1, \Event_2\in Y$ such that $\Ordered{\Event_1}{P}{\Event_2}$.
Intuitively, $\MWidth{P}$ is similar to $\Width{P}$, with the difference that, in the first case, we focus on events that are conflicting as opposed to any events.

\smallskip\noindent{\bf The thread order $\TO$.}
The \emph{thread order} $\TO$ of $\System$ is a partial order $<_{\TO}\subseteq \SysEvents\times \SysEvents$ that defines a fixed order between pairs of events of the same thread. 
For every trace $\Trace\in \TraceSpace$, we have that $\Trace\Refines \TO\Project \Events{\Trace}$.
Every partial order $P$ used in this work respects the thread order.

\smallskip\noindent{\bf Visible, maximal and minimal writes.}
Consider a partial order $P$ over a set $X$.
Given a read event $\Read\in \Reads{X}$ we define the set of \emph{visible writes}  of $\Read$ as
\begin{align*}
\VisibleWrites_{P}(\Read)=&\{ \Write\in \Writes{X}\colon~\Confl{\Read}{\Write}   \text{ and }   \Read\not <_{P}\Write   \text{ and for each }   \Write'\in \Writes{X}\\
&   \text{ s.t. }  \Confl{\Read}{\Write'}\text{,}
\text{ if }  \Write<_{P} \Write'   \text{ then }  \Write'\not  <_{P} \Read
 \}
 \end{align*}
 In words, $\VisibleWrites_{P}(\Read)$ contains the write events $\Write$ that conflict with $\Read$ and are not ``hidden'' to $\Read$ by $P$, i.e., there exist linearizations $\Trace$ of $P$ such that $\Observation_{\Trace}(\Read)=\Write$ (note that here $\Trace$ is not necessarily an actual trace of $\System$).
 The set of \emph{minimal writes} $\HeadWrites_{P}(\Read)$ (resp., \emph{maximal writes} $\TailWrites_{P}(\Read)$) of $\Read$ contains the write events that are minimal (resp., maximal) elements in $P\Project\VisibleWrites_{P}(\Read)$.

\smallskip\noindent{\bf The happens-before partial order.}
A trace $\Trace$ induces a \emph{happens-before} partial order $\HB{}{\Trace}{}\subseteq \Events{\Trace}\times \Events{\Trace}$,
which is the smallest transitive relation on $\Events{\Trace}$ such that (i)~$\HB{}{\Trace}{}\Refines\TO\Project\Events{\Trace}$ and
(ii)~$\HB{\Event_1}{\Trace}{\Event_2}$ if $\Event_1<_{\Trace}\Event_2$ and $\Confl{\Event_1}{\Event_2}$.

\smallskip\noindent{\bf The causally-happens-before partial order.}
A trace $\Trace$ induces a \emph{causally-happens-before} partial order $\CHB{}{\Trace}{} \subseteq \Events{\Trace}\times \Events{\Trace}$,
which is the smallest transitive relation on $\Events{\Trace}$ such that (i)~$\CHB{}{\Trace}{}\Refines \TO\Project\Events{\Trace}$ and (ii)~for every read event $\Read\in \Reads{\Trace}$, we have $\CHB{\Observation_{\Trace}(\Read)}{\Trace}{\Read}$.
In words, $\CHB{}{}{}$ captures the flow of write events into read events, and is closed under composition with the thread order.
Intuitively, for an event $\Event$, the set of events $\Event'$ that causally-happen-before $\Event$ are the events that need to be present so that $\Event$ is enabled.
Note that $\HB{}{\Trace}{}\Refines \CHB{}{\Trace}{}$, i.e., the happens-before partial order refines the causally-happens-before partial order.

We refer to \cref{fig:trace_example} for an illustration of the $\HB{}{\Trace}{}$ and $\CHB{}{\Trace}{}$ partial orders.
\begin{figure}[!h]
\begin{subfigure}[b]{0.26\textwidth}
\centering
\small
\def\rownumber{}
\begin{tabular}[b]{@{\makebox[1.2em][r]{\rownumber\space}} | l | l | l |}
\normalsize{$\mathbf{\SeqTrace_1}$} & \normalsize{$\mathbf{\SeqTrace_2}$} & \normalsize{$\mathbf{\SeqTrace_3}$}
\gdef\rownumber{\stepcounter{magicrownumbers}\arabic{magicrownumbers}} \\
\hline
$\Write(x,1)$ & & \\
& & $\Write(x,1)$ \\
& $\Write(y,1)$ & \\
& $\Read(y,1)$ & \\
& $\Write(x,1)$ & \\
& & $\Write(y,2)$ \\
$\Write(y,1)$ & & \\
$\Read(x,1)$ &  &\\
\hline
\end{tabular}
\caption{A trace $\Trace$ of three threads.}
\label{subfig:trace_example1}
\end{subfigure}
\quad
\begin{subfigure}[b]{0.7\textwidth}
\centering
\small
\begin{tikzpicture}[thick,
pre/.style={<-,shorten >= 2pt, shorten <=2pt, very thick},
post/.style={->,shorten >= 2pt, shorten <=2pt,  very thick},
seqtrace/.style={->, line width=2},
und/.style={very thick, draw=gray},
event/.style={rectangle, minimum height=0.8mm, minimum width=3mm, fill=black!100,  line width=1pt, inner sep=0},
virt/.style={circle,draw=black!50,fill=black!20, opacity=0}]

\newcommand{\xdisposition}{0}
\newcommand{\ydisposition}{0}
\newcommand{\xstep}{1.6}
\newcommand{\ystep}{0.8}
\newcommand{\ybias}{0.4}
\newcommand{\xbias}{0.7}
\newcommand{\xbiassmall}{0.55}

%\node[] at (\xdisposition + 0*\xstep, \ydisposition + 6*\ystep) {$\SideAnnotation_{\Trace} = \{ (\Event_8, 2) \}$};
\node[] at (\xdisposition + 0*\xstep, \ydisposition + 4*\ystep) {
$
\begin{aligned}
\CHB{}{\Trace}{}  = & \TO \Project \Events{\Trace}~\cup & \text{(thread order)}\\
& \{\CHB{\Event_5}{\Trace}{\Event_8} \}  & \text{( on } x \text{)}
\end{aligned}
$
};
%\node[] at (\xdisposition + 3.3*\xstep, \ydisposition + 4*\ystep) {
%$
%\begin{aligned}
%\CHB{}{\Trace}{}\Project \SysReads  = & \TO \Project \Reads{\Trace}~\cup\\
%& \{ \HB{\Event_4}{\Trace}{\Event_8} \}  & \text{ on } y
%\end{aligned}
%$
%};

\node[] at (\xdisposition + 0.5*\xstep, \ydisposition + 6.5*\ystep) {
$
\begin{aligned}
\HB{}{\Trace}{} =&  \TO \Project \Events{\Trace}~\cup & \text{(thread order)}\\
& \{ \HB{\Event_1}{\Trace}{\Event_2} \HB{}{\Trace}{\Event_5} \HB{}{\Trace}{\Event_8} \}~\cup & \text{( on } x \text{)}\\
& \{\HB{\Event_3}{\Trace}{\Event_4} \HB{}{\Trace}{\Event_6} \HB{}{\Trace}{\Event_7}\} & \text{( on } y \text{)}
\end{aligned}
$
};
%\node[] at (\xdisposition + 3.3*\xstep, \ydisposition + 6.5*\ystep) {
%$
%\begin{aligned}
%\HB{}{\Trace}{}\Project \SysLeafEvents = & \TO \Project \LeafEvents{\Trace}~\cup & \text{(thread order)}\\\\
%& \{ \HB{\Event_2}{\Trace}{\Event_5} \}~\cup  & \text{( on } x \text{)}\\
%& \{\HB{\Event_3}{\Trace}{\Event_4} \HB{}{\Trace}{\Event_6}\} & \text{( on } y \text{)}
%\end{aligned}
%$
%};

\end{tikzpicture}
\caption{The happens-before $\HB{}{\Trace}{}$ and causally-happens-before $\CHB{}{\Trace}{}$ partial orders.}
\label{subfig:trace_example2}
\end{subfigure}
\caption{A trace (\protect\subref{subfig:trace_example1}) and the induced happens-before and causally-happens-before partial orders (\protect\subref{subfig:trace_example2}).
We use the notation $\Event_i$ to refer to the $i$-th event of $\Trace$.
}
\label{fig:trace_example}
\end{figure}

\section{The Value-happens-before Equivalence}\label{sec:ave}

In this section we introduce our new equivalence between traces, called \emph{value-happens-before}, and prove some of its properties.
We start with the happens-before equivalence, which has been used by DPOR algorithms in the literature.

\smallskip\noindent{\bf The happens-before equivalence.}
Two traces $\Trace_1, \Trace_2\in \TraceSpace$ are called \emph{happens-before-equivalent} (commonly referred to as \emph{Mazurkiewicz equivalent}), written $\Trace_1\MazE\Trace_2$, if the following hold.
\begin{compactenum}
\item $\Events{\Trace_1}=\Events{\Trace_2}$, i.e., they consist of the same set of events.
\item $\HB{}{\Trace_1}{} = \HB{}{\Trace_2}{}$, i.e., their happens-before partial orders are equal.
\end{compactenum}

\smallskip\noindent{\bf The value-happens-before equivalence.}
Two traces $\Trace_1,\Trace_2\in \TraceSpace$ are called \emph{value-happens-before-equivalent}, written $\Trace_1\VHBE \Trace_2$, if the following  hold.
\begin{compactenum}
\item $\Events{\Trace_1}=\Events{\Trace_2}$, $\Value_{\Trace_1}=\Value_{\Trace_2}$ and $\SideAnnotation_{\Trace_1}=\SideAnnotation_{\Trace_2}$,
i.e., they consist of the same set of events, and their value functions and side functions are equal.
\item $\CHB{}{\Trace_1}{} \Project \SysReads = \CHB{}{\Trace_2}{} \Project\SysReads$, i.e., $\CHB{}{\Trace_i}{}$ agree on the read events.
\item $\HB{}{\Trace_1}{}\Project{\SysLeafEvents} = \HB{}{\Trace_2}{}\Project{\SysLeafEvents}$, i.e., $\HB{}{\Trace_i}{}$ agree on the events of the leaf threads.
\end{compactenum}

\smallskip
\begin{remark}[Soundness]\label{rem:soundness}
Since every thread of $\System$ is deterministic, for any two traces $\Trace_1, \Trace_2\in \TraceSpace$ such that $\Events{\Trace_1}=\Events{\Trace_2}$ and $\Value_{\Trace_1}=\Value_{\Trace_2}$, the local states of each thread after executing $\Trace_1$ and $\Trace_2$ agree.
It follows that any algorithm that explores every class of $\TraceSpaceMax/\VHB$ discovers every local state of every thread, and thus $\VHB$ is a sound equivalence for local-state reachability.
\end{remark}

\smallskip\noindent{\bf Exponential coarseness.}
%A question that arises naturally is how much coarser $\VHB$ can be compared to $\Maz$.
Here we  provide two toy examples which illustrate different cases where $\VHB$ can be exponentially coarser than $\Maz$,
i.e., $\TraceSpace/ \Maz$ can have exponentially more classes than $\TraceSpace/\VHB$.

\begin{figure}[]
\small
\centering
\begin{subfigure}[b]{0.45\textwidth}
\begin{subfigure}[t]{0.15\textwidth}
\begin{align*}
\text{Thread}&~\Process_{1}:\\
\hline\\[-1em]
1.~& \Write(x,0)\\
2.~& \Write(x,0)\\
\dots~& \dots\\
n.~& \Write(x,0)
\end{align*}
\end{subfigure}
\quad
\begin{subfigure}[t]{0.15\textwidth}
\begin{align*}
\text{Thread}&~\Process_{2}:\\
\hline\\[-1em]
1.~& \Read(x)\\
2.~& \Read(x)\\
\dots~& \dots\\
n.~& \Read(x)
\end{align*}
\end{subfigure}
\caption{Many operations on one variable.}
\label{subfig:m_comparisona}
\end{subfigure}
\begin{subfigure}[b]{0.45\textwidth}
\begin{subfigure}[t]{0.15\textwidth}
\begin{align*}
\text{Thread}&~\Process_{1}:\\
\hline\\[-1em]
1.~~& \Write(x_1,0)\\
2.~~& \Write(x_1,0)\\
\dots~& \dots\\
2\cdot n-1.~~& \Write(x_n, 0)\\
2\cdot n.~~& \Write(x_n, 0)
\end{align*}
\end{subfigure}
\qquad
\begin{subfigure}[t]{0.15\textwidth}
\begin{align*}
\text{Thread}&~\Process_{2}:\\
\hline\\[-1em]
1.~~& \Read(x_1)\\
2.~~& \Read(x_2)\\
\dots~~& \dots\\
n.~~& \Read(x_n)
\end{align*}
\end{subfigure}
\caption{Few operations on many variables.}
\label{subfig:m_comparisonb}
\end{subfigure}
\caption{Toy programs where $\VHB$ is exponentially coarser than $\Maz$.}
\label{fig:m_comparison}
\end{figure}

\smallskip\noindent{\em Many operations on one variable.}
First, consider the program shown in \cref{subfig:m_comparisona} which consists of two threads $\Process_1$ and $\Process_2$, with $\Process_1$ being the root thread.
This program has a single global variable $x$, and the threads perform operations on $x$ repeatedly.
We assume a salient write event $\Write(x,0)$ that writes the initial value of $x$.
Consider any two traces $\Trace_1, \Trace_2$ that consist of the $i\geq 0$ first $\Write(x)$ events of $\Process_1$ and $j\geq 0$ first $\Read(x)$ events of $\Process_2$ (hence $\Events{\Trace_1}=\Events{\Trace_2}$).
Since each $\Write(x)$ writes the same value, we have $\Value_{\Trace_1}(\Read)=\Value_{\Trace_2}(\Read)$ for every read event $\Read$ in $\Process_2$.
Moreover, since the root thread $\Process_1$ has no read events, we trivially have $\SideAnnotation_{\Trace_1}=\SideAnnotation_{\Trace_2}$. 
Since all read events are on thread $\Process_2$, we have $\CHB{}{\Trace_1}{}\Project \SysReads = \CHB{}{\Trace_2}{}\Project \SysReads = \TO\Project\Reads{\Trace_1}$.
Finally, since we only have one leaf thread, $\HB{}{\Trace_1}{}\Project\SysLeafEvents = \HB{}{\Trace_2}{}\Project\SysLeafEvents = \TO\Project\LeafEvents{\Trace_1}$.
We conclude that $\Trace_1\VHBE \Trace_2$, and thus given $i\geq 0$ and $j\geq 0$ there exists a single class of $\VHBE$ that contains the first $i$ and first $j$ events of $\Process_1$ and $\Process_2$, respectively. Thus
$
|\TraceSpace/\VHB| =  O(n^2).
$
On the other hand, given the first $i\geq 0$ and $j\geq 0$ events of threads $\Process_1$ and $\Process_2$, respectively, 
there exist $\frac{(i+j)!}{i!\cdot j!}=\binom{i+j}{i}$ different ways to order them without violating the thread order.
Observe that every such reordering induces a different happens-before relation.
Using Stirling's approximation, we obtain
\[
|\TraceSpace/\Maz| \geq  \frac{(2\cdot n)!}{(n!)^2} \simeq \frac{\sqrt{2\cdot \pi \cdot 2\cdot n}\cdot \left( 2\cdot n/e \right)^{2\cdot n} }{ \left(\sqrt{2\cdot \pi \cdot n} \cdot (n/e)^{n} \right)^2} = \Omega\left(\frac{4^{ n}}{\sqrt{n}}\right)
\]

\smallskip\noindent{\em Few operations on many variables.}
Now consider the example program shown in \cref{subfig:m_comparisonb} which consists of two threads $\Process_1$ and $\Process_2$, with $\Process_1$ being the root thread.
We assume a salient write event $\Write(x_i,0)$ that writes the initial value of $x_i$.
Consider any two traces $\Trace_1, \Trace_2$ that consist of the $i\geq 0$ first $\Write(x)$ events of $\Process_1$ and $j\geq 0$ first $\Read(x)$ events of $\Process_2$ (hence $\Events{\Trace_1}=\Events{\Trace_2}$).
Since each $\Write(x_i,0)$ writes the same value, we have $\Value_{\Trace_1}(\Read)=\Value_{\Trace_2}(\Read)$ for every read event $\Read$ in $\Process_2$.
Moreover, since the root thread $\Process_1$ has no read events, we trivially have $\SideAnnotation_{\Trace_1}=\SideAnnotation_{\Trace_2}$. 
Since all read events are on thread $\Process_2$, we have $\CHB{}{\Trace_1}{}\Project \SysReads = \CHB{}{\Trace_2}{}\Project \SysReads = \TO\Project\Reads{\Trace_1}$.
Finally, since we only have one leaf thread, $\HB{}{\Trace_1}{}\Project\SysLeafEvents = \HB{}{\Trace_2}{}\Project\SysLeafEvents = \TO\Project\LeafEvents{\Trace_1}$.
We conclude that $\Trace_1\VHBE \Trace_2$, and thus given $i\geq 0$ and $j\geq 0$ there exists a single class of $\VHBE$ that contains the first $i$ and first $j$ events of $\Process_1$ and $\Process_2$, respectively. Thus
$
|\TraceSpace/\VHB| = O(n^2).
$
On the other hand, given the first $i$ read events of $\Process_2$ and $2\cdot i$ write events of $\Process_1$,
there exist at least $2^i$ different observation functions that map each read event $\Read$ to one of the two write events that $\Read$ observes.
Hence $|\TraceSpace/\Maz| = \Omega(2^n)$.

\smallskip
\begin{restatable}{theorem}{themcomparison}\label{them:comparison}
$\VHB$ is sound for local-state reachability. 
Also, $\VHB$ is at least as coarse as $\Maz$, and there exist programs where $\VHB$ is exponentially coarser.
\end{restatable}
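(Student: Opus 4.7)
The theorem combines three claims: soundness, coarseness (every Mazurkiewicz class refines a $\VHB$ class), and exponential separation. I would address them in that order, since the first is essentially an unpacking of \cref{rem:soundness} and the third is already witnessed by the two constructions preceding the statement, leaving the coarseness direction as the main technical content.

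For soundness, the plan is to fix an arbitrary class $\Class{\Trace}{\VHB}$ and argue that every $\Trace'\in\Class{\Trace}{\VHB}$ drives each thread through the same sequence of local states as $\Trace$. By the first clause of the $\VHB$ definition, $\Events{\Trace}=\Events{\Trace'}$ and $\Value_{\Trace}=\Value_{\Trace'}$. Since each thread is deterministic, its local state after any prefix is determined by the sequence of its own events together with the values read, both of which are pinned down by $\Events{}$ and $\Value_{}$ restricted to that thread. Hence exploring one representative per class already witnesses every reachable local state, which is precisely the local-state reachability property.

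The heart of the proof is the coarseness step: if $\Trace_1\MazE\Trace_2$, I claim all three conditions of $\VHBE$ hold. The first $\VHB$ condition (same events) is immediate. For the value and side functions, I would show first that $\MazE$ preserves the observation function. Fix a read $\Read$ and let $\Write=\Observation_{\Trace_1}(\Read)$. Then $\Confl{\Read}{\Write}$ and $\HB{\Write}{\Trace_1}{\Read}$; moreover any other write $\Write'$ with $\Confl{\Read}{\Write'}$ satisfies either $\HB{\Write'}{\Trace_1}{\Write}$ or $\HB{\Read}{\Trace_1}{\Write'}$, because conflicting events are always $\HB{}{}{}$-ordered. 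Since $\HB{}{\Trace_1}{}=\HB{}{\Trace_2}{}$, the same characterization forces $\Observation_{\Trace_2}(\Read)=\Write$. Equality of observations, combined with determinism of threads, yields $\Value_{\Trace_1}=\Value_{\Trace_2}$ by an induction along $\TO$, and $\SideAnnotation_{\Trace_1}=\SideAnnotation_{\Trace_2}$ follows directly from the definition of the side function. For the second $\VHB$ condition, $\CHB{}{\Trace}{}$ is by definition the transitive closure of $\TO$ with the arrows $\Observation_{\Trace}(\Read)\to\Read$; since both ingredients coincide on $\Trace_1,\Trace_2$, so does $\CHB{}{}{}$, and in particular its restriction to $\SysReads$. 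The third $\VHB$ condition is trivial from $\HB{}{\Trace_1}{}=\HB{}{\Trace_2}{}$ by projecting to $\SysLeafEvents$. The main obstacle here is being careful with the observation-equality argument, so I would write it out as a lemma before invoking it.

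For exponential coarseness, I would appeal to either of the two toy programs in \cref{fig:m_comparison}. The analysis already carried out in the text shows that for both families $|\TraceSpace/\VHB|=O(n^2)$ while $|\TraceSpace/\Maz|=\Omega(4^n/\sqrt{n})$ (respectively $\Omega(2^n)$), establishing the separation. The combination of soundness, the coarseness inclusion $\MazE\subseteq\VHBE$, and this exponential gap yields the statement.
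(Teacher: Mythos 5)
Your proposal is correct and follows essentially the same route as the paper: soundness via thread determinism (the paper's \cref{rem:soundness}), coarseness by showing that equality of happens-before forces each of the three $\VHB$ conditions, and the exponential gap via the two toy programs of \cref{fig:m_comparison}. The only cosmetic difference is that the paper argues the coarseness step in contrapositive form, packaging the ``equal happens-before implies equal observation, value, and side functions'' fact into an unproved remark, whereas you prove the direct implication and spell that fact out (your induction for $\Value_{\Trace_1}=\Value_{\Trace_2}$ should run along the trace order or $\CHB{}{}{}$ rather than $\TO$ alone, since a read's value depends on a write in another thread, but this is a minor presentational point).
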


We also refer to \cref{sec:comparison} for a comparison of $\VHB$ and our algorithm $\VCDPOR$ which explores the $\VHB$ partitioning
with other related works, in particular with the works of~\cite{Abdulla14}, \cite{HUANG15}, \cite{Chalupa17}, \cite{Elvira17} and \cite{Aronis18}.

\section{Closed Annotated Partial Orders}\label{sec:closure}

In this section we develop the core algorithmic concepts that will be used in the enumerative exploration of the $\VHB$.
We introduce \emph{annotated partial orders}, which are traditional partial orders over events, with additional constraints.
We formulate the question of the realizability of an annotated partial order $\AnnotatedPO$, which asks for a witness trace $\Trace$ that linearizes $\AnnotatedPO$ and satisfies the constraints.
We develop the notion of \emph{closure} of annotated partial orders, and show that
(i)~an annotated partial order is realizable if and only if its closure exists, and
(ii)~deciding whether the closure exists can be done efficiently.
This leads to an efficient procedure for deciding realizability.

\subsection{Annotated Partial Orders}\label{subsec:apo}

Here we introduce the notion of annotated partial orders, which is a central concept of our work.
We build some definitions and notation, and provide some intuition around them.

\smallskip\noindent{\bf Annotated Partial Orders.}
An \emph{annotated partial order} is a tuple $\AnnotatedPO=(X_1, X_2, P, \Value, \SideAnnotation, \GoodWrites)$ where the following hold.
\begin{compactenum}
\item $X_1, X_2$ are sets of events such that $X_1\cap X_2=\emptyset$.
\item $P$ is a partial order over the set $X=X_1 \cup X_2$.
\item $\Value\colon X\to \ValueDomain$ is a value function.
\item $\SideAnnotation\colon \Reads{X_1}\to \SideDomain$ is a side function.
\item $\GoodWrites\colon \Reads{X} \to  2^{\Writes{X}}$ is a good-writes function such that
$\Write\in \GoodWrites(\Read)$ only if $\Confl{\Read}{\Write} \quad \text{and}\quad  \Value(\Read)=\Value(\Write)$ and, if $\Read\in X_1$ then $\Write\in X_{\SideAnnotation(\Read)}$.
\item $\Width{P\Project X_1}= \MWidth{P\Project X_2}=1$.
\end{compactenum}
We let the bad-writes function be $\BadWrites(\Read) = \{\Write\in \Writes{X}\setminus\GoodWrites(\Read):~\Confl{\Read}{\Write} \}$.
We call $\AnnotatedPO$ \emph{consistent} if for every thread $\Process$, we have that $\SeqTrace_{\Process}=\TO\Project (X\cap \SysEvents_{\Process})$ is a
local trace of thread $\Process$ that occurs if every event $\Event$ of $\SeqTrace_{\Process}$ reads/writes the value $\Value(\Event)$.
Hereinafter we only consider consistent annotated partial orders.

\smallskip\noindent{\bf The realizability problem for annotated partial orders.}
Consider an annotated partial order $\AnnotatedPO=(X_1, X_2, P, \Value, \SideAnnotation, \GoodWrites)$.
A trace $\Trace$ is a \emph{linearization} of $\AnnotatedPO$ if 
(i)~$\Trace\Refines P$ and
(ii)~for every read event $\Read\in \Reads{X_1\cup X_2}$ we have that $\Observation_{\Trace}(\Read) \in \GoodWrites(\Read)$.
In words, $\Trace$  must be a linearization of the partial order $P$ with the additional constraint that 
the observation function of $\Trace$ must agree with the good-writes function $\GoodWrites$ of $\AnnotatedPO$. 
We call $\AnnotatedPO$ \emph{realizable} if it has a linearization.
The associated realizability problem takes as input an annotated partial order $\AnnotatedPO$ and asks whether $\AnnotatedPO$ is realizable.

\smallskip
\begin{remark}[Realizability to valid traces.]\label{rem:realizable_valid}
If $\Trace$ is a linearization of some consistent annotated partial order $\AnnotatedPO$ then $\Trace$ is a valid (i.e., actual) trace of $\System$.
This holds because of the following observations.
\begin{compactenum}
\item Since $\Trace$ is a linearization of $\AnnotatedPO$, we have $\Observation_{\Trace}(\Read)\in \GoodWrites(\Read)$ for every read event $\Read\in \Reads{\Trace}$.
\item Due to the previous item and the consistency of $\AnnotatedPO$, for every thread $\Process$ we have that $\SeqTrace_{\Process}=\TO\Project (X\cap \SysEvents_{\Process})$ is a valid local trace of $\Process$.
\end{compactenum}
\end{remark}

\smallskip\noindent{\em Intuition.}
An annotated partial order $\AnnotatedPO$ contains a partial order $P$ over a set $X=X_1\cup X_2$ of events and the value of each event of $X$.
Intuitively, the consistency of $\AnnotatedPO$ states that we obtain the set of events $X$ if we execute each thread and force every read event in this execution to observe the value of a write event according to the good-writes function. 
In the next section, our $\VCDPOR$ algorithm uses annotated partial orders to represent different classes of the $\VHB$ equivalence in order to guide the trace-space exploration.
The set $X_1$ (resp., $X_2$) will contain the events of the root thread (resp., leaf threads).
We will see that if $\VCDPOR$ constructs two annotated partial orders $\AnnotatedPO$ and $\AnnotatedPOQ$ during the exploration, then any two linearizations $\Trace_1$ and $\Trace_2$ of $\AnnotatedPO$ and $\AnnotatedPOQ$, respectively, will satisfy that $\Trace_1\not \VHBE\Trace_2$,
and hence $\AnnotatedPO$ and $\AnnotatedPOQ$ represent different classes of the $\VHB$ partitioning.

\smallskip\noindent{\bf Closed annotated partial orders.}
Consider an annotated partial order $\AnnotatedPO=(X_1, X_2, P, \Value, \SideAnnotation, \GoodWrites)$ and let $X=X_1 \cup X_2$.
We say that $\AnnotatedPO$ is \emph{closed} if the following conditions hold for every read event $\Read\in \Reads{X}$.
\begin{compactenum}
\item\label{item:closure1} There exists a write event $\Write\in \GoodWrites(\Read)\cap \HeadWrites_{P}(\Read)$ such that $\Write<_{P} \Read$.
\item\label{item:closure2} $\TailWrites_{P}(\Read)\cap \GoodWrites(\Read)\neq \emptyset$.
\item\label{item:closure3} For every write event $\Write'\in \BadWrites(\Read)\cap \HeadWrites_{P}(\Read)$ such that $\Write'<_{P}\Read$ there exists a write event $\Write\in \GoodWrites(\Read)\cap \VisibleWrites_{P}(\Read)$ such that $\Write'<_{P} \Write$.
\end{compactenum}
Our motivation behind this definition becomes clear from the following lemma,
which states that closed annotated partial orders are realizable.

\smallskip
\begin{restatable}{lemma}{lemclosedlinearizable}\label{lem:closed_linearizable}
If $\AnnotatedPO$ is closed then it is realizable and a witness can be constructed in $O(\Poly(n))$ time.
\end{restatable}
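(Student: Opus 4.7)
My plan is to prove the lemma by exhibiting an explicit polynomial-time algorithm that constructs a linearization $\Trace$ of $\AnnotatedPO$. The algorithm builds $\Trace$ as a growing prefix, maintaining a set $S$ of already scheduled events, and at each iteration appends one enabled event (an event in $X \setminus S$ whose $P$-predecessors all lie in $S$). The scheduling priority is: first, schedule any enabled non-read event; otherwise, schedule an enabled read $\Read$ only if the most recent conflicting write currently in $\Trace$ lies in $\GoodWrites(\Read)$; failing both, schedule a good visible write of some pending read that can be used to mask its current bad observation. I will argue that the algorithm terminates with a complete linearization of $X_1 \cup X_2$ that respects the good-writes function, and then invoke \cref{rem:realizable_valid} to conclude that $\Trace$ is a valid trace of $\System$.

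The correctness argument hinges on showing that the three closure conditions rule out every way in which the greedy algorithm could get stuck. Condition~\ref{item:closure1} ensures that for every read $\Read$ there is some $\Write \in \GoodWrites(\Read) \cap \HeadWrites_{P}(\Read)$ with $\Write <_{P} \Read$, so whenever $\Read$ becomes enabled a good candidate observation has already been placed in $\Trace$. Condition~\ref{item:closure3} ensures that whenever a bad write $\Write' \in \BadWrites(\Read) \cap \HeadWrites_{P}(\Read)$ is forced to appear before $\Read$, some good visible write $\Write$ strictly dominates $\Write'$ in $P$, so the algorithm can schedule $\Write$ to overwrite the bad observation. Condition~\ref{item:closure2} provides a good maximal visible write $\Write^{*} \in \TailWrites_{P}(\Read) \cap \GoodWrites(\Read)$, which can serve as the final observation immediately before $\Read$ because no visible write strictly dominates $\Write^{*}$ in $P$, so $\Write^{*}$ can always be inserted last among the writes conflicting with $\Read$ without violating $P$.

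The main obstacle will be arguing that the greedy algorithm never stalls in the worst case, namely when the only enabled events are reads whose current observations are bad. I plan to handle this by maintaining the invariant that the current prefix $\Trace$ can be extended to a full linearization, and by analyzing the stalled configuration via a cascading argument: starting from the bad minimal visible write currently serving as $\Read$'s observation, condition~\ref{item:closure3} produces a good visible write above it, and if that good write is not yet enabled then its $P$-predecessors form a strictly smaller instance to which the same reasoning applies, eventually bottoming out at either a schedulable good write or at the maximal good visible write guaranteed by condition~\ref{item:closure2}. Because every such write lies in $\VisibleWrites_{P}(\Read)$ it is not $P$-after $\Read$, so scheduling it does not violate the partial order. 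For the polynomial time bound, each iteration computes enabled events, visible/minimal/maximal writes, and the most recent conflicting write in $\Trace$, all of which are standard $\Poly(n)$ operations on $P$, and the algorithm terminates in $O(|X_1 \cup X_2|)$ iterations, giving the claimed $\Poly(n)$ total runtime.
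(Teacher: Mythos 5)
Your construction takes a genuinely different route from the paper's: the paper builds the witness in one shot, by strengthening $P$ to a partial order $Q$ that places every event of $X_1$ before every event of $X_2$ unless $P$ forces the opposite (well defined because $\Width{P\Project X_1}=1$), and then linearizing $Q$ arbitrarily; correctness is a short case analysis on whether $\Read\in X_1$ or $\Read\in X_2$. Your iterative greedy scheduler is not merely a different presentation of that idea---it is missing exactly the ingredient that makes the paper's proof work, and as stated it fails. The problem is your first priority, ``schedule any enabled non-read event.'' Take a read $\Read\in X_1$ with $\SideAnnotation(\Read)=1$: all of its good writes lie in $X_1$, hence (since $X_1$ is a $P$-chain and good writes must be visible) they are all $<_P\Read$ and are already in the prefix by the time $\Read$ becomes enabled. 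If at that moment some conflicting bad write $\Write'\in X_2$ that is $P$-unordered with $\Read$ is also enabled, your rule emits $\Write'$ first, it becomes $\Read$'s most recent conflicting write, and there is no good write left for your priority~3 to ``mask'' it with. None of the closure conditions excludes this configuration: the third condition only constrains bad writes in $\HeadWrites_P(\Read)$ that are $<_P\Read$, not bad writes unordered with $\Read$. The paper's own running example realizes this failure: the second $\ov{\Write}_z$ of the leaf thread is unordered with $\Read_z$ even after closure, so a legal run of your scheduler emits $\Write_z$, then that $\ov{\Write}_z$, and then gets permanently stuck at $\Read_z$.

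A second, independent gap is the ``cascading'' argument for the stalled configuration: you assert that following the chain of needed good writes ``bottoms out,'' but you never rule out a cyclic dependency among several pending reads, each waiting for a good visible write whose $P$-predecessors include another pending read. Excluding such cycles requires a global argument (exploiting $\Width{P\Project X_1}=\MWidth{P\Project X_2}=1$ and the definition of visibility), and that argument is precisely what the paper's global ``$X_1$ before $X_2$'' ordering replaces. To repair your proof you would have to either adopt that global ordering up front---at which point you have reproduced the paper's construction---or substantially refine the priority scheme so that a write is never emitted while it would become the irrevocable bad observation of some pending read, and then still supply the missing acyclicity argument.
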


In particular, the witness trace of $\AnnotatedPO$ is constructed by the following process.
\begin{compactenum}
\item Create a partial order $Q$ as follows. 
\begin{compactenum}
\item For every pair of events $\Event_1, \Event_2$ with $\Event_1<_{P}\Event_2$, we have $\Event_1<_{Q}\Event_2$.
\item For every pair of events $\Event_1, \Event_2$ with $\Event_i\in X_i$ for each $i\in [2]$, if $\Event_2\not <_{P} \Event_1$ then $\Event_1<_{Q}\Event_2$.
\end{compactenum}
\item Create $\Trace$ by linearizing $Q$ arbitrarily.
\end{compactenum}
The above construction is guaranteed to produce a valid witness trace for $\AnnotatedPO$.
The consistency of annotated partial orders guarantees that $\Trace$ is a valid trace of the concurrent program $\System$ (see \cref{rem:realizable_valid}).
We provide an illustration of this construction later in \cref{fig:closure_example}.

We now introduce the notion of \emph{closures}.
Intuitively, the closure of an annotated partial order $\AnnotatedPO$ strengthens $\AnnotatedPO$ by introducing the smallest set of event orderings such that the resulting annotated partial order $\AnnotatedPOQ$ is closed.
The intuition behind the closure is the following:
whenever a rule forces some ordering, any trace that witnesses the realizability of $\AnnotatedPO$ also linearizes $\AnnotatedPOQ$.
In some cases this operation results to cyclic orderings, and thus the closure does not exist.
We also show that obtaining the closure or deciding that it does not exist can be done in polynomial time.
Thus, in combination with \cref{lem:closed_linearizable}, we obtain an efficient algorithm for deciding whether $\AnnotatedPO$ is realizable, by deciding whether it has a closure.

\smallskip\noindent{\bf Closure of annotated partial orders.}
Consider an annotated partial order $\AnnotatedPO=(X_1, X_2, P, \Value, \SideAnnotation, \GoodWrites)$.
We say that an annotated partial order $\AnnotatedPOQ=(X_1, X_2, Q, \Value, \SideAnnotation, \GoodWrites)$ is a \emph{closure} of $\AnnotatedPO$ if
(i)~$Q\Refines P$, 
(ii)~$\AnnotatedPOQ$ is closed, and
(iii)~for any partial order $K$ with $Q \StrictRefines K\Refines P$, we have that the annotated partial order $(X_1, X_2, K, \Value, \SideAnnotation, \GoodWrites)$ is not closed.
As the following lemma states, $\AnnotatedPO$ can have at most one closure. 

\smallskip
\begin{restatable}{lemma}{lemclosuredefined}\label{lem:closure_defined}
There exists at most one weakest partial order $Q$ such that $Q\Refines P$ and $(X_1, X_2, Q, \Value, \SideAnnotation, \GoodWrites)$ is closed.
\end{restatable}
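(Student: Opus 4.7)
The plan is to show that the family of closed refinements of $P$ is closed under intersection in the refinement lattice, forcing any two minimal (``weakest'') elements to coincide.

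Suppose, for contradiction, that $Q_1 \neq Q_2$ are both weakest partial orders refining $P$ whose induced annotated partial orders $(X_1, X_2, Q_i, \Value, \SideAnnotation, \GoodWrites)$ are closed. Since each is minimal in the refinement order, neither strictly refines the other, so each contains an ordered pair absent from the other. Define $Q$ by $\Event_1 <_Q \Event_2$ iff $\Event_1 <_{Q_1} \Event_2$ and $\Event_1 <_{Q_2} \Event_2$. First I verify that $Q$ is a valid partial order refining $P$: irreflexivity, antisymmetry, and transitivity are all preserved under this intersection, and every edge of $P$ lies in both $Q_i$ and hence in $Q$. Moreover, the width constraints $\Width{Q \Project X_1} = \MWidth{Q \Project X_2} = 1$ survive because events in $X_1$ all lie on the root thread and are totally ordered by the thread order that $P$ respects, while $\MWidth{P \Project X_2} = 1$ already totally orders conflicting events in $X_2$. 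Since each $Q_i \Refines P$ only adds edges rather than flipping them, $Q_1$ and $Q_2$ agree with $P$ on these restrictions, and so does $Q$.

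Next, since $Q_1$ and $Q_2$ are incomparable and distinct, $Q$ is strictly refined by each: $Q_1 \StrictRefines Q \Refines P$, and symmetrically for $Q_2$. Consequently, provided that $(X_1, X_2, Q, \Value, \SideAnnotation, \GoodWrites)$ is itself closed, $Q$ witnesses a closed refinement of $P$ strictly weaker than $Q_1$, contradicting the weakestness of $Q_1$; this forces $Q_1 = Q_2$.

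The main obstacle is therefore establishing closedness of $Q$. A useful preliminary is that removing edges can only enlarge the visible-writes set, so $\VisibleWrites_{Q_1}(\Read) \cup \VisibleWrites_{Q_2}(\Read) \subseteq \VisibleWrites_Q(\Read)$ for every $\Read \in \Reads{X}$. The subtlety is that $\HeadWrites_Q(\Read)$ and $\TailWrites_Q(\Read)$ are computed on this enlarged set with respect to the weaker order, so a head or tail witness certified by the closure of some $Q_i$ may acquire a new $Q$-predecessor or $Q$-successor from $\VisibleWrites_Q(\Read) \setminus \VisibleWrites_{Q_i}(\Read)$. I plan to handle this by a careful case analysis on each of the three closure conditions for each read $\Read$: combining the witness good writes certified by the closures of $Q_1$ and $Q_2$, and, when necessary, chaining through the good-visible writes supplied by condition~3, to extract bona fide minimal, maximal, or dominating good visible writes under $Q$. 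This combinatorial reconciliation of witnesses from the two closures is the technically delicate step of the proof.
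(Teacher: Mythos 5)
Your overall strategy is exactly the paper's: assume two incomparable weakest closed refinements $Q_1,Q_2$ of $P$, form the intersection $Q=Q_1\cap Q_2$, show that $(X_1,X_2,Q,\Value,\SideAnnotation,\GoodWrites)$ is still closed, and derive a contradiction with minimality. Your setup (well-definedness of $Q$, preservation of the width constraints, $Q_i\StrictRefines Q\Refines P$) is fine.

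The genuine gap is that you stop precisely where the proof actually lives. You correctly identify that $\VisibleWrites_{Q}(\Read)$ can strictly contain $\VisibleWrites_{Q_i}(\Read)$, so a head/tail/dominating witness certified by the closure of $Q_i$ need not survive in $Q$ — and then you announce that you ``plan to handle this by a careful case analysis'' without exhibiting it. That case analysis is not routine bookkeeping; it succeeds only because of a specific structural fact that your write-up never invokes at the point where it is needed: for a fixed read $\Read$, the candidate witnesses supplied by the two closures on the side $X_{3-\SideIndicator_{\AnnotatedPO}(\Read)}$ are pairwise conflicting, and $\MWidth{P\Project X_{3-\SideIndicator_{\AnnotatedPO}(\Read)}}=1$ forces them to be comparable already in $P$, hence in $Q$. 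This is what lets one take, say, witnesses $\Write_1\in\GoodWrites(\Read)\cap\HeadWrites_{Q_1}(\Read)$ and $\Write_2\in\GoodWrites(\Read)\cap\HeadWrites_{Q_2}(\Read)$, conclude $\Write_i<_{P}\Write_{3-i}$ for some $i$, and then argue that the $P$-smaller (resp.\ $P$-larger, for \cref{item:closure2}) of the two remains visible and minimal (resp.\ maximal) under $Q$; a similar comparability argument, split on whether the offending bad write is local or remote to $\Read$, handles \cref{item:closure3}. You mention the width conditions only to justify that $Q$ is a legal annotated partial order, not to reconcile the witnesses, so as written the ``technically delicate step'' you defer is exactly the content of the lemma and remains unproved.
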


\smallskip\noindent{\bf Feasible annotated partial orders.}
In light of \cref{lem:closure_defined}, we define the \emph{closure} of $\AnnotatedPO$ as the unique annotated partial order $\AnnotatedPOQ$ that is a closure of $\AnnotatedPO$, if such $\AnnotatedPOQ$ exists, and $\bot$ otherwise.
We call $\AnnotatedPO$ \emph{feasible} if its closure is not $\bot$.
We have the following lemma.

\smallskip
\begin{restatable}{lemma}{lemrealizableifffeasible}\label{lem:realizable_iff_feasible}
$\AnnotatedPO$ is realizable if and only if it is feasible.
\end{restatable}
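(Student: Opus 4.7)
The equivalence splits into two directions, one immediate from \cref{lem:closed_linearizable} and one requiring a short constructive argument.

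For the ``if'' direction ($\Leftarrow$), I will observe that if $\AnnotatedPO$ is feasible then its closure $\AnnotatedPOQ=(X_1,X_2,Q,\Value,\SideAnnotation,\GoodWrites)$ is, by definition, closed. Applying \cref{lem:closed_linearizable} produces a linearization $\Trace$ of $\AnnotatedPOQ$. Since $Q\Refines P$, $\Trace$ also linearizes $P$, and because $\AnnotatedPO$ and $\AnnotatedPOQ$ share the same good-writes function, the constraint $\Observation_{\Trace}(\Read)\in\GoodWrites(\Read)$ is automatically satisfied. Hence $\Trace$ is a linearization of $\AnnotatedPO$, so $\AnnotatedPO$ is realizable.

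For the ``only if'' direction ($\Rightarrow$), I will first produce an explicit closed refinement of $P$ out of any witness trace. Given a linearization $\Trace$ of $\AnnotatedPO$, let $Q_{\Trace}$ be the total order on $X_1\cup X_2$ induced by $\Trace$; plainly $Q_{\Trace}\Refines P$. The key observation is that, because $Q_{\Trace}$ is total, a direct inspection of the definitions yields $\VisibleWrites_{Q_{\Trace}}(\Read)=\HeadWrites_{Q_{\Trace}}(\Read)=\TailWrites_{Q_{\Trace}}(\Read)=\{\Observation_{\Trace}(\Read)\}$ for every $\Read\in\Reads{X_1\cup X_2}$, and this unique visible write lies in $\GoodWrites(\Read)$ since $\Trace$ is a witness. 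Conditions (\ref{item:closure1}) and (\ref{item:closure2}) of closedness then follow immediately, while (\ref{item:closure3}) is vacuous since $\BadWrites(\Read)\cap\HeadWrites_{Q_{\Trace}}(\Read)=\emptyset$. Hence $(X_1,X_2,Q_{\Trace},\Value,\SideAnnotation,\GoodWrites)$ is closed, so the family of partial orders $K\Refines P$ whose induced annotated partial order is closed is non-empty.

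The last step is to conclude feasibility from non-emptiness of this family together with \cref{lem:closure_defined}. The main obstacle here is that closedness is not preserved under intersection of partial orders (the head, tail, and visible-write sets all depend on which orderings are present), so one cannot simply intersect all closed refinements of $P$ to obtain a weakest element directly. My plan is instead to use the constructive closure view developed in the rest of the paper: starting from $P$, iteratively add only those orderings that are \emph{forced} by the three closedness conditions. Every forced ordering is necessarily present in every closed refinement of $P$, and in particular in $Q_{\Trace}$, so no cycle can ever arise during this process. The process therefore terminates at a closed partial order $Q^{\ast}$ that is refined by every closed refinement of $P$, and \cref{lem:closure_defined} then identifies $Q^{\ast}$ as the (unique) closure of $\AnnotatedPO$, establishing feasibility.
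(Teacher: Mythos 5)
Your proof is correct and follows essentially the same route as the paper: the ``feasible $\Rightarrow$ realizable'' direction via \cref{lem:closed_linearizable} applied to the closure, and the converse by observing that the witness trace, viewed as a total order refining $P$, yields a closed annotated partial order. Your additional step justifying why the existence of \emph{some} closed refinement implies the existence of the \emph{weakest} one (via the forced-orderings argument, all of which must appear in the acyclic $Q_{\Trace}$) is a sound elaboration of a point the paper's proof leaves implicit in its final sentence ``Hence $P$ is feasible.''
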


Intuitively, \cref{lem:realizable_iff_feasible} states that the closure rules give the weakest strengthening of $\AnnotatedPO$ that is met by any linearization of $\AnnotatedPO$.
If that strengthening can be made (i.e., $\AnnotatedPO$ is feasible), then $\AnnotatedPO$ has a linearization.
Hence, to decide whether $\AnnotatedPO$ is realizable, it suffices to decide whether it is feasible, by computing its closure.
In the next section we show that this computation can be done efficiently.

\subsection{Computing the Closure}\label{subsec:closure}

We now turn our attention to computing the closure of annotated partial orders, which will provide us with a way of solving the realizability problem.

\smallskip\noindent{\bf Algorithm $\ClosureAlgo$.}
Consider an annotated partial order $\AnnotatedPO=(X_1, X_2, P, \Value, \SideAnnotation, \GoodWrites)$ and let $X=X_1\cup X_2$.
The algorithm $\ClosureAlgo$ either computes the closure of $\AnnotatedPO$, or concludes that $\AnnotatedPO$ is not feasible, and returns $\bot$.
Intuitively, the algorithm maintains a partial order $Q$, initially identical to $P$.
The algorithm iterates over every read event $\Read$ and tests whether $\Read$ violates \cref{item:closure1}, \cref{item:closure2} or \cref{item:closure3} of the definition of closed annotated partial orders.
When it discovers that $\Read$ violates one such closure rule, $\ClosureAlgo$ calls one of the \emph{closure methods} $\RuleOneAlgo(\Read)$, $\RuleTwoAlgo(\Read)$, $\RuleThreeAlgo(\Read)$, for violation of \cref{item:closure1}, \cref{item:closure2} and \cref{item:closure3} of the definition, respectively.
In turn, each of these methods inserts a new ordering $\Event_1\to \Event_2$ in $Q$, with the guarantee that if $\AnnotatedPO$ has a closure $\AnnotatedPOF=(X_1, X_2, F, \Value, \SideAnnotation, \GoodWrites)$, then $\Event_1<_{F}\Event_2$.
Hence, $\Event_1\to \Event_2$ is a \emph{necessary ordering} in the closure of $\AnnotatedPO$.
Finally, when the algorithm discovers that all closure rules are satisfied by every read event in $Q$, it returns the annotated partial order $(X_1, X_2, Q, \Value, \SideAnnotation, \GoodWrites)$, which, due to \cref{lem:closure_defined}, is guaranteed to be the closure of $\AnnotatedPO$.
We refer to \cref{algo:closure} for a formal description.

We now provide some intuition behind each of the closure methods.
Given an event $\Event\in X$, we let $\SideIndicator_{\AnnotatedPO}(\Event)=i$ such that $\Event\in X_i$.
Given two events $\Event_1, \Event_2\in X$, we say that $\Event_2$ is \emph{local} to $\Event_1$ if $\SideIndicator_{\AnnotatedPO}(\Event_1)=\SideIndicator_{\AnnotatedPO}(\Event_2)$, i.e., $\Event_1$ and $\Event_2$ belong to the same set $X_i$.
If $\Event_2$ is not local to $\Event_1$, then it is \emph{remote} to $\Event_1$.
\begin{compactenum}
\item $\RuleOneAlgo(\Read)$. 
This rule is called when \cref{item:closure1} of closure is violated, i.e., there exists no write event $\Write\in \GoodWrites(\Read)\cap\HeadWrites_{Q}(\Read)$ such that $\Write<_{Q} \Read$.
Observe that in this case there is no write event that is (i)~local to $\Read$, (ii)~good for $\Read$ and (iii)~visible to $\Read$.
To make $\Read$ respect this rule, the algorithm finds the first write event $\Write$ that is (i)~good for $\Read$ and (ii)~visible to $\Read$, and orders $\Write\to \Read$ in $Q$.
See \cref{subfig:algorule1} provides an illustration.
\item $\RuleTwoAlgo(\Read)$.
This rule is violated when $\TailWrites_{Q}(\Read)\cap \GoodWrites(\Read)=\emptyset$, i.e., every maximal write event is bad for $\Read$.
To make $\Read$ respect this rule, the algorithm finds the unique maximal write event $\Write$ that is remote to $\Read$ and orders $\Read\to \Write$ in $Q$.
$\RuleTwoAlgo(\Read)$ is called only if $\Read$ does not violate \cref{item:closure1} of closure, which guarantees that $\Write$ exists.
\cref{subfig:algorule2} provides an illustration.
\item $\RuleThreeAlgo(\Read)$.
This rule is violated when there exists a write event $\ov{\Write}\in \BadWrites(\Read)\cap \HeadWrites_{Q}(\Read)$ such that
(i)~$\ov{\Write}<_{Q} \Read$, and 
(ii)~there exists no write event $\Write'\in \GoodWrites(\Read)\cap \VisibleWrites_{Q}(\Read)$ such that $\ov{\Write}<_{Q}\Write'$.
To make $\Read$ respect this rule, the algorithm determines a maximal write event $\Write$ that is (i)~remote to $\ov{\Write}$ and (ii)~a good write for $\Read$, and orders $\ov{\Write}\to \Write$ in $Q$.
$\RuleThreeAlgo(\Read)$ is called only if $\Read$ does not violate either \cref{item:closure1} or \cref{item:closure2} of closure, which guarantees that $\Write$ exists.
\cref{subfig:algorule3} provides an illustration, depending on whether $\ov{\Write}$ is local or remote to $\Read$.
\end{compactenum}

\begin{figure}[!h]
\begin{subfigure}[t]{0.2\textwidth}
\small
\centering
\begin{tikzpicture}[thick,
pre/.style={<-,shorten >= 1pt, shorten <=1pt, thick},
post/.style={->,shorten >= 2pt, shorten <=2pt,  very thick},
seqtrace/.style={->, line width=2},
und/.style={very thick, draw=gray},
event/.style={rectangle, minimum height=0.8mm, minimum width=3mm, fill=black!100,  line width=1pt, inner sep=0},
virt/.style={circle,draw=black!50,fill=black!20, opacity=0}]

\newcommand{\xdisposition}{0}
\newcommand{\ydisposition}{0}
\newcommand{\xstep}{1.3}
\newcommand{\ystep}{0.6}
\newcommand{\ybias}{0.4}
\newcommand{\xbias}{0.4}

\node	[]		(t1a)	at	(\xdisposition + 0*\xstep, \ydisposition + 0*\ystep)	{\normalsize$P\Project X_i$};
\node	[]		(t1b)	at	(\xdisposition + 0*\xstep, \ydisposition + -3*\ystep)	{};
\node	[]		(t2a)	at	(\xdisposition + 1*\xstep, \ydisposition + 0*\ystep)	{\normalsize$P\Project X_{3-i}$};
\node	[]		(t2b)	at	(\xdisposition + 1*\xstep, \ydisposition + -3*\ystep)	{};

\node[event] (w1) at (\xdisposition + 0*\xstep, \ydisposition + -1*\ystep) {};
\node[] (wt1) at (\xdisposition + 0*\xstep - \xbias, \ydisposition + -1*\ystep) {$\Write$};
\node[event] (w2) at (\xdisposition + 0*\xstep, \ydisposition + -2*\ystep) {};
\node[] (wt2) at (\xdisposition + 0*\xstep - \xbias, \ydisposition + -2*\ystep) {$\Write$};

%\node[event] (wp) at (\xdisposition + 1*\xstep, \ydisposition + -1*\ystep) {};
%\node[] (wpt) at (\xdisposition + 1*\xstep + \xbias, \ydisposition + -1*\ystep) {$\ov{\Write}$};
\node[event] (r) at (\xdisposition + 1*\xstep, \ydisposition + -2*\ystep) {};
\node[] (rt) at (\xdisposition + 1*\xstep + \xbias, \ydisposition + -2*\ystep) {$\Read$};

\draw[seqtrace] (t1a) to (t1b);
\draw[seqtrace] (t2a) to (t2b);

\draw[post, \darkred, dashed] (w1) to (r);

\end{tikzpicture}
\caption{$\RuleOneAlgo(\Read)$}
\label{subfig:algorule1}
\end{subfigure}
\qquad
\begin{subfigure}[t]{0.2\textwidth}
\small
\centering
\begin{tikzpicture}[thick,
pre/.style={<-,shorten >= 1pt, shorten <=1pt, thick},
post/.style={->,shorten >= 2pt, shorten <=2pt,  very thick},
seqtrace/.style={->, line width=2},
und/.style={very thick, draw=gray},
event/.style={rectangle, minimum height=0.8mm, minimum width=3mm, fill=black!100,  line width=1pt, inner sep=0},
virt/.style={circle,draw=black!50,fill=black!20, opacity=0}]

\newcommand{\xdisposition}{0}
\newcommand{\ydisposition}{0}
\newcommand{\xstep}{1.3}
\newcommand{\ystep}{0.6}
\newcommand{\ybias}{0.4}
\newcommand{\xbias}{0.4}

\node	[]		(t1a)	at	(\xdisposition + 0*\xstep, \ydisposition + 0*\ystep)	{\normalsize$P\Project X_i$};
\node	[]		(t1b)	at	(\xdisposition + 0*\xstep, \ydisposition + -3*\ystep)	{};
\node	[]		(t2a)	at	(\xdisposition + 1*\xstep, \ydisposition + 0*\ystep)	{\normalsize$P\Project X_{3-i}$};
\node	[]		(t2b)	at	(\xdisposition + 1*\xstep, \ydisposition + -3*\ystep)	{};

%\node[event] (w1) at (\xdisposition + 0*\xstep, \ydisposition + -1*\ystep) {};
%\node[] (wt1) at (\xdisposition + 0*\xstep - \xbias, \ydisposition + -1*\ystep) {$\Write$};
\node[event] (w2) at (\xdisposition + 0*\xstep, \ydisposition + -2*\ystep) {};
\node[] (wt2) at (\xdisposition + 0*\xstep - \xbias, \ydisposition + -2*\ystep) {$\ov{\Write}$};

%\node[event] (wp) at (\xdisposition + 1*\xstep, \ydisposition + -1*\ystep) {};
%\node[] (wpt) at (\xdisposition + 1*\xstep + \xbias, \ydisposition + -1*\ystep) {$\ov{\Write}$};
\node[event] (r) at (\xdisposition + 1*\xstep, \ydisposition + -2*\ystep) {};
\node[] (rt) at (\xdisposition + 1*\xstep + \xbias, \ydisposition + -2*\ystep) {$\Read$};

\draw[seqtrace] (t1a) to (t1b);
\draw[seqtrace] (t2a) to (t2b);

\draw[post, \darkred, dashed] (r) to (w2);

\end{tikzpicture}
\caption{$\RuleTwoAlgo(\Read)$}
\label{subfig:algorule2}
\end{subfigure}
\qquad
\begin{subfigure}[t]{0.45\textwidth}
\small
\centering
\begin{tikzpicture}[thick,
pre/.style={<-,shorten >= 1pt, shorten <=1pt, thick},
post/.style={->,shorten >= 2pt, shorten <=2pt,  very thick},
seqtrace/.style={->, line width=2},
und/.style={very thick, draw=gray},
event/.style={rectangle, minimum height=0.8mm, minimum width=3mm, fill=black!100,  line width=1pt, inner sep=0},
virt/.style={circle,draw=black!50,fill=black!20, opacity=0}]

\newcommand{\xdisposition}{0}
\newcommand{\ydisposition}{0}
\newcommand{\xstep}{1.3}
\newcommand{\ystep}{0.6}
\newcommand{\ybias}{0.4}
\newcommand{\xbias}{0.4}

\node	[]		(t1a)	at	(\xdisposition + 0*\xstep, \ydisposition + 0*\ystep)	{\normalsize$P\Project X_i$};
\node	[]		(t1b)	at	(\xdisposition + 0*\xstep, \ydisposition + -3*\ystep)	{};
\node	[]		(t2a)	at	(\xdisposition + 1*\xstep, \ydisposition + 0*\ystep)	{\normalsize$P\Project X_{3-i}$};
\node	[]		(t2b)	at	(\xdisposition + 1*\xstep, \ydisposition + -3*\ystep)	{};

\node[event] (w1) at (\xdisposition + 0*\xstep, \ydisposition + -1*\ystep) {};
\node[] (wt1) at (\xdisposition + 0*\xstep - \xbias, \ydisposition + -1*\ystep) {$\Write$};
\node[event] (w2) at (\xdisposition + 0*\xstep, \ydisposition + -2*\ystep) {};
\node[] (wt2) at (\xdisposition + 0*\xstep - \xbias, \ydisposition + -2*\ystep) {$\Write$};

\node[event] (wp) at (\xdisposition + 1*\xstep, \ydisposition + -1*\ystep) {};
\node[] (wpt) at (\xdisposition + 1*\xstep + \xbias, \ydisposition + -1*\ystep) {$\ov{\Write}$};
\node[event] (r) at (\xdisposition + 1*\xstep, \ydisposition + -2*\ystep) {};
\node[] (rt) at (\xdisposition + 1*\xstep + \xbias, \ydisposition + -2*\ystep) {$\Read$};

\draw[seqtrace] (t1a) to (t1b);
\draw[seqtrace] (t2a) to (t2b);

\draw[post, \darkred, dashed] (wp) to (w2);

\renewcommand{\xdisposition}{3}

\node	[]		(at1a)	at	(\xdisposition + 0*\xstep, \ydisposition + 0*\ystep)	{\normalsize$P\Project X_i$};
\node	[]		(at1b)	at	(\xdisposition + 0*\xstep, \ydisposition + -3*\ystep)	{};
\node	[]		(at2a)	at	(\xdisposition + 1*\xstep, \ydisposition + 0*\ystep)	{\normalsize$P\Project X_{3-i}$};
\node	[]		(at2b)	at	(\xdisposition + 1*\xstep, \ydisposition + -3*\ystep)	{};

\node[event] (aw1) at (\xdisposition + 0*\xstep, \ydisposition + -1*\ystep) {};
\node[] (awt1) at (\xdisposition + 0*\xstep - \xbias, \ydisposition + -1*\ystep) {$\ov{\Write}$};
%\node[event] (w2) at (\xdisposition + 0*\xstep, \ydisposition + -2*\ystep) {};
%\node[] (wt2) at (\xdisposition + 0*\xstep - \xbias, \ydisposition + -2*\ystep) {$\Write$};

\node[event] (awp) at (\xdisposition + 1*\xstep, \ydisposition + -1*\ystep) {};
\node[] (awpt) at (\xdisposition + 1*\xstep + \xbias, \ydisposition + -1*\ystep) {$\Write$};
\node[event] (ar) at (\xdisposition + 1*\xstep, \ydisposition + -2*\ystep) {};
\node[] (art) at (\xdisposition + 1*\xstep + \xbias, \ydisposition + -2*\ystep) {$\Read$};

\draw[seqtrace] (at1a) to (at1b);
\draw[seqtrace] (at2a) to (at2b);

\draw[post] (aw1) to (ar);
\draw[post, \darkred, dashed] (aw1) to (awp);

\end{tikzpicture}
\caption{$\RuleThreeAlgo(\Read)$}
\label{subfig:algorule3}
\end{subfigure}
\caption{
Illustration of the three closure operations $\RuleOneAlgo(\Read)$ (\protect\subref{subfig:algorule1}), $\RuleTwoAlgo(\Read)$ (\protect\subref{subfig:algorule2}) and $\RuleThreeAlgo(\Read)$ (\protect\subref{subfig:algorule3}).
We follow the convention that barred and unbarred write events ($\ov{\Write}$ and $\Write$) are bad writes and good writes for $\Read$, respectively.
In each case, the dashed edge shows the new order introduced by the algorithm in $Q$.
}
\label{fig:algo_closure}
\end{figure}
%\begin{minipage}[t]{0.45\textwidth}
%\smallskip
\begin{algorithm*}%[H]
\small
\SetInd{0.4em}{0.4em}
\DontPrintSemicolon
%\setstretch{1.05}
\caption{$\ClosureAlgo(\AnnotatedPO)$}\label{algo:closure}
\KwIn{An annotated partial order $\AnnotatedPO=(X_1, X_2, P, \Value, \SideAnnotation, \GoodWrites)$.}
\KwOut{The closure of $\AnnotatedPO$ if it exists, else $\bot$.}
\BlankLine
$Q\gets P$\tcp*[f]{We will strengthen $Q$ during the closure computation}\\
$\Flag\gets \True$\\
\While{$\Flag$}{
$\Flag\gets \False$\\
\ForEach(\tcp*[f]{Iterate over the reads}){$\Read\in \Reads{X_1\cup X_2}$}{
\uIf{$\Read$ violates \cref{item:closure1} of closure}{\label{line:closure_r1}
Call $\RuleOneAlgo(\Read)$\tcp*[f]{Strengthen $Q$ to remove violation}\\
$\Flag\gets \True$\tcp*[f]{Repeat as new violations might have appeared}\\
}
\uIf{$\Read$ violates \cref{item:closure2} of closure}{\label{line:closure_r2}
Call $\RuleTwoAlgo(\Read)$\tcp*[f]{Strengthen $Q$ to remove violation}\\
$\Flag\gets \True$\tcp*[f]{Repeat as new violations might have appeared}\\
}
\uIf{$\Read$ violates \cref{item:closure3} of closure}{\label{line:closure_r3}
Call $\RuleThreeAlgo(\Read)$\tcp*[f]{Strengthen $Q$ to remove violation}\\
$\Flag\gets \True$\tcp*[f]{Repeat as new violations might have appeared}\\
}
}
}
\Return{$(X_1, X_2, Q, \Value, \SideAnnotation, \GoodWrites)$} \tcp*[f]{The closure of $\AnnotatedPO$}
\end{algorithm*}
%\end{minipage}
%\quad
%\begin{minipage}[t]{0.54\textwidth}
%\def\padheight{0.5cm}
%\begin{minipage}[t]{\textwidth}
%\smallskip
\begin{algorithm}%[H]
\small
\SetInd{0.4em}{0.4em}
\DontPrintSemicolon
%\setstretch{1.05}
\caption{$\RuleOneAlgo(\Read)$}\label{algo:rule1}
\BlankLine
$Y\gets \GoodWrites(\Read)\cap \VisibleWrites_{Q}(\Read)$\\
\lIf{$Y=\emptyset$}{
\Return{$\bot$}
}
$\Write\gets \min_{Q}(Y)$\tcp*{Since Rule~1 is violated, $\min_{Q}(Y)$ is unique}\label{line:ruleonealgo_write}
Insert $\Write\to \Read$ in $Q$
\end{algorithm}
%\end{minipage}
%\vspace*{\padheight}
%\begin{minipage}{\textwidth}
%\smallskip
\begin{algorithm}%[H]
\small
\SetInd{0.4em}{0.4em}
\DontPrintSemicolon
%\setstretch{1.05}
\caption{$\RuleTwoAlgo(\Read)$}\label{algo:rule2}
\BlankLine
%\While{$\Read$ violates \cref{item:closure2} of closure}{
$\Write\gets $ the unique event in $\TailWrites_{Q}(\Read)\cap X_{3-\SideIndicator_{\AnnotatedPO}(\Read)}$\tcp*{$\Write$ exists since \cref{item:closure1} of closure holds}\label{line:ruletwoalgo_write}
Insert $\Read\to \Write$ in $Q$
%}
\end{algorithm}
%\end{minipage}
%\vspace*{\padheight}
%\begin{minipage}{\textwidth}
%\smallskip
\begin{algorithm}%[H]
\small
\SetInd{0.4em}{0.4em}
\DontPrintSemicolon
%\setstretch{1.05}
\caption{$\RuleThreeAlgo(\Read)$}\label{algo:rule3}
\BlankLine
%\tcp{Assumes that $\RuleTwoAlgo$ has been executed}
%Let $\Write\gets$ the unique event in $\HeadWrites_{Q}(\Read)$ s.t. $\Write\leq_{Q} \Read$ and $\not\exists \Write'\in X_{\SideIndicator_{\AnnotatedPO}(\Write)}\cap \GoodWrites(\Read)\cap \VisibleWrites_{Q}(\Read)$\label{line:rulethreealgo_write}\\
$\ov{\Write}\gets$ the unique event in $\HeadWrites_{Q}(\Read)\cap \BadWrites(\Read)$\tcp*{$\ov{\Write}$ exists since \cref{item:closure1,item:closure2} of closure hold}\label{line:rulethreealgo_write}
$\Write\gets $ the unique event in $\TailWrites_{Q}(\Read)\cap X_{3-\SideIndicator_{\AnnotatedPO}(\ov{\Write})}$\tcp*{$\Write$ exists since \cref{item:closure1,item:closure2} of closure hold}\label{line:rulethreealgo_writep}%\tcp*[f]{After $\RuleTwoAlgo$ $\Write'\in \GoodWrites(\Read)$}\\
Insert $\ov{\Write}\to \Write$ in $Q$
\end{algorithm}
%\end{minipage}
%\end{minipage}

We have the following lemma regarding the correctness and complexity of $\ClosureAlgo$.

\smallskip
\begin{restatable}{lemma}{closure}\label{lem:closure}
$\ClosureAlgo$ correctly computes the closure of $\AnnotatedPO$ and requires $O(\Poly(n))$ time.
\end{restatable}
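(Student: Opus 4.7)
The plan is to establish two invariants on the partial order $Q$ maintained by $\ClosureAlgo$: (\emph{I1}) every ordering ever inserted in $Q$ is \emph{necessary}, meaning that if a closure $\AnnotatedPOF=(X_1,X_2,F,\Value,\SideAnnotation,\GoodWrites)$ of $\AnnotatedPO$ exists then the same ordering appears in $F$; and (\emph{I2}) upon termination with a non-$\bot$ result, the returned tuple $(X_1,X_2,Q,\Value,\SideAnnotation,\GoodWrites)$ is closed. Once these are in place, \cref{lem:closure_defined} immediately yields that the returned object is \emph{the} closure of $\AnnotatedPO$, and also that the $\bot$ outcome is correct (any ordering prior to reaching $\bot$ is necessary, so no further weakening of $Q$ can salvage a closure).

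The core of the proof is the soundness of each $\RuleiAlgo$, which I would handle case by case. For $\RuleOneAlgo(\Read)$, the violation means that no write that is local to $\Read$, good for $\Read$, and minimal-visible to $\Read$ is currently ordered before $\Read$. I would exploit $\Width{Q\Project X_1}=\MWidth{Q\Project X_2}=1$ to argue that the writes in $Y=\GoodWrites(\Read)\cap \VisibleWrites_Q(\Read)$ contained in each of $X_1,X_2$ are totally ordered, and that combined with the violation this forces a unique $Q$-minimum $\Write$ of $Y$; any linearization $\Trace$ of $\AnnotatedPO$ satisfies $\Observation_\Trace(\Read)\in \GoodWrites(\Read)\cap \VisibleWrites_Q(\Read)$ and must pick a write no earlier than this minimum, so $\Write<_F \Read$ in any closure. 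For $\RuleTwoAlgo(\Read)$ the argument is similar: since every maximal visible write is bad, the only way for $\Read$ to read a good value is to be scheduled before the unique remote maximal write $\Write$ (uniqueness again by $\MWidth/\Width = 1$, plus the fact that Rule~1 is already satisfied), forcing $\Read<_F \Write$. For $\RuleThreeAlgo(\Read)$, if a bad, minimal-visible write $\ov{\Write}$ is ordered before $\Read$ without any good-visible write above it, then in any linearization some good write $\Write$ for $\Read$ must appear between $\ov{\Write}$ and $\Read$; by Rules~1 and~2 such a $\Write$ must be the remote maximal good-visible write, and $\ov{\Write}<_F \Write$ necessarily.

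Termination follows because each of the three closure methods strictly strengthens $Q$ by inserting an ordering between previously unordered events, and $Q$ contains at most $|X|^2$ orderings; whenever a pass of the inner \textbf{foreach} loop adds no ordering, the flag stays $\False$ and the algorithm exits. By the exit condition, no read violates any of the three rules, so \emph{I2} holds by construction. Conversely, when $\RuleOneAlgo$ returns $\bot$, the set $Y$ is empty; since further insertions into $Q$ can only shrink $\VisibleWrites_Q(\Read)$, no refinement of $Q$ can ever make $Y$ nonempty, hence $\AnnotatedPO$ has no closure and $\bot$ is the correct answer.

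For complexity, the outer \textbf{while} executes at most $O(|X|^2)$ times, since each non-terminal iteration adds at least one edge to $Q$; each inner pass iterates over $O(|\Reads{X}|)$ reads, and for each read, checking the three closure conditions and executing the relevant rule reduces to computing $\VisibleWrites_Q(\cdot)$, $\HeadWrites_Q(\cdot)$ and $\TailWrites_Q(\cdot)$, all doable in $O(\Poly(n))$ by scanning the current $Q$. Multiplying these factors gives a polynomial bound in $n$. The main obstacle is the delicate uniqueness/necessity argument in the soundness of $\RuleOneAlgo$ and $\RuleThreeAlgo$, which must be carried out against the exact hypotheses (violation of the respective closure rule together with non-violation of the previously enforced ones) and relies crucially on the $\Width/\MWidth$ hypothesis of annotated partial orders.
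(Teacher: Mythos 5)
Your proposal follows essentially the same route as the paper: the paper's proof also rests on the invariant that any closure $F$ of $\AnnotatedPO$ satisfies $F\Refines Q$ throughout (i.e., every inserted ordering is necessary), established by the same case analysis over $\RuleOneAlgo$, $\RuleTwoAlgo$, $\RuleThreeAlgo$ using the $\Width$/$\MWidth$ hypotheses, with correctness of the non-$\bot$ output concluded via \cref{lem:closure_defined}, the $\bot$ case handled by the same visible-writes-only-shrink observation, and the complexity bounded by the at most $n^2$ possible edge insertions. The approach and all key ingredients match.
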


\subsection{Realizing Annotated Partial Orders}\label{subsec:realize}

Finally, we address the question of realizability of annotated partial orders.
\cref{lem:realizable_iff_feasible} implies that in order to decide whether an annotated partial order is realizable, it suffices to compute its closure,
and \cref{lem:closure} states that the closure can be computed efficiently.
Together, these two lemmas yield a simple algorithm for solving the realizability problem.

\smallskip\noindent{\bf Algorithm $\Realize$.}
We describe a simple algorithm $\Realize$ that decides whether an annotated partial order $\AnnotatedPO$ is realizable.
The algorithms runs in two steps.
\begin{compactenum}
\item Use \cref{lem:closure} to compute the closure of $\AnnotatedPO$. If the closure is $\bot$, report that $\AnnotatedPO$ is not realizable.
Otherwise, the closure is an annotated partial order $\AnnotatedPOQ$.
\item Use \cref{lem:closed_linearizable} to obtain a witness trace $\Trace$ that linearizes $\AnnotatedPOQ$.
Report that $\AnnotatedPO$ is linearizable, and $\Trace$ is the witness trace.
\end{compactenum}

We conclude the results of this section with the following theorem.

\smallskip
\begin{restatable}{theorem}{themclosure}\label{them:closure}
Let $\AnnotatedPO$ be an annotated partial order of $n$ events.
Deciding whether $\AnnotatedPO$ is realizable requires $O(\Poly(n))$ time.
If $\AnnotatedPO$ is realizable, a witness trace can be produced in $O(\Poly(n))$ time.
\end{restatable}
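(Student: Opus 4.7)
The plan is to prove the theorem by directly analyzing the $\Realize$ algorithm described immediately before the statement, which is essentially a composition of the machinery developed in this section. All of the hard work has already been done in \cref{lem:closed_linearizable,lem:closure_defined,lem:realizable_iff_feasible,lem:closure}; the theorem is the clean packaging of these results.

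First, I would invoke \cref{lem:closure} to argue that $\ClosureAlgo(\AnnotatedPO)$ either returns $\bot$ or returns the unique closure $\AnnotatedPOQ=(X_1,X_2,Q,\Value,\SideAnnotation,\GoodWrites)$, and does so in $O(\Poly(n))$ time. Next, I would split on the output. If the output is $\bot$, then $\AnnotatedPO$ is not feasible by definition, and \cref{lem:realizable_iff_feasible} immediately gives that $\AnnotatedPO$ is not realizable, so the correct answer is reported in polynomial time.

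For the non-trivial branch, assume $\ClosureAlgo$ returns $\AnnotatedPOQ$. Here I would apply \cref{lem:closed_linearizable} to the closed annotated partial order $\AnnotatedPOQ$, which produces a witness trace $\Trace$ linearizing $\AnnotatedPOQ$ in $O(\Poly(n))$ time. The key observation to record is that $\Trace$ is also a linearization of the original $\AnnotatedPO$: since $Q\Refines P$ we have $\Trace\Refines Q\Refines P$, and both annotated partial orders share the same sets $X_1,X_2$, value function, side function, and good-writes function, so the constraint $\Observation_{\Trace}(\Read)\in \GoodWrites(\Read)$ established for $\AnnotatedPOQ$ transfers verbatim to $\AnnotatedPO$. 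Moreover, by \cref{rem:realizable_valid} and the assumed consistency of $\AnnotatedPO$, $\Trace$ is a valid trace of $\System$, so reporting $\Trace$ as the witness is sound.

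Putting the two branches together yields a polynomial-time decision procedure for realizability that additionally produces a witness trace in polynomial time in the realizable case, which is exactly the statement of the theorem. There is no real obstacle here: the only subtlety worth spelling out carefully is the transfer of the linearization from $\AnnotatedPOQ$ back to $\AnnotatedPO$, which is immediate from $Q\Refines P$ and the fact that the annotation data $(\Value,\SideAnnotation,\GoodWrites)$ is unchanged by closure.
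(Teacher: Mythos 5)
Your proposal is correct and follows essentially the same route as the paper: it chains \cref{lem:closure} (correctness and polynomial running time of $\ClosureAlgo$), \cref{lem:realizable_iff_feasible} (realizable iff feasible), and \cref{lem:closed_linearizable} (a closed annotated partial order has a polynomial-time-constructible witness), exactly as the paper's proof does. The only difference is that you explicitly spell out the transfer of the witness from the closure $\AnnotatedPOQ$ back to $\AnnotatedPO$ via $Q\Refines P$, a step the paper leaves implicit inside the forward direction of \cref{lem:realizable_iff_feasible}; this is a harmless and arguably helpful elaboration.
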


\smallskip\noindent{\bf Example on the realizability of annotated partial orders.}
We illustrate $\Realize$ on a simple example in \cref{fig:closure_example} with an annotated partial order $\AnnotatedPO=(X_1, X_2, P, \Value, \SideAnnotation, \GoodWrites)$, which we assume to be consistent.
We have a concurrent program $\System$ of two threads.
To represent $\AnnotatedPO$, we make the following conventions.
We have three global variables $x$, $y$, $z$, and a unique read event per variable.
Event subscripts denote the variable accessed by the corresponding event.
For each variable, we have a unique read event, and barred and unbarred events denote the good and bad write events, respectively, 
for that read event.
Since we have specified the good-writes for each read event, the value function $\Value$ is not important for this example.
Note also that $\SideAnnotation(\Read_x)=2$ (resp., $\SideAnnotation(\Read_z)=1$) since the good writes of $\Read_x$ (resp., $\Read_z$) are remote (resp., local) to  the read event.
The partial order $P$ of $\AnnotatedPO$ consists of the thread orders of each thread, shown in solid lines in \protect\cref{subfig:closure_example}.
The dashed edges of \protect\cref{subfig:closure_example} show the strengthening of $P$ performed by the algorithm $\ClosureAlgo$ (\cref{algo:closure}).
The numbers above the dashed edges denote both the order in which these orderings are added and the closure rule that is responsible for the corresponding ordering. In particular, algorithm $\ClosureAlgo$ performs the following steps.
\begin{compactenum}
\item Initially there are no dashed edges, and $\Read_x$ violates \cref{item:closure1} of closure, as there is no good write event for $\Read_x$ that is ordered before $\Read_x$.
$\RuleOneAlgo$ inserts an ordering $\Write_x\to \Read_x$ (dashed edge $1$).
\item After the previous step, $\Read_y$ violates \cref{item:closure2} of closure, as at this point, $\Read_y$ has only one maximal write event $\ov{\Write}_y$, which is bad for $\Read_y$.
$\RuleTwoAlgo$ inserts an ordering $\Read_y\to \ov{\Write}_y$ (dashed edge $2$).
\item After the previous step, $\Read_z$ violates \cref{item:closure3} of closure, as at this point, $\Read_z$ has a bad minimal write event $\ov{\Write}_z$ that is ordered before $\Read_z$ but not before any good write event.
$\RuleThreeAlgo$ inserts an ordering $\ov{\Write}_z\to \Write_z$ (dashed edge $3$).
\end{compactenum}
At this point no closure rule is violated, and $\ClosureAlgo$ returns the closure $\AnnotatedPOQ=(X_1, X_2, Q, \Value, \SideAnnotation, \GoodWrites)$ of $\AnnotatedPO$ where $P$ has been strengthened to $Q$ with the dashed edges.
Observe that $Q$ has Mazurkiewicz width $2$ (and not $1$), as there still exist pairs of conflicting events that are unordered,
both on variable $y$ and variable $z$.
For example, there exist two write events on variable $y$ that are unordered, and hence there exist some linearizations that are ``bad'' in the sense that the read event $\Read_y$ does not observe the good write event $\Write_y$.
Nevertheless, \cref{lem:closed_linearizable} guarantees that the corresponding annotated partial order is linearizable to a valid trace,
which is shown in \protect\cref{subfig:linearization_example}
We make two final remarks for this example.
\begin{compactenum}
\item Not every linearization of $Q$ produces a valid witness trace for the realizability of $\AnnotatedPOQ$, as some linearizations violate the additional constraints that every read event must observe a write event that is good for the read event.
Hence, the challenge is to find a correct witness.
\item $\AnnotatedPOQ$ has more than one witness of realizability. 
\protect\cref{subfig:linearization_example} shows one such witness $\Trace$, as constructed by \cref{lem:closed_linearizable}.
It is easy to verify that $\Trace$ is a valid witness.
Due to \cref{rem:realizable_valid}, the consistency of $\AnnotatedPO$ guarantees that $\Trace$ is a valid trace of the program $\System$.
\end{compactenum}
\begin{figure}[!h]
\small
\centering
\begin{subfigure}[b]{0.6\textwidth}
\centering
\begin{tikzpicture}[thick,
pre/.style={<-,shorten >= 1pt, shorten <=1pt, thick},
post/.style={->,shorten >= 2pt, shorten <=2pt,  very thick},
seqtrace/.style={->, line width=2},
und/.style={very thick, draw=gray},
event/.style={rectangle, minimum height=0.8mm, minimum width=3mm, fill=black!100,  line width=1pt, inner sep=0},
virt/.style={circle,draw=black!50,fill=black!20, opacity=0}]

\newcommand{\xdisposition}{0}
\newcommand{\ydisposition}{0}
\newcommand{\xstep}{1.8}
\newcommand{\ystep}{0.6}
\newcommand{\xbias}{0.4}

\node	[]		(t1a)	at	(\xdisposition + 0*\xstep, \ydisposition + 0*\ystep)	{\normalsize$P\Project X_1$};
\node	[]		(t1b)	at	(\xdisposition + 0*\xstep, \ydisposition + -6*\ystep)	{};
\node	[]		(t2a)	at	(\xdisposition + 1*\xstep, \ydisposition + 0*\ystep)	{\normalsize$P\Project X_{2}$};
\node	[]		(t2b)	at	(\xdisposition + 1*\xstep, \ydisposition + -6*\ystep)	{};
\draw[seqtrace] (t1a) to (t1b);
\draw[seqtrace] (t2a) to (t2b);

\node[event, draw=black, fill=black] (1) at (\xdisposition + 0*\xstep, \ydisposition + -1*\ystep) {};
\node[] (t1) at (\xdisposition + 0*\xstep - \xbias, \ydisposition + -1*\ystep) {\textcolor{black}{$\ov{\Write}_y$}};
\node[event, draw=black, fill=black] (2) at (\xdisposition + 0*\xstep, \ydisposition + -2*\ystep) {};
\node[] (t2) at (\xdisposition + 0*\xstep - \xbias, \ydisposition + -2*\ystep) {\textcolor{black}{$\Read_x$}};
\node[event, draw=black, fill=black] (3) at (\xdisposition + 0*\xstep, \ydisposition + -3*\ystep) {};
\node[] (t3) at (\xdisposition + 0*\xstep - \xbias, \ydisposition + -3*\ystep) {\textcolor{black}{$\Write_z$}};
\node[event, draw=black, fill=black] (4) at (\xdisposition + 0*\xstep, \ydisposition + -4*\ystep) {};
\node[] (rt4) at (\xdisposition + 0*\xstep - \xbias, \ydisposition + -4*\ystep) {\textcolor{black}{$\ov{\Write}_y$}};
\node[event, draw=black, fill=black] (5) at (\xdisposition + 0*\xstep, \ydisposition + -5*\ystep) {};
\node[] (t5) at (\xdisposition + 0*\xstep - \xbias, \ydisposition + -5*\ystep) {\textcolor{black}{$\Read_z$}};

\node[event, draw=black, fill=black] (6) at (\xdisposition + 1*\xstep, \ydisposition + -1*\ystep) {};
\node[] (t6) at (\xdisposition + 1*\xstep + \xbias, \ydisposition + -1*\ystep) {\textcolor{black}{$\Write_y$}};
\node[event, draw=black, fill=black] (7) at (\xdisposition + 1*\xstep, \ydisposition + -2*\ystep) {};
\node[] (t7) at (\xdisposition + 1*\xstep + \xbias, \ydisposition + -2*\ystep) {\textcolor{black}{$\Write_x$}};
\node[event, draw=black, fill=black] (8) at (\xdisposition + 1*\xstep, \ydisposition + -3*\ystep) {};
\node[] (t8) at (\xdisposition + 1*\xstep + \xbias, \ydisposition + -3*\ystep) {\textcolor{black}{$\ov{\Write}_z$}};
\node[event, draw=black, fill=black] (9) at (\xdisposition + 1*\xstep, \ydisposition + -4*\ystep) {};
\node[] (t9) at (\xdisposition + 1*\xstep + \xbias, \ydisposition + -4*\ystep) {\textcolor{black}{$\Read_y$}};
\node[event, draw=black, fill=black] (10) at (\xdisposition + 1*\xstep, \ydisposition + -5*\ystep) {};
\node[] (t10) at (\xdisposition + 1*\xstep + \xbias, \ydisposition + -5*\ystep) {\textcolor{black}{$\ov{\Write}_z$}};

\draw[post, \darkred, dashed] (7) to node[above]{$1$} (2);
\draw[post, \darkred, dashed] (9) to node[above]{$2$} (4);
\draw[post, \darkred, dashed] (8) to node[above]{$3$} (3);

\end{tikzpicture}
\caption{An annotated partial order $\AnnotatedPO$ and its closure (dashed edges).}
\label{subfig:closure_example}
\end{subfigure}
\quad
\begin{subfigure}[b]{0.35\textwidth}
\centering
\small
\def\rownumber{}
\begin{tabular}[b]{@{\makebox[1.2em][r]{\rownumber\space}} | l | l |}
\normalsize{$\mathbf{\SeqTrace_1}$} & \normalsize{$\mathbf{\SeqTrace_2}$}
\gdef\rownumber{\stepcounter{magicrownumbers}\arabic{magicrownumbers}} \\
\hline
$\ov{\Write}_y$ & \\
& $\Write_y$ \\
& $\Write_x$ \\
$\Read_x$ & \\
& $\ov{\Write}_z$\\
$\Write_z$ & \\
& $\Read_y$ \\
$\ov{\Write}_y$ & \\
$\Read_z$ &\\
& $\ov{\Write}_z$\\
\hline
\end{tabular}
\caption{A witness trace that linearizes $\AnnotatedPO$.}
\label{subfig:linearization_example}
\end{subfigure}
\caption{
\protect\cref{subfig:closure_example} shows an annotated partial order $\AnnotatedPO$ on a concurrent program of two threads. 
Subscripts denote the variable accessed by each event.
For each variable, we have a unique read event, and barred and unbarred events denote the good and bad write events, respectively, for that read event.
Dashed edges are added by $\ClosureAlgo$~(\cref{algo:closure}) during closure.
\protect\cref{subfig:linearization_example} shows a witness trace that linearizes $\AnnotatedPO$.
}
\label{fig:closure_example}
\end{figure}

\section{Value-centric Dynamic Partial Order Reduction}\label{sec:vcdpor}

We now present our algorithm $\VCDPOR$ for exploring the partitioning $\TraceSpaceMax/\VHB$.
Intuitively, the algorithm manipulates annotated partial orders of the form $\AnnotatedPO=(X_1, X_2, P, \Value, \SideAnnotation, \GoodWrites)$, where $X_1\subseteq \SysEvents_{\RootProcess}$ and $X_2\subseteq \SysEvents_{\neq \RootProcess}$, 
i.e., $X_1$ (resp., $X_2$) contains events of the root thread (resp., leaf threads).
We first introduce some useful concepts and then proceed with the main algorithm.

\smallskip\noindent{\bf Trace extensions and inevitable sets.}
Given a trace $\Trace$, an \emph{extension} of $\Trace$ is a trace $\Trace'$ such that $\Trace$ is a prefix of $\Trace'$.
We say that $\Trace'$ is a \emph{maximal extension} of $\Trace$ if $\Trace'$ is an extension of $\Trace$ and $\Trace'$ is maximal.
A set of events $X$ is \emph{inevitable} for $\Trace$ if for every maximal extension $\Trace'$ of $\Trace$ we have $X\in \Events{\Trace'}$.
A \emph{write extension} of $\Trace$, denoted by $\WriteExtend(\Trace)$, is any arbitrary largest extension $\Trace'$ of $\Trace$ such that
$\Events{\Trace'}\setminus\Events{\Trace}\subseteq \SysWrites$.
In words, we obtain each $\Trace'$ by extending $\Trace$ arbitrarily until (but not included) the next read event of each thread.
Note that for every such write extension $\Trace'$ of $\Trace$, for every thread $\Process$, the local trace
$\Trace'\Project\Events{\Process}$ is unique, and the set $\Events{\Trace'}$ is inevitable for $\Trace$.
Let $\AnnotatedPO$ be a closed annotated partial order over a set $X$.
A set of events $Y$ is \emph{inevitable} for $\AnnotatedPO$ if for every linearization $\Trace$ of $\AnnotatedPO$ and every maximal extension $\Trace'$ of $\Trace$, we have that $Y\subseteq \Events{\Trace'}$.

\smallskip\noindent{\bf Leaf refinement and minimal annotated partial orders.}
Consider  two partial orders $P$, $Q$ over a set $X$.
We say that $Q$ \emph{leaf-refines} $P$, denoted by $Q\LeafRefines P$ if for every pair of events $\Event_1, \Event_2\in X\cap \SysLeafEvents$,
if $\Confl{\Event_1}{\Event_2}$ and $\Event_1<_{P} \Event_2$ then $\Event_1<_{Q} \Event_2$.
In words, $Q$ leaf-refines $P$ if $Q$ agrees with $P$ on the order of every pair of conflicting events that belong to leaf threads.
Consider an annotated partial order $\AnnotatedPO=(X_1, X_2, P, \Value, \SideAnnotation, \GoodWrites)$.
We call $\AnnotatedPO$ \emph{minimal} if for every closed annotated partial order $\AnnotatedPOQ=(X_1, X_2, Q, \Value, \SideAnnotation, \GoodWrites)$, 
if $Q\LeafRefines P$ then $Q\Refines P$.
Intuitively, the minimality of $\AnnotatedPO$ guarantees that $P$ is the weakest partial order among all partial orders $Q$ that
\begin{compactenum}
\item agree with $P$ on the order of conflicting pairs of events that belong to leaf threads, and
\item make the resulting annotated partial order  $(X_1, X_2, Q, \Value, \SideAnnotation, \GoodWrites)$ closed.
\end{compactenum}
Hence $P$ does not contain any unnecessary orderings, given these two constraints.
Observe that if $\AnnotatedPO$ is minimal and $\AnnotatedPOK$ is the closure of $\AnnotatedPO$ then $\AnnotatedPOK$ is also minimal.
Afterwards, our algorithm $\VCDPOR$  will use minimal annotated partial orders to represent different classes of the $\VHB$ partitioning.

\smallskip\noindent{\bf Algorithm $\ExtendPO(\AnnotatedPO, X', \Value', \SideAnnotation', \GoodWrites')$.}
Let $\AnnotatedPO=(X_1, X_2, P, \Value, \SideAnnotation, \GoodWrites)$ be a minimal, closed annotated partial order, and $X=X_1\cup X_2$.
Consider 
\begin{compactenum}
\item a set $X'$ with (i)~$X'\setminus X\subseteq \SysWrites$ or $|X'\setminus X|=1$ and (ii)~$X'$ is inevitable for $\AnnotatedPO$, 
\item a value function $\Value'$ over $X'$ such that $\Value\subseteq \Value'$,
\item a side function $\SideAnnotation'$ over $X'$ such that $\SideAnnotation\subseteq \SideAnnotation'$, and
\item a good-writes set $\GoodWrites'$ over $X'$ such that $\GoodWrites\subseteq \GoodWrites'$.
\end{compactenum}
We rely on an algorithm called $\ExtendPO$ that constructs an \emph{extension} of $\AnnotatedPO=(X_1, X_2, P, \Value, \SideAnnotation, \GoodWrites)$ to $X'$, $\Value'$, $\SideAnnotation'$ and $\GoodWrites'$ as a set of minimal closed annotated partial orders $\{\AnnotatedPOK_i=(X'_1, X'_2, K_i, \Value', \SideAnnotation', \GoodWrites' ) \}_i$, where $X'_1\cup X'_2=X'$.
Intuitively, if $\Trace$ is a linearization of $\AnnotatedPO$, then for every extension $\Trace'$ of $\Trace$ such that $\Events{\Trace'}=X'$, $\Value_{\Trace'}=\Value'$ and $\SideAnnotation_{\Trace'}=\SideAnnotation'$, there exists some $\AnnotatedPOK_i$ that linearizes to $\Trace'$.
In $\VCDPOR$, we will use $\ExtendPO$ to extend annotated partial orders with new events.
%The algorithm is conceptually straightforward and is presented in detail in \cref{sec:app_vcdpor_proofs}.

We describe $\ExtendPO$ for the special case where $|X'\setminus X| = 1$. 
When $|X'\setminus X|=q>1$, $\ExtendPO$ calls itself recursively for every annotated partial order of its output set on a sequence of sets $Y_1,\dots, Y_q$ where $Y_q=X'$, $Y_0=X$ and $|Y_{i+1}\setminus Y_i|=1$.
Let $X'\setminus X=\{\Event \}$.
\begin{compactenum}
\item\label{item:extend_step1} If $\Proc{\Event}=\RootProcess$ (i.e., $\Event$ belongs to the root thread), 
the algorithm simply constructs a partial order $K$ over the set $X'$ such that $K\Project X= P$ and $\Event'<_{K}\Event$ for every event $\Event\in X'$ such that $\Event'<_{\TO} \Event$.
Afterwards, the algorithm constructs the annotated partial order $\AnnotatedPOK=(X'_1, X'_2, K, \Value', \SideAnnotation', \GoodWrites')$ and returns the singleton set $\PartialOrders_{\Write}=\{ \ClosureAlgo(\AnnotatedPOK) \}$.
\item\label{item:extend_step2} If $\Proc{\Event}\neq \RootProcess$ (i.e., $\Event$ belongs to the leaf threads),
the algorithm first constructs a partial order $K$ as in the previous item.
Afterwards, it creates a new partial order $K_i$ for every possible ordering of $\Event$ with all events $\Event'\in X_2$ such that $\Confl{\Event}{\Event'}$.
Finally, the algorithm constructs the annotated partial orders $A=\AnnotatedPOK_i=(X'_1, X'_2, K_i, \Value', \SideAnnotation', \GoodWrites')$, and returns the set
$\PartialOrders=\{ \ClosureAlgo(\AnnotatedPOK_i): \AnnotatedPOK_i\in A \text{ and } \ClosureAlgo(\AnnotatedPOK_i)\neq \bot \}$.
\end{compactenum}

\smallskip\noindent{\bf Causally-happens-before maps, guarding reads and candidate writes.}
A \emph{causally-happens-before (CHB) map} is a map $\NegativeAnnotation: \SysReads \to \System \to \SysReads\cup \{ \bot, \bbot \}$ such that for each read event $\Read\in \Domain(\NegativeAnnotation)$ and thread $\Process\in \System$ we have that $\NegativeAnnotation(\Read)(\Process)\in \SysReads_{\Process}\cup \{ \bot, \bbot \}$.
In words, $\NegativeAnnotation$ maps read events to functions that map every thread $\Process \in \System$ to a read event of $\Process$, or to some initial values $\{\bot, \bbot\}$.
Given a trace $\Trace$ and an event $\Event\in \Events{\Trace}$, we define the \emph{guarding read} $\GuardingRead_{\Trace}(\Event)$ of $\Event$ in $\Trace$ as the last read event of $\Proc{\Event}$ that happens before $\Event$ in $\Trace$, and $\GuardingRead_{\Trace}(\Event)=\bot$ if no such read event exists.
Formally,
\begin{align*}
 \GuardingRead_{\Trace}(\Event) = \max_{\Trace}( \{\Read\in \Reads{\Trace\Project \Proc{\Event}}: \Read <_{\TO} \Event \})
\end{align*}
where we take the maximum of the empty set to be $\bot$.
Given a trace $\Trace$, a CHB map $\NegativeAnnotation$ and a read event $\Read\in \Enabled(\Trace)$, we define the \emph{candidate write set} $\CandidateSet_{\Trace}^{\NegativeAnnotation}(\Read)$ of $\Read$ in $\Trace$ given $\NegativeAnnotation$ as follows:
\begin{align*}
\CandidateSet_{\Trace}^{\NegativeAnnotation}(\Read)=& \{ \Write\in \Writes{\Trace}:~ \Confl{\Read}{\Write} \quad \text{and}\\
\text{either}&\quad \GuardingRead_{\Trace}(\Write)=\bot \quad \text{and} \quad \NegativeAnnotation(\Read)(\Proc{\Write})=\bbot\\
\text{or}&\quad \GuardingRead_{\Trace}(\Write)\neq\bot \quad \text{and also} \quad \NegativeAnnotation(\Read)(\Proc{\Write})\in \{\bbot,\bot\} \quad \text{or} \quad \NegativeAnnotation(\Read)(\Proc{\Write})<_{\TO}\GuardingRead_{\Trace}(\Write)
\end{align*}
We refer to \cref{fig:candidate_writes} for an illustration of the above notation.
Intuitively, $\NegativeAnnotation(\Read)(\Process)$ encodes the prefix of the local trace of thread $\Process$ that contains write events which have already been considered by the algorithm as good writes for $\Read$. 
Instead of the whole prefix, we store the last read of that prefix. 
The two special values $\bot\bot$ and $\bot$ encode the empty prefix, and the prefix before the first read. 
The guarding read of a write $\Write$ is the last local read event the same thread that appears before $\Write$ in the execution so far. 
Hence, if the guarding read of $\Write$ appears before $C(\Read)(\Process)$, we know that $\Write$ has been considered as a good write for $\Read$. 
The candidate write set for $\Read$ contains writes that are considered as good writes for $\Read$ in the current recursive step. 
\begin{figure}[!h]
\small
\centering
\begin{subfigure}[b]{0.35\textwidth}
\centering
\small
\def\rownumber{}
\begin{tabular}[b]{@{\makebox[1.2em][r]{\rownumber\space}} | l | l | l |}
\normalsize{$\mathbf{\SeqTrace_1}$} & \normalsize{$\mathbf{\SeqTrace_2}$} & \normalsize{$\mathbf{\SeqTrace_3}$}
\gdef\rownumber{\stepcounter{magicrownumbers}\arabic{magicrownumbers}} \\
\hline
$\Write_x$ & & \\
& $\Write_y$ & \\
& $\Write_x$ & \\
& $\Read_y$ & \\
& $\Write_x$ & \\
%$\Read_x$ & & \\
& & $\Read_y$ \\
& & $\Write_x$ \\
& & $\Read_x$ \\
& & $\Write_x$ \\
\hline
\end{tabular}
\caption{A trace $\Trace$.
Threads $\Process_1$ and $\Process_3$ have enabled events $\Read_x^1$ and $\Read_x^3$ (not shown), which access the variable $x$.
}
\label{subfig:candidate_writes_trace}
\end{subfigure}
\quad
\begin{subfigure}[b]{0.6\textwidth}
\centering
\begin{tikzpicture}[thick,
pre/.style={<-,shorten >= 1pt, shorten <=1pt, thick},
post/.style={->,shorten >= 2pt, shorten <=2pt,  very thick},
seqtrace/.style={->, line width=2},
und/.style={very thick, draw=gray},
event/.style={rectangle, minimum height=0.8mm, minimum width=3mm, fill=black!100,  line width=1pt, inner sep=0},
virt/.style={circle,draw=black!50,fill=black!20, opacity=0}]

\def\ystep{1.5}
\node[] at (0,1*\ystep){
$\begin{aligned}
\Enabled(\Trace)\cap \SysEvents_{\Process_1}&=\Read_x^1\\
\Enabled(\Trace)\cap \SysEvents_{\Process_3}&=\Read_x^3\\
\end{aligned}$
};

\node[] at (0,0*\ystep) {
$\begin{aligned}
\NegativeAnnotation(\Read_x^1)&=\{ (\Process_1, \bot), (\Process_2, \Event_4), (\Process_3, \Event_6) \}\\
\NegativeAnnotation(\Read_x^3)&=\{ (\Process_1, \bbot), (\Process_2, \bbot), (\Process_3, \bbot) \}
\end{aligned}$
};

\node[] at (0,-1*\ystep) {
$\begin{aligned}
\CandidateSet_{\Trace}^{\NegativeAnnotation}(\Read_x^1)&=\{ \Event_9 \}\\
\CandidateSet_{\Trace}^{\NegativeAnnotation}(\Read_x^3)&=\{\Event_{1}, \Event_3, \Event_5, \Event_7, \Event_9 \}\\
\end{aligned}$
};

\end{tikzpicture}
\caption{
The candidate write sets of the read events $\Read_x^1$ and $\Read_x^3$ given the causally-happens-before map $\NegativeAnnotation$.
}
\label{subfig:candidate_writes}
\end{subfigure}
\caption{
Example of a trace (\protect\cref{subfig:candidate_writes_trace}) and candidate write sets of read events given their causally-happens-before maps (\protect
\cref{subfig:candidate_writes}). 
We denote by $\Event_i$ the $i$-th event of $\Trace$.
}
\label{fig:candidate_writes}
\end{figure}

%\smallskip
\begin{algorithm}%[H]
\small
\SetInd{0.4em}{0.4em}
\DontPrintSemicolon
%\setstretch{1.05}
\caption{$\VCDPOR(\AnnotatedPO=(X_1, X_2, P, \Value, \SideAnnotation, \GoodWrites), \NegativeAnnotation)$}\label{algo:vcdpor}
\KwIn{A minimal closed annotated partial order $\AnnotatedPO$, a CHB map $\NegativeAnnotation$.
}
\BlankLine
$\Trace'\gets \Realize(\AnnotatedPO)$\tcp*[f]{$\AnnotatedPO$ is closed hence realizable}\label{line:vcdpor_realize}\\
$\Trace\gets \WriteExtend(\Trace')$\tcp*[f]{Extend $\Trace'$ until before the next read of each thread}\label{line:vcdpor_wextend} \\
\ForEach(\tcp*[f]{Extensions of $\AnnotatedPO$ to $\Events{\Trace}$}){$\AnnotatedPOQ\in \ExtendPO(\AnnotatedPO, \Events{\Trace}, \Value_{\Trace}, \SideAnnotation, \GoodWrites)$}{\label{line:vcdpor_extend}
$\NegativeAnnotation_{\AnnotatedPOQ} \gets \NegativeAnnotation$\tcp*[f]{Create a copy of the CHB $\NegativeAnnotation$}\label{line:vcdpor_neg_copy}\\
$\MutateRoot(\AnnotatedPOQ, \Trace, \NegativeAnnotation_{\AnnotatedPOQ})$\tcp*[f]{Process the root thread}\label{line:vcdpor_call_root}\\
\ForEach(\tcp*[f]{Process the leaf threads}){$\Process \in \Leaves$}{\label{line:vcdpor_for_leaves}
$\MutateLeaf(\AnnotatedPOQ, \Trace, \NegativeAnnotation_{\AnnotatedPOQ}, \Process)$\label{line:vcdpor_call_leaf}\\
}
}
\end{algorithm}
\smallskip\noindent{\bf Algorithm $\VCDPOR$.}
We are now ready to describe our main algorithm $\VCDPOR$ for the enumerative exploration of the partitioning $\TraceSpace/\VHB$.
The algorithm takes as input a minimal closed annotated partial order $\AnnotatedPO$ and a CHB map $\NegativeAnnotation$.
First, $\VCDPOR$ calls $\Realize$ to obtain a linearization $\Trace'$ of $\AnnotatedPO$ and constructs the write-extension $\Trace$ of $\Trace'$ which reveals new write events in $\Trace$.
%Observe that $\Events{\Trace}$ is an inevitable set for $\AnnotatedPO$.
Afterwards, the algorithm extends $\AnnotatedPO$ to the set $\Events{\Trace}$ by calling $\ExtendPO$.
Recall that $\ExtendPO$ returns a set of minimal closed annotated partial orders.
For every annotated partial order $\AnnotatedPOQ$ returned by $\ExtendPO$, the algorithm calls $\MutateRoot$ to process the read event of the root thread $\RootProcess$ that is enabled in $\Trace$.
Finally, the algorithm calls $\MutateLeaf$ for every leaf thread $\Process\neq \RootProcess$ to process the read event of $\Process$ that is enabled in $\Trace$.
For the initial call, we construct an empty annotated partial order $\AnnotatedPO$ and an initial CHB map $\NegativeAnnotation$ that for every read event $\Read\in \SysReads$ and thread $\Process\in \System$ maps $\NegativeAnnotation(\Read)(\Process)=\{ \bbot \}$.

%\smallskip
\begin{algorithm}%[H]
\small
\SetInd{0.4em}{0.4em}
\DontPrintSemicolon
%\setstretch{1.05}
\caption{$\MutateRoot(\AnnotatedPOQ=(X_1, X_2, Q, \Value, \SideAnnotation, \GoodWrites), \Trace, \NegativeAnnotation_{\AnnotatedPOQ})$}\label{algo:mutate_root}
\KwIn{A minimal closed annotated partial order $\AnnotatedPOQ$, a trace $\Trace$, a CHB map $\NegativeAnnotation_{\AnnotatedPOQ}$.
}
\BlankLine
$\Read \gets\Enabled(\Trace, \RootProcess)$\tcp*[f]{The next enabled event in $\RootProcess$ is a read}\label{line:mutate_root_enabled}\\
$Y_1\gets \CandidateSet_{\Trace}^{\NegativeAnnotation_{\AnnotatedPOQ}}(\Read)\cap \SysWrites_{\RootProcess}$\tcp*[f]{The set of local candidate writes of $\Read$}\label{line:mutate_root_local_candidate}\\
$Y_2\gets \CandidateSet_{\Trace}^{\NegativeAnnotation_{\AnnotatedPOQ}}(\Read)\cap \SysWrites_{\neq \RootProcess}$\tcp*[f]{The set of remote candidate writes of $\Read$}\label{line:mutate_root_remote_candidate}\\
\ForEach(\tcp*[f]{$i=1$ ($i=2$) reads from local (remote) writes}){$i\in [2]$}{\label{line:mutate_root_for_local_remote}
$\SideAnnotation_{\Read}\gets \SideAnnotation\cup \{ (\Read, i) \}$\tcp*[f]{The new side function}\label{line:mutate_root_side_fn}\\
$\ValueDomain_{\Read}\gets \{ \Value_{\Trace}(\Write):~\Write\in Y_i \}$\tcp*[f]{The set of values of candidate writes of $\Read$}\label{line:mutate_root_values}\\
\ForEach(\tcp*[f]{Every value $v$ that $\Read$ may read}){$v \in \ValueDomain_{\Read}$}{
$\Value_{\Read}\gets \Value_{\Trace} \cup \{ (\Read, v) \}$\tcp*[f]{The new value function}\label{line:mutate_root_value_fn}\\
$\GoodWrites_{\Read}\gets \GoodWrites \cup \{ (\Read, \{\Write\in Y_i:~\Value_{\Trace}(\Write)=v \}) \}$\tcp*[f]{The new good-writes function}\label{line:mutate_root_good_fn}\\
$\AnnotatedPOK\gets \ExtendPO(\AnnotatedPOQ, X_1\cup X_2 \cup \{\Read\}, \Value_{\Read}, \SideAnnotation_{\Read}, \GoodWrites_{\Read})$\tcp*[f]{Returns  one element}\label{line:mutate_root_extend}\\
%Let $\AnnotatedPOF\gets \ClosureAlgo(\AnnotatedPOK)$\tcp*[f]{Close $\AnnotatedPOK$}\\
\uIf(\tcp*[f]{Extension is successful}){$\AnnotatedPOK\neq \bot$}{
Call $\VCDPOR(\AnnotatedPOK, \NegativeAnnotation_{\AnnotatedPOQ})$\tcp*[f]{Recurse}\\
}
}
}
$\NegativeAnnotation_{\AnnotatedPOQ}(\Read)\gets \{ (\Process, \max_{\Trace}(\{\Reads{\Trace}\Project\Process\}) ):~ \Process \in \System \} $\tcp*[f]{The last read of each thread in $\Trace$}\\
\end{algorithm}
\smallskip\noindent{\bf Algorithm $\MutateRoot$.}
The algorithm takes as input a minimal closed annotated partial order $\AnnotatedPOQ$, a trace $\Trace$ and a CHB map $\NegativeAnnotation_{\AnnotatedPOQ}$, and attempts all possible extensions of $\AnnotatedPOQ$ with the read event $\Read$ of $\RootProcess$ that is enabled in $\Trace$ to all possible values that are written in $\Trace$.
The algorithm first constructs two sets $Y_1$ and $Y_2$ which hold the local and remote, respectively, write events of $\Trace$ that are candidate writes for $\Read$ according to the CHB map $\NegativeAnnotation_{\AnnotatedPOQ}$.
Then, it iterates over the local ($i=1$) and remote ($i=2$) write choices for $\Read$ in $Y_i$.
Finally, the algorithm 
(i)~collects all possible values that $\Read$ may read from the set $Y_i$,
(ii)~constructs the appropriate new side function, value function and good-writes function, and 
(iii)~calls $\ExtendPO$ on these new parameters in order to establish the respective extension for $\Read$.
For every such case, $\ExtendPO$ returns a new minimal, closed annotated partial order $\AnnotatedPOK$ which is passed recursively to $\VCDPOR$.

%\smallskip
\begin{algorithm}%[H]
\small
\SetInd{0.4em}{0.4em}
\DontPrintSemicolon
%\setstretch{1.05}
\caption{$\MutateLeaf(\AnnotatedPOQ=(X_1, X_2, Q, \Value, \SideAnnotation, \GoodWrites), \Trace, \NegativeAnnotation_{\AnnotatedPOQ}, \Process)$}\label{algo:mutate_leaf}
\KwIn{A minimal closed annotated partial order $\AnnotatedPOQ$, a trace $\Trace$, a CHB map $\NegativeAnnotation_{\AnnotatedPOQ}$, a thread $\Process$.
}
\BlankLine
$\Read \gets\Enabled(\Trace, \Process)$\tcp*[f]{The next enabled event in $\Process$ is a read}\\
$\ValueDomain_{\Read}\gets \{ \Value_{\Trace}(\Write):~\Write\in \CandidateSet_{\Trace}^{\NegativeAnnotation_{\AnnotatedPOQ}}(\Read) \}$\tcp*[f]{The set of values of candidate writes of $\Read$}\label{line:mutate_leaf_values}\\
\ForEach(\tcp*[f]{Every value $v$ that $\Read$ may read}){$v \in \ValueDomain_{\Read}$}{\label{line:mutate_leaf_for_value}
$\Value_{\Read}\gets \Value_{\Trace} \cup \{ (\Read, v) \}$\tcp*[f]{The new value function}\label{line:mutate_leaf_value_fn}\\
$\GoodWrites_{\Read}\gets \GoodWrites \cup \{ (\Read, \{ \Write \in \CandidateSet_{\Trace}^{\NegativeAnnotation_{\AnnotatedPOQ}}(\Read):~\Value_{\Trace}(\Write)=v \}) \}$\tcp*[f]{The new good-writes function}\label{line:mutate_leaf_good_fn}\\
\ForEach(\tcp*[f]{Returns many elements}){$\AnnotatedPOK \in \ExtendPO(\AnnotatedPOQ, X_1\cup X_2 \cup \{\Read\}, \Value_{\Read}, \SideAnnotation, \GoodWrites_{\Read})$}{\label{line:mutate_leaf_extend}
Call $\VCDPOR(\AnnotatedPOK, \NegativeAnnotation_{\AnnotatedPOQ})$\tcp*[f]{Recurse}\\
}
}
$\NegativeAnnotation_{\AnnotatedPOQ}(\Read)\gets \{ (\Process, \max_{\Trace}(\{\Reads{\Trace}\Project\Process\}) ):~ \Process \in \System \} $\tcp*[f]{The last read of each thread in $\Trace$}\\
\end{algorithm}

\smallskip\noindent{\bf Algorithm $\MutateLeaf$.}
The algorithm $\MutateLeaf$ takes as input a minimal closed partial order $\AnnotatedPOQ$, a trace $\Trace$, a CHB map $\NegativeAnnotation_{\AnnotatedPOQ}$, and a thread $\Process\in\Leaves$.
Similarly to $\MutateRoot$, $\MutateLeaf$ attempts all possible extensions of $\AnnotatedPOQ$ with the read event $\Read$ of $\Process$ that is enabled in $\Trace$ to all possible values that are written in $\Trace$.
The main difference compared to $\MutateRoot$ is that since $\Read$ belongs to a leaf thread, $\ExtendPO$ returns a set of minimal, closed annotated partial orders (as opposed to just one) which result from all possible orderings of $\Read$ with the write events of $X_2$ that are conflicting with $\Read$.
Then $\MutateLeaf$ makes a recursive call to $\VCDPOR$ for each such annotated partial order.

The following theorem states the main result of this paper.

\smallskip
\begin{theorem}\label{them:vcdpor}
Consider a concurrent program $\System$ over a constant number of threads, and let $\TraceSpaceMax$ be the maximal trace space of $\System$.
$\VCDPOR$ solves the local-state reachability problem on $\System$ and requires $O\left(|\TraceSpaceMax/\VHB|\cdot \Poly(n) \right )$ time, where $n$ is the length of the longest trace in $\TraceSpaceMax$.
\end{theorem}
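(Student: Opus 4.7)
The theorem has two parts: correctness (i.e., soundness and completeness for local-state reachability) and complexity. The plan is to decompose the argument along these lines, and then reduce each to structural facts about minimal closed annotated partial orders and about the work done per recursive call.

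\textbf{Soundness and completeness.} Soundness follows immediately from \cref{rem:soundness}: since $\VHB$ is sound for local-state reachability, it suffices to visit at least one trace from every class of $\TraceSpaceMax/\VHB$. For completeness, the plan is to show by induction on the length of a maximal trace $\Trace \in \TraceSpaceMax$ that $\VCDPOR$ eventually constructs a minimal closed annotated partial order $\AnnotatedPO$ that has a linearization $\Trace^{\star}$ with $\Trace^{\star} \VHBE \Trace$. The inductive step must track how the algorithm extends a current minimal closed annotated partial order $\AnnotatedPOQ$: first via $\WriteExtend$ on a linearization produced by $\Realize$ (which, by \cref{them:closure}, exists and agrees with $\AnnotatedPOQ$), then via $\ExtendPO$ up to the revealed write frontier, and finally via $\MutateRoot$ / $\MutateLeaf$ which enumerate (i) the observed side (for root reads), (ii) all candidate values $v \in \ValueDomain_{\Read}$, and (iii) all admissible orderings with conflicting leaf writes. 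One direction of the induction exhibits a branch of the recursion that agrees with $\Trace$ on values, side function, $\CHB{}{\Trace}{}\Project \SysReads$, and $\HB{}{\Trace}{}\Project\SysLeafEvents$; by the definition of $\VHB$ this suffices.

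\textbf{Uniqueness per class.} The more delicate half is showing that distinct terminal recursive calls produce annotated partial orders representing distinct $\VHB$ classes, so that the algorithm does not explode beyond $|\TraceSpaceMax/\VHB| \cdot \Poly(n)$ calls. The key structural fact to establish is that minimal closed annotated partial orders are (up to closure) canonical representatives of $\VHB$-classes: if $\AnnotatedPOQ_1$ and $\AnnotatedPOQ_2$ are two minimal closed annotated partial orders produced on different branches and they admit linearizations $\Trace_1, \Trace_2$ with $\Trace_1 \VHBE \Trace_2$, then $\AnnotatedPOQ_1 = \AnnotatedPOQ_2$. The role of the CHB map $\NegativeAnnotation$ is to prune branches that would re-read a write already considered as a good write for a given read on a prior branch, so that each (read, observation-class) pair is tried only once along any root-to-leaf path of the recursion. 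I expect the main obstacle to lie here: verifying that the update rule for $\NegativeAnnotation_{\AnnotatedPOQ}$ (keeping per-thread guarding reads) precisely captures the set of observations that a later recursive call should avoid, and that combined with the minimality preservation of $\ExtendPO$ (extensions of minimal closed annotated partial orders remain minimal after $\ClosureAlgo$) this guarantees the one-class-per-branch invariant.

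\textbf{Polynomial work per class.} Finally, I would bound the time spent at each recursive call. By \cref{them:closure}, $\Realize$ and the internal $\ClosureAlgo$ invocations run in $\Poly(n)$ time, and a write extension is computed in $\Poly(n)$ time. The branching factors in $\ExtendPO$, $\MutateRoot$ and $\MutateLeaf$ are bounded by: two sides, $|\ValueDomain_{\Read}| \le n$, and the number of admissible orderings of a new leaf event with events in $X_2$ conflicting with it, which is at most $n^{O(k)}$ where $k$ is the (constant) number of threads, because the Mazurkiewicz-width constraint $\MWidth{P\Project X_2}=1$ forces conflicting writes from the same thread to be linearly ordered. Since the depth of recursion is at most $n$, the total number of recursive calls is at most $|\TraceSpaceMax/\VHB| \cdot n$, and each performs $\Poly(n)$ work. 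Multiplying yields the claimed $O(|\TraceSpaceMax/\VHB| \cdot \Poly(n))$ bound, completing the theorem.
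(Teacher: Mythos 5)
Your decomposition matches the paper's: soundness from \cref{rem:soundness}, completeness by exhibiting a branch of the recursion that tracks a target trace (the paper formalizes this via ``compatible'' nodes and a ``leftmost mover'' lemma, \cref{lem:left_move}, which is the same idea as your induction on trace length), a per-class uniqueness lemma, and $\Poly(n)$ work per recursive call via \cref{them:closure} and \cref{lem:extend_correctness}. Two points deserve attention.

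First, your formulation of the uniqueness step is not the one the paper uses, and as stated it is doubtful. You propose the canonicity claim that if two minimal closed annotated partial orders $\AnnotatedPOQ_1,\AnnotatedPOQ_2$ from different branches admit $\VHB$-equivalent linearizations then $\AnnotatedPOQ_1=\AnnotatedPOQ_2$. But the annotated partial orders carry good-writes functions derived from candidate-write sets, which depend on the CHB map $\NegativeAnnotation$ at the moment of extension; two branches can in principle assign different $\GoodWrites$ sets even when the resulting traces agree on values, sides, and observations, so equality of the full tuples is more than you can expect to prove. The paper's \cref{lem:vcdpor_optimal} instead shows directly that any two traces produced at \cref{line:vcdpor_realize} in distinct recursive calls are $\VHB$-\emph{inequivalent}, by a case analysis on the lowest common ancestor of the two calls in the recursion tree: depending on whether that node is type~1 or type~2, the traces are forced to differ in $\HB{}{}{}\Project\SysLeafEvents$, in the value function, in the side function, or (via the guarding read recorded in $\NegativeAnnotation$) in $\CHB{}{}{}\Project\SysReads$. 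That direct argument is what you should carry out; also note that the pruning by $\NegativeAnnotation$ acts across \emph{sibling} subtrees (a later sibling must not revisit an observation explored by an earlier one), not along a single root-to-leaf path as you wrote. Second, your count ``at most $|\TraceSpaceMax/\VHB|\cdot n$ recursive calls'' implicitly assumes every leaf of the recursion tree is a maximal trace; it need not be. The paper instead bounds the tree size by $|\TraceSpace/\VHB|$ (all traces, including prefixes) using \cref{lem:vcdpor_optimal}, and then uses the constant-thread assumption to argue $|\TraceSpace/\VHB|=O(|\TraceSpaceMax/\VHB|\cdot\Poly(n))$, since a maximal trace has only $\Poly(n)$ prefix event-sets. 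You need this bridging step to land on the stated bound.
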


We conclude with two remarks on  space usage  and the way lock events can be handled.

\smallskip
\begin{remark}[Space complexity]\label{rem:recursive}
To make our presentation simpler so far, $\VCDPOR$ and $\MutateLeaf$ iterate over the set of annotated partial orders returned by $\ExtendPO$, which can be exponentially large.
An efficient variant of $\VCDPOR$ shall explore these sets recursively, instead of computing all elements of each set imperatively.
This results in polynomial space complexity for $\VCDPOR$.
\end{remark}

\smallskip
\begin{remark}[Handling locks]\label{rem:locks}
For simplicity of presentation, so far we have neglected locks.
However, lock events can be incorporated naturally, as follows.
\begin{compactenum}
\item Each lock-release event is a write event, writing an arbitrary value.
\item Each lock-acquire event is a read event. Given two lock-acquire events $\Read_1, \Read_2$ the algorithm maintains that $\GoodWrites(\Read_1)\cap \GoodWrites(\Read_2)=\emptyset$
\end{compactenum}
\end{remark}

\newcommand{\myblue}{blue!85!black}
\newcommand{\myred}{red!85!black}

\smallskip\noindent{\bf $\VCDPOR$ running example.}
\begin{figure}
\centering
\small
\begin{subfigure}[b]{.3\textwidth}
% Program
\centering
\begin{subfigure}[t]{0.15\textwidth}
\begin{align*}
\text{Th}&\text{read}~\Process_{1}:\\
\hline\\[-1em]
1.~& \textcolor{\myblue}{\Write(y,1)}\\
2.~& \textcolor{\myblue}{\Write(y,2)}\\
3.~& \textcolor{\myred}{\Write(x,1)}\\
4.~& \textcolor{\myred}{\Read(x)}\\
\end{align*}
\end{subfigure}
%\qquad
\begin{subfigure}[t]{0.15\textwidth}
\begin{align*}
\text{Th}&\text{read}~\Process_{2}:\\
\hline\\[-1em]\
1.~& \textcolor{\myred}{\Write(x,1)}\\
2.~& \textcolor{\myred}{\Read(x)}\\
3.~& \textcolor{\myblue}{\Write(y,2)}\\
4.~& \textcolor{\myblue}{\Read(y)}\\
\end{align*}
\end{subfigure}
%\vspace{-6mm}
\caption{A concurrent program $\System$.}
\label{subfig:vcdpor_program}
\end{subfigure}
\begin{subfigure}[b]{.69\textwidth}
% Recursion tree
\centering
\begin{tikzpicture}[thick, >=latex,
pre/.style={<-,shorten >= 1pt, shorten <=1pt, thick},
post/.style={->,shorten >= 1pt, shorten <=1pt,  thick},
und/.style={very thick, draw=gray},
node1/.style={circle, minimum size=4mm, draw=black!100, line width=1pt, inner sep=0},
node2/.style={circle, minimum size=5mm, draw=black!100, fill=white!100, very thick, inner sep=0},
virt/.style={circle,draw=black!50,fill=black!20, opacity=0}]

\newcommand{\xdisposition}{0}
\newcommand{\ydisposition}{0}
\newcommand{\xstep}{1}
\newcommand{\ystep}{1.4}

\node[node1] (t) at (\xdisposition  + 0 * \xstep, \ydisposition + 0 * \ystep ){$a$};
\node[node1] (t1) at (\xdisposition  - 3.5 * \xstep, \ydisposition - 1 * \ystep ){$b$};
\node[node1] (t2) at (\xdisposition  + 0 * \xstep, \ydisposition - 1 * \ystep ){$f$};
\node[node1] (t3) at (\xdisposition  + 2 * \xstep, \ydisposition - 1 * \ystep ){$i$};

\node[node1] (t11) at (\xdisposition  - 3.5 * \xstep, \ydisposition - 2 * \ystep ){$c$};
\node[node1] (t21) at (\xdisposition  + 0 * \xstep, \ydisposition - 2 * \ystep ){$g$};
\node[node1] (t31) at (\xdisposition  + 1.7 * \xstep, \ydisposition - 2 * \ystep ){$j$};
\node[node1] (t32) at (\xdisposition  + 2.3 * \xstep, \ydisposition - 2 * \ystep ){$k$};

\node[node1] (t111) at (\xdisposition  - 3.8 * \xstep, \ydisposition - 3 * \ystep ){$d$};
\node[node1] (t112) at (\xdisposition  - 3.2 * \xstep, \ydisposition - 3 * \ystep ){$e$};
\node[node1] (t211) at (\xdisposition  + 0 * \xstep, \ydisposition - 3 * \ystep ){$h$};

\draw[->, very thick] (t) to node[left, label={[align=right, yshift=-6, xshift=-20]
$\textcolor{\myred}{\Read_{p_1}^4} \leftarrow \{ \textcolor{\myred}{\Write_{p_1}^3} \}$
}] {} (t1);
\draw[->, very thick] (t) to node[left, label={[align=right, yshift=-18, xshift=-23]
$\textcolor{\myred}{\Read_{p_1}^4} \leftarrow \{ \textcolor{\myred}{\Write_{p_2}^1} \}$
}] {} (t2);
\draw[->, very thick] (t) to node[left, label={[align=right, yshift=-12, xshift=20]
$\textcolor{\myred}{\Read_{p_2}^2} \leftarrow \{ \textcolor{\myred}{\Write_{p_1}^3},\:$ \\ $\textcolor{\myred}{\Write_{p_2}^1} \}$
}] {} (t3);

\draw[->, very thick] (t1) to node[left, label={[align=right, yshift=-22, xshift=-26]
$\textcolor{\myred}{\Read_{p_2}^2} \leftarrow \{ \textcolor{\myred}{\Write_{p_1}^3},\:$ \\ $\textcolor{\myred}{\Write_{p_2}^1} \}$
}] {} (t11);
\draw[->, very thick] (t2) to node[left, label={[align=right, yshift=-22, xshift=-26]
$\textcolor{\myred}{\Read_{p_2}^2} \leftarrow \{ \textcolor{\myred}{\Write_{p_1}^3},\:$ \\ $\textcolor{\myred}{\Write_{p_2}^1} \}$
}] {} (t21);
\draw[->, very thick] (t3) to node[left, label={[align=right, yshift=-12, xshift=-24]
$\textcolor{\myblue}{\Read_{p_2}^4} \leftarrow \{ \textcolor{\myblue}{\Write_{p_1}^1} \}$
}] {} (t31);
\draw[->, very thick] (t3) to node[left, label={[align=right, yshift=-22, xshift=32]
$\textcolor{\myblue}{\Read_{p_2}^4} \leftarrow \{ \textcolor{\myblue}{\Write_{p_1}^2},\:$ \\ $\textcolor{\myblue}{\Write_{p_2}^3} \}$
}] {} (t32);

\draw[->, very thick] (t11) to node[label={[align=right, yshift=-12, xshift=28]
$\textcolor{\myblue}{\Read_{p_2}^4} \leftarrow \{ \textcolor{\myblue}{\Write_{p_1}^1} \}$
}] {} (t112);
\draw[->, very thick] (t11) to node[left, label={[align=right, yshift=-12, xshift=-24]
$\textcolor{\myblue}{\Read_{p_2}^4} \leftarrow \{ \textcolor{\myblue}{\Write_{p_1}^2},\:$ \\ $\textcolor{\myblue}{\Write_{p_2}^3} \}$
}] {} (t111);
\draw[->, very thick] (t21) to node[right, label={[align=right, yshift=-22, xshift=26]
$\textcolor{\myblue}{\Read_{p_2}^4} \leftarrow \{ \textcolor{\myblue}{\Write_{p_1}^2},\:$ \\ $\textcolor{\myblue}{\Write_{p_2}^3} \}$
}] {} (t211);

\end{tikzpicture}
\caption{The $\VCDPOR$ exploration tree.}
\label{subfig:vcdpor_recursion}
\end{subfigure}
\caption{A program with two threads (\protect\cref{subfig:vcdpor_program}) and the corresponding $\VCDPOR$ exploration (\protect\cref{subfig:vcdpor_recursion}).}
\label{fig:vcdpor_example}
\end{figure}
\cref{fig:vcdpor_example} illustrates the main aspects of $\VCDPOR$ (Algorithms~\ref{algo:vcdpor},~\ref{algo:mutate_root},~and~\ref{algo:mutate_leaf})
on a small example.
%a
We start with an empty annotated partial order $\AnnotatedPO$ and a CHB map $\NegativeAnnotation$ that is empty
(i.e., $\NegativeAnnotation(\Read)(\Process)=\{ \bbot \}$ for every read event $\Read\in \SysReads$ and thread $\Process\in \System$).
The initial trace obtained in~\cref{line:vcdpor_realize}~of~\cref{algo:vcdpor} is $\Trace' = \varepsilon$. Its write-extension
$\Trace$ in~\cref{line:vcdpor_wextend} contains the three writes of~$\Process_{1}$ and the first write of~$\Process_{2}$. Next,
\cref{line:vcdpor_extend} returns an annotated partial order $\AnnotatedPOQ_a$ that corresponds to the thread order $\TO\Project \Events{\Trace}$.
In $\Trace$, the root thread~$\Process_{1}$ has an enabled event (which is always a read), so $\MutateRoot$ (\cref{algo:mutate_root}) is called
on $\AnnotatedPOQ_a$ and the (empty) CHB map $\NegativeAnnotation_{\AnnotatedPOQ_a}$.
\hfill ($\dagger$) \\

The enabled read in~\cref{line:mutate_root_enabled} is $\textcolor{\myred}{\Read_{\Process_1}^4}$, its local candidate write
(computed in~\cref{line:mutate_root_local_candidate}) is $\textcolor{\myred}{\Write_{\Process_1}^3}$ and its
remote candidate write (computed in~\cref{line:mutate_root_remote_candidate}) is $\textcolor{\myred}{\Write_{p_2}^1}$.
This holds because $\NegativeAnnotation_{\AnnotatedPOQ_a}(\Read)(\Process_1)=\{ (\Process_1, \bbot), (\Process_2, \bbot) \}$,
which allows any write event to be observed.
For the local (\cref{line:mutate_root_for_local_remote}, $i=1$) candidate $\textcolor{\myred}{\Write_{\Process_1}^3}$, first the side
function is updated with $\{ ( \textcolor{\myred}{\Read_{\Process_1}^4}, 1) \}$ in~\cref{line:mutate_root_side_fn}.
Then in~\cref{line:mutate_root_values}, the only considered value is $1$. Thus, in~\cref{line:mutate_root_value_fn} the
value function is updated with $\{ ( \textcolor{\myred}{\Read_{p_1}^4}, 1) \}$, and in~\cref{line:mutate_root_good_fn} the
good-writes function is updated with $\{ ( \textcolor{\myred}{\Read_{\Process_1}^4}, \{  \textcolor{\myred}{\Write_{\Process_1}^3} \} ) \}$.
Then, such an update is successfully realized in~\cref{line:mutate_root_extend} by $\ExtendPO$, where the partial order is extended
with $\textcolor{\myred}{\Read_{\Process_1}^4}$ and afterwards it is closed using algorithm $\ClosureAlgo$~(\cref{algo:closure}).
Thus $\VCDPOR$~(\cref{algo:vcdpor}) is recursively called on the corresponding annotated partial order $\AnnotatedPOK_a$
(and the empty CHB map $\NegativeAnnotation_{\AnnotatedPOQ_a}$), and we proceed to the child $b$ of $a$.

%b
In node $b$, no new event is added during the write-extension (\cref{line:vcdpor_wextend}), as
$\textcolor{\myred}{\Read_{p_1}^4}$ is the last event of~$\Process_{1}$, and in~\cref{line:vcdpor_extend} we obtain $\AnnotatedPOQ_b$.
The only thread with an enabled read event is~$\Process_{2}$, so $\MutateLeaf$ (\cref{algo:mutate_leaf}) is called on $\AnnotatedPOQ_b$
and $\Process_{2}$ (and empty CHB map $\NegativeAnnotation_{\AnnotatedPOQ_b}$).
The enabled read $\textcolor{\myred}{\Read_{\Process_2}^2}$ has candidate writes
$\textcolor{\myred}{\Write_{p_1}^3}$ and $\textcolor{\myred}{\Write_{p_2}^1}$, both of which write the same value
(c.f.~\cref{line:mutate_leaf_values}), and hence the algorithm will allow $\textcolor{\myred}{\Read_{\Process_2}^2}$ to observe either.
This is an example of the value-centric gains we obtain in this work. In~\cref{line:mutate_leaf_value_fn} the value function is
updated with $\{ ( \textcolor{\myred}{\Read_{\Process_2}^2}, 1) \}$, and in~\cref{line:mutate_leaf_good_fn} the good-writes function
is updated with $\{ ( \textcolor{\myred}{\Read_{\Process_2}^2}, \{  \textcolor{\myred}{\Write_{\Process_1}^3}, \textcolor{\myred}{\Write_{\Process_2}^1} \} ) \}$.
The realization of this update happens in~\cref{line:mutate_leaf_extend} by $\ExtendPO$, where the partial order is extended with
$\textcolor{\myred}{\Read_{\Process_2}^2}$ and then closed using algorithm $\ClosureAlgo$~(\cref{algo:closure}).
One annotated partial order $\AnnotatedPOK_b$ is returned and it is the argument of the further $\VCDPOR$ call
(with an empty CHB map $\NegativeAnnotation_{\AnnotatedPOQ_b}$), we proceed to the child $c$ of $b$.
In node $c$, the write-extension adds the event $\textcolor{\myblue}{\Write_{\Process_2}^3}$,
which, in similar steps as before, will lead to nodes $d$ and $e$.

%f,g,h
Next, the recursion backtracks to the call of $\MutateRoot$ in the node $a$ ($\dagger$).
The second iteration ($i = 2$) of the loop in~\cref{line:mutate_root_for_local_remote} proceeds, where the remote candidate write
$\textcolor{\myred}{\Write_{\Process_2}^1}$ is considered for $\textcolor{\myred}{\Read_{\Process_1}^4}$.
In a similar fashion, the descendants $f$, $g$, and $h$ are created and $h$ concludes with a maximal trace.

%i,j,k
Finally, the recursion backtracks to the node $a$ again, where $\MutateRoot$ ($\dagger$) concludes with updating the CHB
map as follows: $\NegativeAnnotation_{\AnnotatedPOQ_a}(\textcolor{\myred}{\Read_{\Process_1}^4})=\{ (\Process_{1}, \bot),\; (\Process_{2}, \bot) \}$.
The control-flow comes back to the initial $\VCDPOR$ call (from~\cref{line:vcdpor_call_root}), where the annotated partial
order $\AnnotatedPOQ_a$ with the (now updated) CHB map $\NegativeAnnotation_{\AnnotatedPOQ_a}$ is considered.
The thread $\Process_{2}$ has an enabled read ($\textcolor{\myred}{\Read_{\Process_2}^2}$) in $\Trace$, hence
$\MutateLeaf$ is called on $\AnnotatedPOQ_a$, $\NegativeAnnotation_{\AnnotatedPOQ_a}$, and $\Process_{2}$.
Eventually, the descendants $i$, $j$, and $k$ are created and the exploration concludes.
Note that in each of $i$, $j$, $k$, the thread $\Process_{1}$ has an enabled read $\textcolor{\myred}{\Read_{\Process_1}^4}$.
However, note that $\GuardingRead_{\Trace}(\textcolor{\myred}{\Write_{\Process_1}^3})=\GuardingRead_{\Trace}(\textcolor{\myred}{\Write_{\Process_2}^1})=\bot$
and in all those nodes we have 
$\NegativeAnnotation(\textcolor{\myred}{\Read_{\Process_1}^4})(\Process_1)=\{ (\Process_1, \bot), (\Process_2, \bot) \}$, and thus
$\textcolor{\myred}{\Write_{p\Process1}^3}$ and $\textcolor{\myred}{\Write_{\Process_2}^1}$ are never considered as candidate writes for $\textcolor{\myred}{\Read_{\Process_1}^4}$.
This illustrates how $\VCDPOR$ never explores the same class of $\VHB$ twice.

\section{Experiments}\label{sec:experiments}

We have seen in \cref{them:comparison} that $\VHB$ is a coarse partitioning that can be explored efficiently by $\VCDPOR$.
In this section we present an experimental evaluation of $\VCDPOR$ on various classes of concurrent benchmarks, to assess
\begin{compactenum}
\item the reduction of the trace-space partitioning achieved by $\VHB$, and
\item the efficiency with which this partitioning is explored by $\VCDPOR$.
\end{compactenum}

\smallskip\noindent{\bf Implementation and experiments.}
To address the above questions, we have made a prototype implementation of $\VCDPOR$ in the stateless model checker Nidhugg~\cite{Abdulla2015},
which works on LLVM IR\footnote{Code accessible at \url{https://github.com/ViToSVK/nidhugg/tree/valuecentric_stable}}.
We have tested $\VCDPOR$ on benchmarks coming in four classes:
\begin{compactenum}
\item The TACAS  Software Verification Competition (SV-COMP).
\item Mutual-exclusion algorithms from the literature.
\item Multi-threaded dynamic-programming algorithms that use memoization.
\item Individual benchmarks that exercise various concurrency patterns.
\end{compactenum}
Each benchmark comes with a scaling parameter, which is either the number of threads, or an unroll bound on all loops of the benchmark
(often the unroll bound also controls the number of threads that are spawned.)
We have compared our algorithm with three other state-of-the-art DPOR algorithms that are implemented in Nidhugg, namely
$\Source$~\cite{Abdulla14}, $\Optimal$~\cite{Abdulla14} and $\OptimalObs$ (``optimal with observers'')~\cite{Aronis18},
as well as our own implementation of $\DCDPOR$~\cite{Chalupa17}.
For our experiments, we have used a Linux machine with Intel(R) Xeon(R) CPU E5-1650 v3 @ 3.50GHz and 128GB of RAM.
We have run Nidhugg with Clang and LLVM version 3.8.
In all cases, we report the number of maximal traces and the total running time of each algorithm, subject to a timeout of 4~hours, indicated by ``-''.

\smallskip\noindent{\bf Implementation details.}
Here we clarify some details regarding our implementation.
\begin{compactenum}
\item In our theory so far, we have neglected dynamic thread creation for simplicity of presentation.
In practice, all our benchmarks spawn threads dynamically.
This situation is handled straightforwardly, by including in our partial orders the orderings that are naturally induced by spawn and join events.
\item The root thread is chosen as the first thread that is spawned from the main thread.
We make this choice instead of the main thread as in many benchmarks, the main thread mainly spawns worker threads and performs only a few concurrent operations.
\item In our presentation of $\ExtendPO(\AnnotatedPO, X', \Value', \SideAnnotation', \GoodWrites')$,
given $X'\setminus X=\{\Event \}$ such that $\Event$ belongs to a leaf thread, we consider all possible orderings of $\Event$ with conflicting
events from all leaf threads. 
In our implementation, we relax this in two ways. 
Given a write event $\Event_{\Write}$, we say it is
\emph{never-good} if it does not belong to $\GoodWrites'(\Read)$ for any read event $\Read$. 
Further, given $\Event_{\Write}$ and an annotated
partial order $\AnnotatedPOK$, we say that $\Event_{\Write}$ is \emph{unobservable} in $\AnnotatedPOK$, if for every linearization of $\AnnotatedPOK$ no read event can observe $\Event_{\Write}$. 
Given two unordered conflicting write events from leaf threads, we do not order them if
(i) both are never-good, or (ii) at least one is unobservable.
\end{compactenum}

\begin{figure}
\makebox[\linewidth][c]{
\begin{subfigure}[t]{0.5\textwidth}
\includegraphics[scale=0.33]{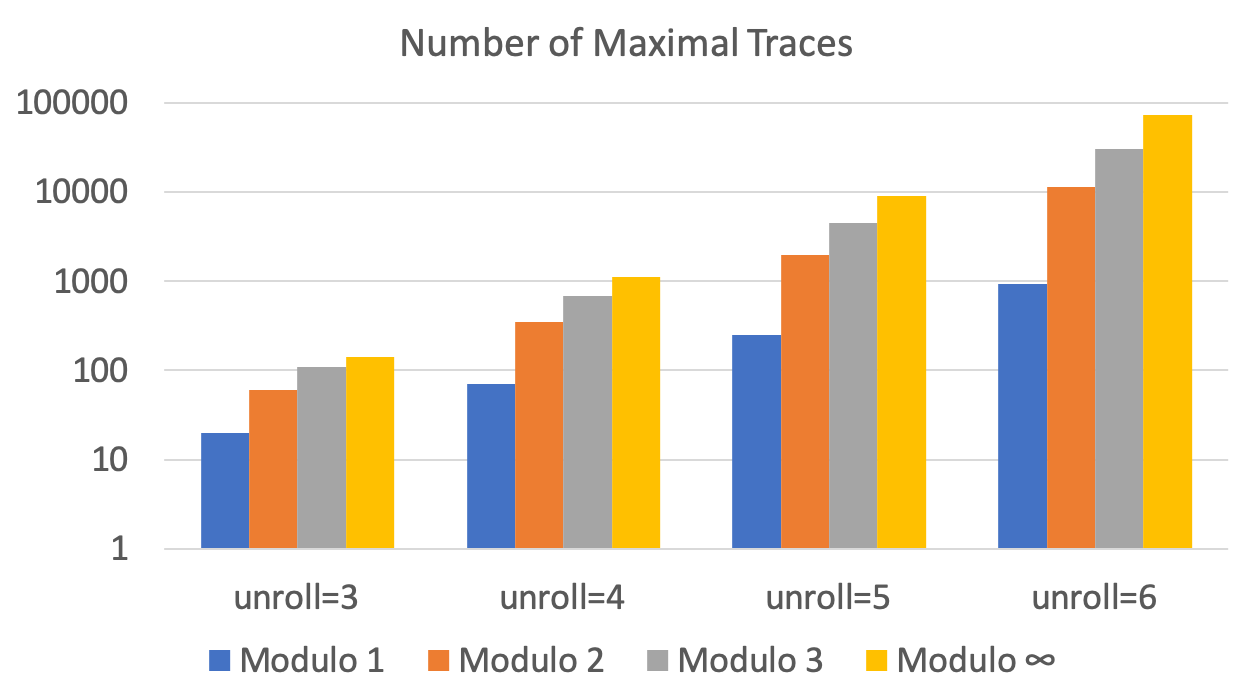}
\caption{Number of traces.}
\label{subfig:fibbenchtraces}
\end{subfigure}
\qquad
\begin{subfigure}[t]{0.5\textwidth}
\includegraphics[scale=0.33]{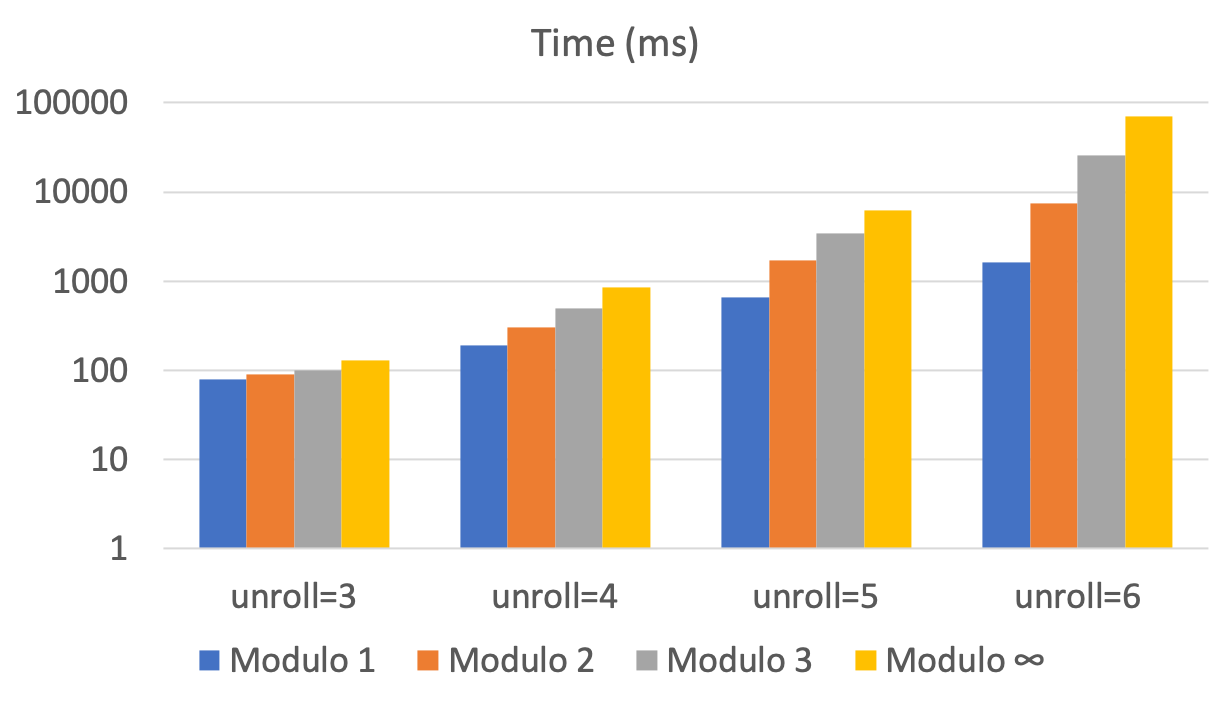}
\caption{Running time.}
\label{subfig:fibbenchtime}
\end{subfigure}
}
\caption{Number of traces (\protect\subref{subfig:fibbenchtraces}) and running time (\protect\subref{subfig:fibbenchtime}) on variants of the \texttt{fib\_bench} benchmark.}
\label{fig:fib_bench}
\end{figure}

\smallskip\noindent{\bf Value-centric gains.}
As a preliminary experimental step, we explore the gains of our value-centric technique on small variants of the simple benchmark \texttt{fib\_bench} from SV-COMP.
This benchmark consists of a main thread and two worker threads, and two global variables $x$ and $y$.
The first worker thread enters a loop in which it performs the update $x\gets x+y$.
Similarly, the second worker thread enters a loop in which it performs the update $y\gets y+x$.
To explore the sensitivity of our value-centric DPOR to values, we have created three variants \texttt{fib\_bench\_1}, \texttt{fib\_bench\_2}, \texttt{fib\_bench\_3} of the main benchmark.
In variant \texttt{fib\_bench\_i} each worker thread performs the addition modulo $i$.
Hence, the first and the second worker performs the update $x\gets (x+y)\mod i$ and $y\gets (y+x)\mod i$, respectively.
For smaller values of $i$, we expect more write events to write the same value, and thus $\VCDPOR$ to benefit both in terms of the traces explored and the running time.
Although simple, this experiment serves the purpose of quantifying the value-centric gains of $\VCDPOR$ in a controlled benchmark. 
\cref{fig:fib_bench} depicts the obtained results for the three variants of \texttt{fib\_bench}, where $\text{Modulo}=\infty$ represents the original benchmark (i.e., without the modulo operation).
We see that indeed, as $i$ gets smaller, $\VCDPOR$ benefits significantly in both number of traces and running time.
Moreover, this benefit gets amplified with higher unroll bounds.

\begin{table}
\makebox[\linewidth][c]{
\setlength\tabcolsep{2pt}
\scriptsize
\centering
\begin{tabular}{|l ||r r r r r| r r r r r|}
\hline
\textbf{Benchmark} & \multicolumn{5}{c|}{\textbf{Maximal Traces}} & \multicolumn{5}{c|}{\textbf{Time}}  \\
\hline
 & \boldmath{$\VCDPOR$} & \boldmath{$\mathbf{\Source}$} & \boldmath{$\Optimal$} & \boldmath{$\OptimalObs$} & \boldmath{$\DCDPOR$} & \boldmath{$\VCDPOR$} & \boldmath{$\mathbf{\Source}$} & \boldmath{$\Optimal$} & \boldmath{$\OptimalObs$} & \boldmath{$\DCDPOR$} \\
\hline
%parker
parker(6) & \textbf{38670} & 1100917 & 1100917 & 1023567 & 985807 & \textbf{1m29s} & 23m5s & 24m29s & 24m54s & 46m41s \\
parker(7) & \textbf{52465} & 1735432 & 1735432 & 1613807 & 1554237 & \textbf{2m23s} & 41m28s & 44m41s & 45m13s & 1h27m \\
parker(8) & \textbf{68360} & 2576147 & 2576147 & 2395947 & 2307467 & \textbf{3m35s} & 1h9m & 1h15m & 1h17m & 2h29m \\
\hline
%27\_Boop\_simple\_vf\_fa
27\_Boop(6) & \textbf{248212} & 35079696 & 35079696 & 4750426 & 1468774 & \textbf{3m26s} & 2h54m & 2h49m & 26m22s & 12m33s \\
27\_Boop(7) & \textbf{420033} & - & - & 10134616 & 2874202 & \textbf{6m33s} & - & - & 1h0m & 27m21s \\
27\_Boop(8) & \textbf{677870} & - & - & 20003512 & 5268064 & \textbf{11m54s} & - & - & 2h7m & 56m13s \\
\hline
%30\_Function\_Pointer3
30\_Fun\_Point(6) & \textbf{5040} & 665280 & 665280 & 665280 & 665280 & \textbf{5.52s} & 4m2s & 4m14s & 4m36s & 1m34s \\
30\_Fun\_Point(7) & \textbf{40320} & 17297280 & 17297280 & 17297280 & 17297280 & \textbf{57.50s} & 2h7m & 2h15m & 2h29m & 51m46s \\
30\_Fun\_Point(8) & \textbf{362880} & - & - & - & - & \textbf{10m51s} & - & - & - & - \\
\hline
%45\_monabsex1\_vs\_true
45\_monabsex(5) & \textbf{600} & 14400 & 14400 & 9745 & 6197 & \textbf{0.44s} & 2.28s & 2.36s & 1.86s & 1.50s \\
45\_monabsex(6) & \textbf{13152} & 518400 & 518400 & 291546 & 180126 & \textbf{14.93s} & 1m41s & 1m41s & 1m5s & 1m0s \\
45\_monabsex(7) & \textbf{423360} & 25401600 & 25401600 & 11710405 & 7073803 & \textbf{13m30s} & 1h43m & 1h40m & 51m57s & 56m16s \\
\hline
%46\_monabsex2\_vs\_true
46\_monabsex(5) & \textbf{1064} & 14400 & 14400 & 5566 & 2653 & \textbf{0.32s} & 1.98s & 2.02s & 0.87s & 0.51s \\
46\_monabsex(6) & \textbf{21371} & 518400 & 518400 & 157717 & 62864 & \textbf{6.26s} & 1m29s & 1m23s & 28.04s & 10.33s \\
46\_monabsex(7) & \textbf{621948} & 25401600 & 25401600 & 6053748 & 2057588 & \textbf{4m9s} & 1h38m & 1h23m & 21m3s & 7m24s \\
\hline
%fk2012\_1p2c\_true-unr
fk2012\_true(3) & \textbf{12400} & 42144 & 42144 & 42144 & 33886 & \textbf{5.55s} & 9.34s & 10.59s & 11.08s & 13.13s \\
fk2012\_true(4) & \textbf{252586} & 1217826 & 1217826 & 1217826 & 888404 & \textbf{2m3s} & 5m6s & 5m35s & 6m11s & 6m30s \\
fk2012\_true(5) & \textbf{3757292} & 24580886 & 24580886 & 24580886 & 16494444 & \textbf{37m3s} & 2h0m & 2h12m & 2h26m & 2h28m \\
\hline
%fkp2013\_true-unreach
fkp2013\_true(5) & \textbf{17751} & 86400 & 86400 & 48591 & 25626 & \textbf{3.75s} & 16.40s & 15.20s & 9.70s & 4.90s \\
fkp2013\_true(6) & \textbf{513977} & 3628800 & 3628800 & 1672915 & 786499 & \textbf{2m18s} & 14m27s & 12m55s & 6m34s & 3m18s \\
fkp2013\_true(7) & \textbf{20043857} & - & - & - & 32244120 & \textbf{2h16m} & - & - & - & 3h11m \\
\hline
%nondet-array\_true-un
nondet-array(4) & \textbf{404} & 2616 & 2616 & 688 & 592 & \textbf{0.13s} & 0.88s & 0.80s & 0.27s & 0.20s \\
nondet-array(5) & \textbf{10804} & 128760 & 128760 & 18665 & 15449 & \textbf{3.11s} & 46.23s & 46.99s & 8.66s & 4.26s \\
nondet-array(6) & \textbf{430004} & 9854640 & 9854640 & 711276 & 571476 & \textbf{2m36s} & 1h15m & 1h14m & 7m45s & 3m30s \\
\hline
%pthread-demo-datarac
pthread-de(7) & \textbf{327782} & 4027216 & 4027216 & 4027216 & 829168 & \textbf{1m10s} & 12m9s & 13m32s & 17m36s & 2m12s \\
pthread-de(8) & \textbf{2457752} & 43976774 & 43976774 & 43976774 & 6984234 & \textbf{10m29s} & 2h29m & 2h46m & 3h24m & 22m1s \\
pthread-de(9) & \textbf{18568126} & - & - & - & 59287740 & \textbf{1h33m} & - & - & - & 3h37m \\
\hline
%reorder\_5
reorder\_5(5) & \textbf{1016} & 1755360 & 1755360 & 68206 & 4978 & \textbf{0.21s} & 9m0s & 9m22s & 26.45s & 0.34s \\
reorder\_5(8) & \textbf{247684} & - & - & - & 437725 & 1m47s & - & - & - & \textbf{1m29s} \\
reorder\_5(9) & \textbf{1644716} & - & - & - & 1792290 & 22m53s & - & - & - & \textbf{12m38s} \\
\hline
%scull\_true-unreach-c
scull\_true(3) & \textbf{3426} & 617706 & 617706 & 436413 & 172931 & \textbf{19.77s} & 9m46s & 10m22s & 9m7s & 4m46s \\
scull\_true(4) & \textbf{8990} & 2732933 & 2732933 & 1840022 & 656100 & \textbf{1m7s} & 51m37s & 54m33s & 46m12s & 25m56s \\
scull\_true(5) & \textbf{19881} & 9488043 & 9488043 & 6070688 & 1988798 & \textbf{3m8s} & 3h29m & 3h42m & 2h54m & 1h47m \\
\hline
%sigma\_false
sigma\_false(7) & \textbf{12509} & 135135 & 135135 & 30952 & 30952 & \textbf{10.52s} & 55.87s & 1m0s & 18.65s & 17.87s \\
sigma\_false(8) & \textbf{133736} & 2027025 & 2027025 & 325488 & 325488 & \textbf{2m4s} & 16m21s & 18m45s & 4m12s & 3m44s \\
sigma\_false(9) & \textbf{1625040} & - & - & 3845724 & 3845724 & \textbf{31m53s} & - & - & 1h6m & 53m28s \\
\hline
%check\_bad\_array
check\_bad\_arr(5) & \textbf{4046} & 12838 & 12838 & 10989 & 6689 & 2.74s & 6.98s & 6.83s & 6.49s & \textbf{2.72s} \\
check\_bad\_arr(6) & \textbf{87473} & 357368 & 357368 & 307097 & 187377 & 1m47s & 5m21s & 4m36s & 4m24s & \textbf{1m33s} \\
check\_bad\_arr(7) & \textbf{1856332} & 8245810 & 8245810 & 6943293 & 4069592 & 2h11m & 3h9m & 2h19m & 2h12m & \textbf{1h7m} \\
\hline
%32\_pthread5\_vs\_false
32\_pthread5(1) & \textbf{20} & 24 & 24 & 24 & \textbf{20} & \textbf{0.05s} & \textbf{0.04s} & \textbf{0.04s} & 0.06s & 0.06s \\
32\_pthread5(2) & \textbf{1470} & 1890 & 1890 & 1806 & \textbf{1470} & 0.67s & \textbf{0.38s} & 0.45s & 0.54s & 0.67s \\
32\_pthread5(3) & \textbf{226800} & 302400 & 302400 & 280800 & \textbf{226800} & 2m30s & \textbf{1m14s} & 1m17s & 1m17s & 2m21s \\
\hline
%fkp2014\_true-unreach
fkp2014\_true(2) & \textbf{16} & \textbf{16} & \textbf{16} & \textbf{16} & \textbf{16} & \textbf{0.05s} & \textbf{0.05s} & \textbf{0.04s} & \textbf{0.04s} & \textbf{0.05s} \\
fkp2014\_true(3) & \textbf{1098} & \textbf{1098} & \textbf{1098} & \textbf{1098} & \textbf{1098} & 0.86s & \textbf{0.19s} & \textbf{0.20s} & 0.21s & 0.72s \\
fkp2014\_true(4) & \textbf{207024} & \textbf{207024} & \textbf{207024} & \textbf{207024} & \textbf{207024} & 3m40s & \textbf{39.84s} & 41.70s & 44.67s & 3m15s \\
\hline
%singleton\_with-unini
singleton(8) & \textbf{2} & 40320 & 40320 & 8 & 8 & 0.06s & 14.92s & 15.24s & \textbf{0.04s} & 0.09s \\
singleton(9) & \textbf{2} & 362880 & 362880 & 9 & 9 & 0.09s & 2m31s & 2m32s & \textbf{0.05s} & 0.15s \\
singleton(10) & \textbf{2} & 3628800 & 3628800 & 10 & 10 & 0.16s & 27m33s & 28m9s & \textbf{0.05s} & 0.19s \\
\hline
%stack\_true-unreach-c
stack\_true(9) & \textbf{48620} & \textbf{48620} & \textbf{48620} & \textbf{48620} & \textbf{48620} & 2m24s & \textbf{37.55s} & 38.47s & 40.06s & 2m23s \\
stack\_true(10) & \textbf{184756} & \textbf{184756} & \textbf{184756} & \textbf{184756} & \textbf{184756} & 11m58s & \textbf{2m31s} & 2m40s & 2m50s & 11m1s \\
stack\_true(11) & \textbf{705432} & \textbf{705432} & \textbf{705432} & \textbf{705432} & \textbf{705432} & 58m34s & \textbf{10m32s} & 11m8s & 11m48s & 54m42s \\
\hline
%48\_ticket\_lock\_low\_c
48\_ticket\_lock(2) & \textbf{6} & \textbf{6} & \textbf{6} & \textbf{6} & \textbf{6} & 0.05s & \textbf{0.03s} & \textbf{0.04s} & \textbf{0.04s} & 0.05s \\
48\_ticket\_lock(3) & \textbf{204} & \textbf{204} & \textbf{204} & \textbf{204} & \textbf{204} & 0.25s & \textbf{0.08s} & 0.10s & \textbf{0.09s} & 0.34s \\
48\_ticket\_lock(4) & \textbf{41400} & \textbf{41400} & \textbf{41400} & \textbf{41400} & \textbf{41400} & 55.67s & \textbf{13.88s} & 15.27s & 16.56s & 52.57s \\
\hline
\end{tabular}
}
\caption{Experimental comparison on SV-COMP benchmarks.}
\label{tab:experiments_svcomp}
\end{table}

\smallskip\noindent{\bf Benchmarks from SV-COMP.}
Here we present experiments on benchmarks from SV-COMP (along the industrial benchmark \texttt{parker}) (\cref{tab:experiments_svcomp}).
We have replaced all assertions with simple read events.
This way we ensure a fair comparison among all algorithms in exploring the trace-space of each benchmark,
as an assertion violation would halt the search.
We have verified that all assertion violations present in these benchmarks are detected by all algorithms before this modification.
The scaling parameter in each case controls the size of the input benchmark in terms of loop unrolls.

\begin{table}
\makebox[\linewidth][c]{
\setlength\tabcolsep{2pt}
\scriptsize
\centering
\begin{tabular}{|l ||r r r r r| r r r r r|}
\hline
\textbf{Benchmark} & \multicolumn{5}{c|}{\textbf{Maximal Traces}} & \multicolumn{5}{c|}{\textbf{Time}}  \\
\hline
 & \boldmath{$\VCDPOR$} & \boldmath{$\mathbf{\Source}$} & \boldmath{$\Optimal$} & \boldmath{$\OptimalObs$} & \boldmath{$\DCDPOR$} & \boldmath{$\VCDPOR$} & \boldmath{$\mathbf{\Source}$} & \boldmath{$\Optimal$} & \boldmath{$\OptimalObs$} & \boldmath{$\DCDPOR$} \\
\hline
\hline
%rod\_cut\_td\_2thr
rod\_cut\_td3(7) & \textbf{4324} & 102128 & 102128 & 51974 & 23143 & \textbf{33.23s} & 4m14s & 7m43s & 3m47s & 1m28s \\
rod\_cut\_td3(8) & \textbf{14744} & 508646 & 508646 & 257707 & 114624 & \textbf{3m4s} & 27m32s & 57m42s & 28m2s & 12m9s \\
rod\_cut\_td3(9) & \textbf{50320} & 2574752 & - & 1300067 & 577682 & \textbf{17m24s} & 3h0m & - & 3h27m & 1h39m \\
\hline
%rod\_cut\_td\_3thr
rod\_cut\_td4(3) & \textbf{1478} & 91592 & 91592 & 17451 & 4810 & \textbf{0.97s} & 1m29s & 1m49s & 21.79s & 1.46s \\
rod\_cut\_td4(4) & \textbf{21358} & 2459640 & 2459640 & 359609 & 85203 & \textbf{28.55s} & 1h6m & 1h33m & 14m2s & 57.94s \\
rod\_cut\_td4(5) & \textbf{433371} & - & - & - & 2551714 & \textbf{20m57s} & - & - & - & 1h22m \\
\hline
%rod\_cut\_bu
rod\_cut\_bu3(6) & \textbf{19933} & 183516 & 183516 & 147746 & 71670 & \textbf{56.15s} & 2m23s & 3m59s & 3m26s & 2m3s \\
rod\_cut\_bu3(7) & \textbf{99622} & 1101084 & 1101084 & 886466 & 429494 & \textbf{8m6s} & 17m52s & 33m33s & 29m19s & 21m40s \\
rod\_cut\_bu3(8) & \textbf{498061} & 6606492 & - & - & 2574902 & \textbf{1h6m} & 2h12m & - & - & 3h30m \\
\hline
%rod\_cut\_bu_3th
rod\_cut\_bu4(2) & \textbf{1901} & 33912 & 33912 & 14667 & 5377 & \textbf{0.70s} & 11.76s & 13.36s & 6.75s & 1.15s \\
rod\_cut\_bu4(3) & \textbf{74541} & 2246424 & 2246424 & 913299 & 292633 & \textbf{46.95s} & 18m50s & 24m12s & 11m37s & 1m52s \\
rod\_cut\_bu4(4) & \textbf{3007476} & - & - & - & - & \textbf{1h17m} & - & - & - & - \\
\hline
%subs\_inc\_long\_bu\_2th
lis\_bu3(8) & \textbf{118812} & 1744064 & 1744064 & 475986 & 358347 & \textbf{4m24s} & 33m22s & 1h0m & 18m27s & 7m24s \\
lis\_bu3(9) & \textbf{368400} & 7001792 & - & 1439130 & 1092553 & \textbf{15m49s} & 2h38m & - & 1h10m & 27m6s \\
lis\_bu3(10) & \textbf{3133740} & - & - & - & - & \textbf{3h59m} & - & - & - & - \\
\hline
%subs\_inc\_long\_bu\_3th
lis\_bu4(2) & \textbf{1137} & 18522 & 18522 & 7936 & 2828 & \textbf{0.45s} & 8.45s & 9.41s & 4.42s & 0.52s \\
lis\_bu4(3) & \textbf{29931} & 1024002 & 1024002 & 364560 & 101766 & \textbf{12.70s} & 10m36s & 12m49s & 5m0s & 19.41s \\
lis\_bu4(4) & \textbf{1222278} & - & - & - & 5679067 & \textbf{16m34s} & - & - & - & 37m20s \\
\hline
coin\_all\_td3(9) & \textbf{4015} & 566214 & 566214 & 23308 & 8071 & 22.23s & 34m25s & 1h20m & 2m36s & \textbf{21.13s} \\
coin\_all\_td3(10) & \textbf{9052} & 2444048 & - & 59168 & 19829 & \textbf{1m2s} & 2h56m & - & 8m20s & 1m3s \\
coin\_all\_td3(19) & \textbf{637859} & - & - & - & 1528102 & \textbf{2h43m} & - & - & - & 3h5m \\
\hline
%coin\_change\_all\_td\_3
coin\_all\_td4(2) & \textbf{5938} & 6406248 & - & 74153 & 20668 & \textbf{4.86s} & 3h46m & - & 3m27s & 6.47s \\
coin\_all\_td4(3) & \textbf{68966} & - & - & 1549115 & 319142 & \textbf{1m36s} & - & - & 2h15m & 2m34s \\
coin\_all\_td4(5) & \textbf{379086} & - & - & - & 2857926 & \textbf{16m12s} & - & - & - & 36m32s \\
\hline
%coin\_change\_min\_td\_2
coin\_min\_td3(8) & \textbf{46535} & 1902262 & 1902262 & 981936 & 382275 & \textbf{3m0s} & 1h13m & 2h12m & 1h12m & 14m0s \\
coin\_min\_td3(9) & \textbf{154663} & - & - & - & 1634899 & \textbf{11m36s} & - & - & - & 1h8m \\
coin\_min\_td3(11) & \textbf{1312252} & - & - & - & - & \textbf{2h4m} & - & - & - & - \\
\hline
%coin\_change\_min\_td\_3
coin\_min\_td4(4) & \textbf{9912} & 1470312 & 1470312 & 208367 & 46634 & \textbf{30.52s} & 36m17s & 51m36s & 7m6s & 47.93s \\
coin\_min\_td4(5) & \textbf{102154} & - & - & 3534815 & 718883 & \textbf{6m7s} & - & - & 2h59m & 14m59s \\
coin\_min\_td4(6) & \textbf{1490420} & - & - & - & - & \textbf{1h52m} & - & - & - & - \\
\hline
%bin\_noconsec\_td
bin\_nocon\_td3(7) & \textbf{13202} & 1664672 & 1664672 & 471151 & 121350 & \textbf{29.57s} & 48m34s & 1h26m & 26m11s & 2m4s \\
bin\_nocon\_td3(8) & \textbf{44802} & - & - & 2825725 & 603668 & \textbf{1m54s} & - & - & 3h17m & 12m32s \\
bin\_nocon\_td3(11) & \textbf{922114} & - & - & - & - & \textbf{1h0m} & - & - & - & - \\
\hline
%bin\_noconsec\_bu2
bin\_nocon\_bu3(6) & \textbf{52500} & 773122 & 773122 & 115625 & 75000 & 1m15s & 12m37s & 19m44s & 3m5s & \textbf{1m8s} \\
bin\_nocon\_bu3(7) & \textbf{262500} & 5411854 & 5411854 & 578125 & 375000 & 7m27s & 1h45m & 2h52m & 19m54s & \textbf{6m50s} \\
bin\_nocon\_bu3(8) & \textbf{1312500} & - & - & 2890625 & 1875000 & 45m2s & - & - & 2h1m & \textbf{41m9s} \\
\hline
\end{tabular}
}
\caption{Experimental comparison on dynamic-programming benchmarks.}
\label{tab:experiments_dp}
\end{table}

\smallskip\noindent{\bf Dynamic-programming benchmarks.}
Here we present experiments on various multi-threaded dynamic-programming algorithms (\cref{tab:experiments_dp}).
For efficiency, these algorithms use memoization to avoid recomputing instances that correspond to the same sub-problem.
The benchmarks consist of three or four threads.
In each case, all-but-one threads are performing the dynamic programming computation, and one thread reads a flag signaling that the computation is finished, as well as the result of the computation.
Each benchmark name contains either the substring ``td'' or the substring``bu'', denoting that the dynamic programming table is computed top-down or bottom-up, respectively.
The scaling parameter of each benchmark controls the different sizes of the input problem.
The dynamic programming problems we use as benchmarks are the following.
\begin{compactitem}
\item \texttt{rod\_cut} computes, given one rod of a given length and prices for rods of shorter lengths,
the maximum profit achievable by cutting the given rod.
\item \texttt{lis} computes, given an array of non-repeating integers, the length of the
longest increasing subsequence (not necessarily contiguous) in the array.
\item \texttt{coin\_all} computes, given an unlimited supply of coins of given denominations,
the total number of distinct ways to get a desired change.
\item \texttt{coin\_min} computes, given an unlimited supply of coins of given denominations,
the minimum number of coins required to get a desired change.
\item \texttt{bin\_nocon} computes the number of binary strings of a given length that
do not contain the substring '11'.
\end{compactitem}

\begin{table}
\makebox[\linewidth][c]{
\setlength\tabcolsep{2pt}
\scriptsize
\centering
\begin{tabular}{|l ||r r r r r| r r r r r|}
\hline
\textbf{Benchmark} & \multicolumn{5}{c|}{\textbf{Maximal Traces}} & \multicolumn{5}{c|}{\textbf{Time}}  \\
\hline
 & \boldmath{$\VCDPOR$} & \boldmath{$\mathbf{\Source}$} & \boldmath{$\Optimal$} & \boldmath{$\OptimalObs$} & \boldmath{$\DCDPOR$} & \boldmath{$\VCDPOR$} & \boldmath{$\mathbf{\Source}$} & \boldmath{$\Optimal$} & \boldmath{$\OptimalObs$} & \boldmath{$\DCDPOR$} \\
\hline
\hline
%tsay
tsay(2) & \textbf{2488} & 7469 & 7469 & 7469 & 7469 & \textbf{0.81s} & 2.46s & 2.76s & 2.99s & 1.82s \\
tsay(3) & \textbf{241822} & 1414576 & 1414576 & 1414576 & 1414576 & \textbf{1m38s} & 10m2s & 10m54s & 12m1s & 7m42s \\
tsay(4) & \textbf{24609389} & - & - & - & - & \textbf{3h51m} & - & - & - & - \\
\hline
%peterson\_fischer
peter\_fisch(2) & \textbf{1371} & 4386 & 4386 & 4386 & 4386 & \textbf{0.69s} & 1.56s & 1.61s & 1.73s & 1.16s \\
peter\_fisch(3) & \textbf{70448} & 430004 & 430004 & 430004 & 430004 & \textbf{34.03s} & 2m54s & 3m10s & 3m31s & 2m20s \\
peter\_fisch(4) & \textbf{3747718} & - & - & - & - & \textbf{41m31s} & - & - & - & - \\
\hline
%peterson
peterson(5) & \textbf{86929} & 268706 & 268706 & 268706 & 256457 & \textbf{32.42s} & 49.22s & 54.60s & 1m4s & 1m32s \\
peterson(6) & \textbf{880069} & 3462008 & 3462008 & 3462008 & 3303617 & \textbf{7m10s} & 11m50s & 13m18s & 15m51s & 25m29s \\
peterson(7) & \textbf{9013381} & 45046254 & 45046254 & - & - & \textbf{1h30m} & 2h56m & 3h21m & - & - \\
\hline
%lamport
lamport(2) & \textbf{958} & 3940 & 3940 & 2454 & 1456 & \textbf{0.39s} & 0.75s & 0.77s & 0.59s & 0.45s \\
lamport(3) & \textbf{57436} & 741370 & 741370 & 328764 & 130024 & \textbf{28.14s} & 2m24s & 2m43s & 1m29s & 52.24s \\
lamport(4) & \textbf{3723024} & - & - & - & 13088038 & \textbf{49m40s} & - & - & - & 2h26m \\
\hline
%dekker
dekker(5) & \textbf{89647} & 435245 & 435245 & 435245 & 435245 & \textbf{29.78s} & 1m14s & 1m23s & 1m37s & 2m14s \\
dekker(6) & \textbf{932559} & 6745775 & 6745775 & 6745775 & 6745775 & \textbf{6m44s} & 21m36s & 24m12s & 28m22s & 42m46s \\
dekker(7) & \textbf{9837974} & - & - & - & - & \textbf{1h28m} & - & - & - & - \\
\hline
%X2Tv6
X2Tv6(3) & \textbf{7859} & 20371 & 20371 & 20371 & 20371 & \textbf{3.89s} & 5.35s & 5.68s & 6.58s & 7.69s \\
X2Tv6(4) & \textbf{152999} & 596354 & 596354 & 596354 & 596354 & \textbf{1m38s} & 3m6s & 3m23s & 3m47s & 5m17s \\
X2Tv6(5) & \textbf{3058189} & 17836411 & 17836411 & 17836411 & 17836411 & \textbf{46m41s} & 1h51m & 2h3m & 2h21m & 3h36m \\
\hline
%kessels
kessels(3) & \textbf{8900} & 13856 & 13856 & 13856 & 13856 & \textbf{2.80s} & 5.07s & 5.45s & 5.98s & 3.70s \\
kessels(4) & \textbf{194858} & 323400 & 323400 & 323400 & 323400 & \textbf{1m13s} & 2m19s & 2m30s & 2m48s & 1m41s \\
kessels(5) & \textbf{4379904} & 7763704 & 7763704 & 7763704 & 7763704 & \textbf{35m59s} & 1h8m & 1h13m & 1h22m & 53m50s \\
\hline
%X2Tv7
X2Tv7(9) & \textbf{452142} & 2004774 & 2004774 & 2004774 & 2004774 & \textbf{7m34s} & 24m59s & 27m10s & 29m54s & 13m36s \\
X2Tv7(10) & \textbf{1721564} & 7708671 & 7708671 & 7708671 & 7708671 & \textbf{35m19s} & 1h47m & 1h58m & 2h10m & 1h1m \\
X2Tv7(11) & \textbf{6584004} & - & - & - & - & \textbf{2h37m} & - & - & - & - \\
\hline
%X2Tv2
X2Tv2(2) & \textbf{894} & 1293 & 1293 & 1293 & 1293 & \textbf{0.32s} & 0.46s & 0.46s & 0.51s & 0.50s \\
X2Tv2(3) & \textbf{42141} & 69316 & 69316 & 69316 & 69316 & \textbf{17.73s} & 29.21s & 31.04s & 34.65s & 22.01s \\
X2Tv2(4) & \textbf{1827915} & 3552837 & 3552837 & 3552837 & 3552837 & \textbf{17m21s} & 31m13s & 33m46s & 37m35s & 25m52s \\
\hline
%burns2
burns(4) & \textbf{381} & 140380 & 140380 & 140380 & 140380 & \textbf{0.31s} & 1m24s & 1m28s & 1m37s & 1m8s \\
burns(5) & \textbf{1415} & 2916980 & 2916980 & 2916980 & 2916980 & \textbf{0.98s} & 35m29s & 38m9s & 41m55s & 29m25s \\
burns(11) & \textbf{4114995} & - & - & - & - & \textbf{1h48m} & - & - & - & - \\
\hline
%burns3
burns3(1) & \textbf{67} & 849 & 849 & 849 & 849 & \textbf{0.09s} & 0.45s & 0.40s & 0.44s & 0.49s \\
burns3(2) & \textbf{11297} & 1490331 & 1490331 & 1490331 & 1490331 & \textbf{16.27s} & 16m49s & 17m32s & 20m4s & 26m4s \\
burns3(3) & \textbf{1638338} & - & - & - & - & \textbf{1h0m} & - & - & - & - \\
\hline
%X2Tv10
X2Tv10(2) & \textbf{4130} & 5079 & 5079 & 5079 & 5079 & 1.81s & 1.94s & 1.95s & 2.18s & \textbf{1.71s} \\
X2Tv10(3) & \textbf{213381} & 308433 & 308433 & 308433 & 308433 & \textbf{1m47s} & 2m15s & 2m26s & 2m39s & 1m56s \\
X2Tv10(4) & \textbf{10274441} & 17910500 & 17910500 & 17910500 & 17910500 & \textbf{1h58m} & 2h48m & 3h2m & 3h29m & 2h35m\\
\hline
%X2Tv5
X2Tv5(4) & \textbf{38743} & 46161 & 46161 & 46161 & 46161 & \textbf{14.34s} & 21.05s & 22.57s & 24.92s & 15.35s \\
X2Tv5(5) & \textbf{595527} & 730647 & 730647 & 730647 & 730647 & \textbf{4m37s} & 6m28s & 6m57s & 7m50s & 5m2s \\
X2Tv5(6) & \textbf{9312813} & 11755440 & 11755440 & 11755440 & 11755440 & \textbf{1h26m} & 2h2m & 2h17m & 2h33m & 1h37m \\
\hline
%X2Tv1
X2Tv1(6) & \textbf{224803} & 253042 & 253042 & 253042 & 253042 & 1m45s & 2m19s & 2m27s & 2m46s & \textbf{1m42s} \\
X2Tv1(7) & \textbf{1880095} & 2115302 & 2115302 & 2115302 & 2115302 & 18m4s & 21m56s & 23m59s & 26m35s & \textbf{17m31s} \\
X2Tv1(8) & \textbf{15873308} & 17857733 & 17857733 & - & 17857733 & 2h59m & 3h29m & 3h49m & - & \textbf{2h51m} \\
\hline
%X2Tv8
X2Tv8(3) & \textbf{6168} & 9894 & 9894 & 8700 & 8434 & 2.79s & \textbf{2.56s} & 2.63s & 2.64s & 3.15s \\
X2Tv8(4) & \textbf{122932} & 228417 & 228417 & 194206 & 186040 & 1m8s & \textbf{1m7s} & 1m13s & 1m10s & 1m30s \\
X2Tv8(5) & \textbf{2503292} & 5391534 & 5391534 & 4428748 & 4192466 & 31m12s & \textbf{31m4s} & 34m43s & 32m37s & 44m43s \\
\hline
%X2Tv9
X2Tv9(3) & \textbf{7234} & 7304 & 7304 & 7304 & 7304 & 2.53s & \textbf{2.11s} & 2.23s & 2.49s & 2.41s \\
X2Tv9(4) & \textbf{150535} & 153725 & 153725 & 153725 & 153725 & 1m3s & \textbf{52.80s} & 56.85s & 1m3s & 56.86s \\
X2Tv9(5) & \textbf{3261067} & 3324991 & 3324991 & 3324991 & 3324991 & 29m53s & \textbf{22m17s} & 24m10s & 27m11s & 27m10s\\
\hline
%szymanski
szymanski(3) & \textbf{27892} & 27951 & 27951 & 27951 & 27951 & 12.06s & \textbf{5.06s} & 5.66s & 6.69s & 9.81s \\
szymanski(4) & \textbf{395743} & 396583 & 396583 & 396583 & 396583 & 4m0s & \textbf{1m26s} & 1m39s & 1m49s & 3m14s \\
szymanski(5) & \textbf{5734528} & 5746703 & 5746703 & 5746703 & 5746703 & 1h17m & \textbf{25m17s} & 28m59s & 32m36s & 1h1m\\
\hline
\end{tabular}
}
\caption{Experimental comparison on mutual-exclusion benchmarks.}
\label{tab:experiments_me}
\end{table}

\smallskip\noindent{\bf Mutual-exclusion benchmarks.}
Here we present experiments on various mutual-exclusion algorithms from the literature (\cref{tab:experiments_me}).
In particular, we use the two-thread solutions of Dijkstra~\cite{Dijkstra83}, Kessels~\cite{Kessels82}, Tsay~\cite{Tsay98}, Peterson~\cite{Peterson81}, Peterson-Fischer~\cite{Peterson77}, Szymanski~\cite{Szymanski88}, Dekker~\cite{Knuth66},
as well as various solutions of Correia-Ramalhete~\cite{Correia16}.
In addition, we use the two-thread and three-thread versions of Burns's algorithm~\cite{Burns80}.
These protocols exercise a wide range of communication patterns, based, e.g., on the number of shared variables and the number of sequentially consistent stores/loads required to enter/leave the critical section.
In all these benchmarks, each thread executes the corresponding protocol to enter a (empty) critical section a number of times, the latter controlled by the scaling parameter.

\begin{table}
\makebox[\linewidth][c]{
\setlength\tabcolsep{2pt}
\scriptsize
\centering
\begin{tabular}{|l ||r r r r r| r r r r r|}
\hline
\textbf{Benchmark} & \multicolumn{5}{c|}{\textbf{Maximal Traces}} & \multicolumn{5}{c|}{\textbf{Time}}  \\
\hline
 & \boldmath{$\VCDPOR$} & \boldmath{$\mathbf{\Source}$} & \boldmath{$\Optimal$} & \boldmath{$\OptimalObs$} & \boldmath{$\DCDPOR$} & \boldmath{$\VCDPOR$} & \boldmath{$\mathbf{\Source}$} & \boldmath{$\Optimal$} & \boldmath{$\OptimalObs$} & \boldmath{$\DCDPOR$} \\
\hline
\hline
%writer\_reader(12) & \textbf{1} & 2704156 & 2704156 & 2704156 & 2704156 & \textbf{0.04s} & 7m16s & 7m31s & 9m18s & 1m32s \\
%writer\_reader(13) & \textbf{1} & 10400600 & 10400600 & 10400600 & 10400600 & \textbf{0.05s} & 28m14s & 31m5s & 39m20s & 6m10s \\
%writer\_reader(14) & \textbf{1} & 40116600 & 40116600 & 40116600 & 40116600 & \textbf{0.04s} & 1h58m & 2h9m & 2h41m & 24m58s \\
%\hline
%eratosthenes
eratosthenes(5) & \textbf{3500} & 1527736 & 1527736 & 27858 & 19991 & \textbf{16.92s} & 18m37s & 20m39s & 41.14s & 1m29s \\
eratosthenes(7) & \textbf{29320} & - & - & 253792 & 189653 & \textbf{3m37s} & - & - & 9m29s & 19m41s \\
eratosthenes(8) & \textbf{110380} & - & - & 938756 & 710551 & \textbf{11m29s} & - & - & 42m27s & 1h4m \\
\hline
%redundant\_co
redundant\_co(2) & \textbf{11} & 1969110 & 1969110 & 5401 & 729 & \textbf{0.06s} & 7m16s & 7m32s & 1.51s & 0.07s \\
redundant\_co(8) & \textbf{35} & - & - & 1118305 & 35937 & \textbf{0.09s} & - & - & 13m24s & 0.97s \\
redundant\_co(9) & \textbf{39} & - & - & 1778221 & 50653 & \textbf{0.07s} & - & - & 23m49s & 1.35s \\
\hline
%floating\_read
float\_read(9) & \textbf{9} & 3628800 & 3628800 & 2305 & 10 & \textbf{0.05s} & 26m30s & 26m38s & 1.27s & \textbf{0.04s} \\
float\_read(15) & \textbf{15} & - & - & 245761 & 16 & \textbf{0.65s} & - & - & 3m52s & 0.74s \\
float\_read(16) & \textbf{16} & - & - & 524289 & 17 & \textbf{1.42s} & - & - & 9m25s & 1.44s \\
\hline
%optimistic\_lock\_3thr
opt\_lock(2) & \textbf{2497} & 69252 & 69252 & 11982 & 6475 & \textbf{1.50s} & 15.10s & 15.53s & 3.25s & 2.50s \\
opt\_lock(3) & \textbf{80805} & 15036174 & 15036174 & 416850 & 212877 & \textbf{52.13s} & 1h5m & 1h9m & 2m9s & 1m29s \\
opt\_lock(4) & \textbf{2543298} & - & - & 14038926 & 6743831 & \textbf{37m41s} & - & - & 1h27m & 1h2m \\
\hline
\end{tabular}
}
\caption{Experimental comparison on individual benchmarks.}
\label{tab:experiments_ind}
\end{table}

\smallskip\noindent{\bf Individual benchmarks.}
Here we present experiments on individual benchmarks (\cref{tab:experiments_ind}):
\texttt{eratosthenes} consists of two threads computing the sieve of Eratosthenes in parallel;
\texttt{redundant\_co} consists of three threads, two of which repeatedly write to a variable and one reads from it;
\texttt{float\_read} consists of several threads, each writing once to a variable, and one reading from it (adapted from~\cite{Aronis18});
\texttt{opt\_lock} consists of three threads in an optimistic-lock scheme.
The scaling parameter controls the size in terms of loop unrolls.

\begin{table}
\makebox[\linewidth][c]{
\setlength\tabcolsep{2pt}
\scriptsize
\centering
\begin{tabular}{|l |r r r r ||l |r r r r||l |r r r r |}
\hline
\textbf{Benchmark} & $\mathsf{LOC}$ & $\mathsf{Var}$ & $\mathsf{Locks}$ & $\mathsf{Threads}$ & \textbf{Benchmark} & $\mathsf{LOC}$ & $\mathsf{Var}$ & $\mathsf{Locks}$ & $\mathsf{Threads}$ & \textbf{Benchmark} & $\mathsf{LOC}$ & $\mathsf{Var}$ & $\mathsf{Locks}$ & $\mathsf{Threads}$\\
\hline
parker & 134 & 4 & 0 & 2 &    48\_ticket\_lock & 52 & 3 & 1 & U&    dekker & 91 & 4 & 0 & 2 \\
27\_Boop & 74 & 4 & 0 & 4 &   rod\_cut\_td3 & 50 & 51 & 0 & 3 &   X2Tv6 & 75 & 4 & 0 & 2 \\
30\_Fun\_Point & 67 & 1 & 1 & U &    rod\_cut\_td4 & 62 & 51 & 0 & 4 & kessels & 44 & 3 & 0 & 2 \\
45\_monabsex & 24 & 1 & 0 & U &    rod\_cut\_bu3 & 36 & 51 & 0 & 3&    X2Tv7 & 83 & 3 & 0 & 2 \\
46\_monabsex & 22 & 2 & 0 & U &   rod\_cut\_bu4 & 37 & 51 & 0 & 4 &    X2Tv2 & 65 & 3 & 0 & 2\\
fk2012\_true & 100 & 1 & 2 & 3 &   lis\_bu3 & 47 & 51 & 0 & 3 &   burns & 70 & 3 & 0 & 2 \\
fkp2013\_true & 26 & 1 & 0 & U &    lis\_bu4 & 48 & 51 & 0 & 4 &  burns3 & 70 & 4 & 0 & 3 \\
nondet-array & 29 & 1 & 0 & U &   coin\_all\_td3 & 51 & 151 & 0 & 3 &   X2Tv10 & 91 & 3 & 0 & 2\\
pthread-de & 67 & 1 & 1 & U &     coin\_all\_td4 & 53 & 151 & 0 & 4 &  X2Tv5 & 55 & 4 & 0 & 2 \\
reorder\_5 & 1227 & 4 & 0 & U &    coin\_min\_td3 & 46 & 51 & 0 & 3&   X2Tv1 & 56 & 3 & 0 & 2 \\
scull\_true & 389 & 7 & 1 & 3 &   coin\_min\_td4 & 52 & 51 & 0 & 4 &  X2Tv8 & 64 & 4 & 0 & 2\\
sigma\_false & 36 & 1 & 0 & U &    bin\_nocon\_td3 & 43 & 101 & 0 & 3 &   X2Tv9 & 61 & 3 & 0 & 2 \\
check\_bad\_arr & 33 & 1 & 0 & U &    bin\_nocon\_bu3 & 53 & 101 & 0 & 3&    szymanski & 93 & 3 & 0 & 2 \\
32\_pthread5 & 87 & 4 & 1 & U &   tsay & 54 & 3 & 0 & 2 &  eratosthenes & 25 & U & 0 & 2\\
fkp2014\_true & 36 & 2 & 1 & U &    peter\_fisch & 59 & 3 & 0 & 2 & redundant\_co & 23 & 1 & 0 & 2 \\
singleton & 43 & 1 & 0 & U &    peterson & 68 & 4 & 0 & 2 &   float\_read & 25 & 1 & 0 & U \\
stack\_true & 104 & U & 1 & 2 &  lamport & 83 & 5 & 0 & 2 & opt\_lock & 31 & 2 & 0 & 3\\
\hline

\end{tabular}
}
\caption{Benchmark statistics.}
\label{tab:bench_stats}
\end{table}

\smallskip\noindent{\bf Summary.}
For the sake of completeness, we refer to \cref{tab:bench_stats} for some statistics on our benchmark set.
Entries marked with ``U'' denote that the corresponding parameter is controlled by the unroll bound of the respective benchmark.
In a variety of cases, the $\VHB$ partitioning is significantly coarser than each of the partitionings constructed by the other algorithms.
This coarseness makes $\VCDPOR$ more efficient in its exploration than the alternatives.
We note that in some cases, $\VHB$ offers little-to-no reduction, and then $\VCDPOR$ becomes slower than the alternatives,
due to the overhead incurred in constructing $\VHB$.
For example, for the benchmark \texttt{reorder\_5} of \cref{tab:experiments_svcomp}, the partitioning reduction achieved by $\VCDPOR$ is large enough compared to $\Source$, $\Optimal$ and $\OptimalObs$ that makes $\VCDPOR$ significantly faster than each of these techniques.
However, although the partitioning of $\VCDPOR$ is smaller than $\DCDPOR$, the corresponding reduction is not large enough to make $\VCDPOR$ faster than $\DCDPOR$ in this benchmark (in general, $\VCDPOR$ has a larger polynomial overhead than $\DCDPOR$.)
Similarly, for the benchmark \texttt{X2Tv9} of \cref{tab:experiments_me}, the reduction of the $\VHB$ partitioning is quite small, and although $\Source$ is the slowest algorithm in theory, its more lightweight nature makes it faster in practice for this benchmark.
Finally, we also identify benchmarks such as \texttt{stack\_true} and \texttt{48\_ticket\_lock} where there is no trace reduction at all, and are better handled by existing methods.
We note that our approach is fairly different from the literature, and our implementation of $\VCDPOR$ still largely unoptimized.
We  identify potential for improving the performance of $\VCDPOR$ by improving the closure computation, as well as reducing (or eliminating) the number of non-maximal traces explored by the algorithm.

\section{Related Work and Conclusions}\label{sec:related}

The formal analysis of concurrent programs is a major challenge in verification, and has been a subject of 
extensive research~\cite{Petri62,Cadiou73,Lipton75,Clarke86,Lal09,Farzan12,Farzan09}.
Since it is hard to reproduce bugs by testing due to scheduling nondeterminism,
systematic state space exploration by model checking is an important approach for the 
problem~\cite{Godefroid05,Musuvathi07,Andrews04,Clarke00,Alglave13}.
In this direction, stateless model checking has been employed to combat state-space explosion~\cite{G96,Godefroid97,Godefroid05,Musuvathi07b}.

To deal with the exponential number of interleavings faced by 
the early model checking~\cite{Godefroid97}, several reduction techniques have been proposed 
such as POR and context bounding~\cite{Peled93,Musuvathi07}.
Several POR methods, based on persistent set~\cite{Clarke99,G96,Valmari91}
and sleep set techniques~\cite{Godefroid97}, have been studied.
DPOR techniques were first proposed in~\cite{Flanagan05}, and several variants 
and improvements have been made since~\cite{Sen06,Lauterburg10,Tasharofi12,Saarikivi12,Sen06b}.
In~\cite{Abdulla14}, source sets and wakeup trees were developed to make DPOR optimal,
and the underlying computational problems were further studied in~\cite{Nguyen18}.
Besides the present work, further improvements over optimal DPOR have been made in~\cite{Aronis18,Chalupa17},
as well as with maximal causal models~\cite{HUANG15,Huang017}.
Other techniques such as unfoldings have also been explored~\cite{Kahkonen12,McMillan95,Sousa15}.
Techniques for POR have also been applied to relaxed memory models~\cite{Wang08,Abdulla2015,Kokologiannakis17,Demsky15,Huang16} and message passing programs~\cite{G96,Godefroid95,Katz92}.

In this work, we have introduced a new equivalence on traces, called the value-happens-before equivalence $\VHB$, which considers the values of trace events in order to determine whether two traces are equivalent.
We have shown that $\VHB$ is coarser than the standard happens-before equivalence, which is the theoretical foundation of the majority of DPOR algorithms.
In fact, this coarsening occurs even when there are no concurrent write events.
In addition, we have developed an algorithm $\VCDPOR$ that relies on $\VHB$ to partition the trace space into equivalence classes and explore each class efficiently.
Our experiments show that, in a variety of benchmarks, $\VHB$ indeed produces smaller partitionings than those explored by alternative, state-of-the-art methods, which often leads to a large reduction in running times.

\begin{acks}
The authors would also like to thank anonymous referees for their valuable comments and helpful suggestions.
This work is supported
by the \grantsponsor{FWF}{Austrian Science Fund (FWF) NFN}{} grants \grantnum{}{S11407-N23~(RiSE/SHiNE)} and \grantnum{}{S11402-N23~(RiSE/SHiNE)},
by the \grantsponsor{WWTF}{Vienna Science and Technology Fund (WWTF)}{} Project \grantnum{}{ICT15-003}, and
by the \grantsponsor{FWF}{Austrian Science Fund (FWF) Schrodinger}{} grant \grantnum{}{J-4220}.
\end{acks}

%\newpage
\bibliography{bibliography}

%% Appendix
\newpage
\appendix
\section{Details of \cref{sec:ave}}\label{sec:app_equivalence_proofs}

In this section we prove \cref{them:comparison}.
We start with the following remark.

\smallskip
\begin{remark}\label{rem:value_hb}
For every pair of traces $\Trace_1, \Trace_2\in \TraceSpace$, if $\Value_{\Trace_1}\neq \Value_{\Trace_2}$ or $\SideAnnotation_{\Trace_1}\neq \SideAnnotation_{\Trace_2}$ then $\Observation_{\Trace_1}\neq \Observation_{\Trace_2}$ and $\HB{}{\Trace_1}{}\neq \HB{}{\Trace_2}{}$.
\end{remark}

We now prove \cref{them:comparison}.

\smallskip
\themcomparison*
\begin{proof}
The fact that $\VHB$ is sound follows from \cref{rem:soundness}.
Here we prove that that $\VHB$ is at least as coarse as $\Maz$.
Afterwards, we present two examples where $\VHB$ can, in fact, be exponentially coarser.

Consider two traces $\Trace_1, \Trace_2\in \TraceSpace$ such that $\Trace_1\not \VHBE \Trace_2$.
If $\Events{\Trace_1}\neq \Events{\Trace_2}$ then $\Trace_1\not \MazE\Trace_2$.
Else, if $\Value_{\Trace_1}\neq \Value_{\Trace_2}$ or $\SideAnnotation_{\Trace_1}\neq \SideAnnotation_{\Trace_2}$, by \cref{rem:value_hb} we have $\HB{}{\Trace_1}{}\neq \HB{}{\Trace_2}{}$.
Else, if $\CHB{}{\Trace_1}{}\neq \CHB{}{\Trace_2}{}$, there exists a read event such that $\Observation_{\Trace_1}(\Read)\neq \Observation_{\Trace_2}(\Read)$, which implies that $\HB{}{\Trace_1}{}\neq \HB{}{\Trace_2}{}$.
Finally, if $\HB{}{\Trace_1}{}\Project{\SysLeafEvents} \neq \HB{}{\Trace_2}{}\Project{\SysLeafEvents}$ then trivially $\HB{}{\Trace_1}{}\neq \HB{}{\Trace_2}{}$.
Hence, in all cases we obtain $\Trace_1\not \MazE \Trace_2$.
The desired result follows.
\end{proof}

\section{Details of \cref{sec:closure}}\label{sec:app_closure_proofs}

In this section we present details of \cref{sec:closure}.
We first give the formal proofs of \cref{lem:closed_linearizable}, \cref{lem:closure_defined} and \cref{lem:realizable_iff_feasible} that state properties of closed annotated partial orders.
Afterwards, we present our algorithm $\ClosureAlgo$ that computes the closure of an annotated partial order, and establish its correctness and complexity.

\smallskip\noindent{\bf Three lemmas on closed annotated partial orders.}
We start with \cref{lem:closed_linearizable}, which states that a closed annotated partial order is realizable.

\smallskip
\lemclosedlinearizable*
\begin{proof}
Let $\AnnotatedPO=(X_1, X_2, \Value, P, \SideAnnotation, \GoodWrites)$, and we construct a linearization $\Trace$ of $P$ as follows.
\begin{compactenum}
\item Create a partial order $Q$ as follows. 
\begin{compactenum}
\item For every pair of events $\Event_1, \Event_2$ with $\Event_1<_{P}\Event_2$, we have $\Event_1<_{Q}\Event_2$.
\item For every pair of events $\Event_1, \Event_2$ with $\Event_i\in X_i$ for each $i\in [2]$, if $\Event_2\not <_{P} \Event_1$ then $\Event_1<_{Q}\Event_2$.
\end{compactenum}
\item Create $\Trace$ by linearizing $Q$ arbitrarily.
\end{compactenum}
It is easy to see that since $\Width{P\Project X_1}=1$, $Q$ is indeed a partial order and thus $\Trace$ is well defined.
In addition, the above process takes $O(\Poly(n))$ time.
We now argue that $\Trace$ is indeed a witness trace.
It is clear that $Q\Refines P$ and thus $\Trace$ is a linearization of $P$.
It remains to argue that for every read event $\Read\in \Reads{X}$, we have that $\Observation_{\Trace}\in \GoodWrites(\Read)$.
We distinguish between the following cases.
\begin{compactenum}
\item $\Read\in X_1$. Let $\Write = \Observation_{\Trace}(\Read)$, and observe that $\Write<_{P}\Read$.
Assume towards contradiction that $\Write\in \BadWrites(\Read)$.
If $\SideAnnotation(\Read)=1$, by \cref{item:closure3} of closure we have that there exists a write event $\Write'\in X_1$ such that $\Write<_{P}\Write'$.
Since $\SideAnnotation(\Read)=1$, we have $\Write'<_{\TO} \Read$  thus $\Write'<_{P}\Read$ and $\Write\not \in \VisibleWrites_{P}(\Read)$, a contradiction.
Otherwise, $\SideAnnotation(\Read)=2$, and by \cref{item:closure1} of closure, there exists a write event $\Write' \in \GoodWrites(\Read)\cap \VisibleWrites_{P}(\Read)\cap X_2$ such that $\Write'<_{P} \Read$.
Observe that in this case $\Write=\Write'$, a contradiction.

\item $\Read\in X_2$.
Let $\Write=\Observation_{\Trace}(\Read)$, and observe that $\Write\in \TailWrites_{P}(\Read)$.
Assume towards contradiction that $\Write\in\BadWrites(\Read)$.
By \cref{item:closure3} of closure, we have that $\Write\not <_{P} \Read$.
In this case $\Write \in X_1$, and since $\Observation_{\Trace}(\Read)=\Write$, there exists no $\Write'\in X_2\cap \GoodWrites_{P}(\Read)\cap \VisibleWrites_{P}(\Read)$.
It followed that $|\TailWrites_{P}(\Read)|=1$, and by \cref{item:closure2} of closure we have that $\Write\in\GoodWrites(\Read)$, a contradiction.
\end{compactenum}
The desired result follows.
\end{proof}

We continue with \cref{lem:closure_defined} which states that the closure of an annotated partial order is unique.

\smallskip
\lemclosuredefined*
\begin{proof}
Assume towards contradiction otherwise, and let $Q_1, Q_2$ be two weakest partial orders (i.e., $Q_i\not \Refines Q_{3-i}$ for each $i\in[2]$) with the stated properties.
Let $Q=Q_1\cap Q_2$, thus $Q_1, Q_2\Refines Q$, and we argue that $(X_1, X_2, Q, \Value, \SideAnnotation, \GoodWrites)$ is closed.
Let $X=X_1\cup X_2$ and consider any read event $\Read\in \Reads{X}$, and we show that each of closure conditions holds for $\Read$.

\begin{compactenum}
\item First, assume that for some $i\in [2]$ there exists a write event $\Write_i\in \GoodWrites(\Read)\cap \HeadWrites_{Q_i}(\Read)\cap X_{\SideIndicator_{\AnnotatedPO}(\Read)}$.
Since $Q_i\Refines Q$, we have that $\Write_i\in \HeadWrites_{Q}(\Read)$ and thus \cref{item:closure1} of closure is satisfied.
Otherwise, for each $i\in [2]$ there exists a write event $\Write_i\in \GoodWrites(\Read)\cap \HeadWrites_{Q_i}(\Read)\cap X_{3-\SideIndicator_{\AnnotatedPO}(\Read)}$ such that $\Write_i<_{Q_i} \Read$.
Since $\MWidth{P\Project X_{3-\SideIndicator_{\AnnotatedPO}(\Read)}}=1$, we have that $\Write_i<_{Q} \Write_{3-i}$ for some $i\in[2]$,
and thus $\Write_i<_{Q} \Read$.
Finally, since $Q_i\Refines Q$ we have $\Write_i\in \VisibleWrites_{Q}(\Read)$ and thus $\Write_i\in \HeadWrites_{Q}(\Read)$.

\item First, assume that for some $i\in[2]$ there exists a write event $\Write_i\in \GoodWrites(\Read)\cap \TailWrites_{Q_i}(\Read)\cap X_{\SideIndicator_{\AnnotatedPO}}(\Read)$.
Since $Q_i\Refines Q$, we have that $\Write_i\in \TailWrites_{Q}(\Read)$ and thus \cref{item:closure2} of closure is satisfied.
Otherwise, for each $i\in [2]$ there exists a write event $\Write_i\in \GoodWrites(\Read)\cap \TailWrites_{Q_i}(\Read)\cap X_{3-\SideIndicator_{\AnnotatedPO}}(\Read)$ such that $\Write_i<_{Q_i} \Read$.
Since $\MWidth{P\Project X_{3-\SideIndicator_{\AnnotatedPO}(\Read)}}=1$, we have that $\Write_{3-i}<_{Q} \Write_{i}$ for some $i\in[2]$.
Since $Q_i\Refines Q$, we have $\Write_i \in \VisibleWrites_{Q}(\Read)$ and it remains to argue that $\Write_i\in \TailWrites_{Q}(\Read)$.
Indeed, if that is not the case then there exists a write event $\Write'\in \TailWrites_{Q}(\Read)$ such that $\Write_i<_{Q} \Write$.
But then $\Write_i<_{Q_j}\Write$ for each $j\in [2]$ and since $\Write\not \in \TailWrites_{Q_j}$, we have $\Read<_{Q_j} \Write$ for each $j\in [2]$. Hence $\Read<_{Q} \Write$, a contradiction.

\item Consider any write event $\Write'\in \BadWrites(\Read)\cap \HeadWrites_{Q}(\Read)$ such that $\Write'<_{Q} \Read$, and we have $\Write'<_{Q_i}\Read$ for each $i\in [2]$.

First assume that  $\SideIndicator_{\AnnotatedPO}(\Write)=\SideIndicator_{\AnnotatedPO}(\Read)$.
It follows that for each $i\in[2]$ there exists a write event $\Write_i \in \GoodWrites(\Read)\cap \TailWrites_{Q_i}(\Read)\cap X_{3-\SideIndicator_{\AnnotatedPO}}(\Read)$ such that $\Write'<_{Q} \Write_i$.
Since $\MWidth{P\Project X_{3-\SideIndicator_{\AnnotatedPO}}(\Read)}=1$, we have $\Write_{3-i}<_{P} \Write_i$ for some $i\in [2]$,
and thus $\Write'<_{Q_j} \Write_i$ for each $j\in [2]$. Hence $\Write'<_{Q} \Write_i$.
Since for each $j\in [2]$ we have $Q_j\Refines Q$, it is $\Write_i\in \VisibleWrites_{Q}(\Read)$, as desired.

Finally, assume that $\SideIndicator_{\AnnotatedPO}(\Write)=3-\SideIndicator_{\AnnotatedPO}(\Read)$.
If for some $i\in [2]$ there exists a write event $\Write_i\in \GoodWrites(\Read)\cap \VisibleWrites_{Q_i}\cap X_{\SideIndicator_{\AnnotatedPO}}(\Write')$ such that $\Write'<_{Q_i} \Write$, since $Q_i\Refines Q$ and $\MWidth{P\Project X_{\SideIndicator_{\AnnotatedPO}}(\Write')}=1$ we have $\Write_i\in \VisibleWrites_{Q}(\Read)$ and $\Write'<_{Q} \Write$ as desired.
Otherwise, due to \cref{item:closure2} of closure it follows that for the unique write event $\Write\in  X_{\SideIndicator_{\AnnotatedPO}}(\Read)\cap \VisibleWrites_{Q}(\Read)$ we have $\Write\in \GoodWrites(\Read)$.
\end{compactenum}
It follows that $Q$ is closed, a contradiction.
The desired result follows.
\end{proof}

Finally, we prove \cref{lem:realizable_iff_feasible} which states that an annotated partial order is realizable if and only if it is feasible (i.e., it has a closure).

\smallskip
\lemrealizableifffeasible*
\begin{proof}
Let $\AnnotatedPO=(X_1, X_2, P, \Value, \SideAnnotation, \GoodWrites)$. We prove each direction separately.

\noindent{$(\Rightarrow)$.}
If $\AnnotatedPO$ is feasible, let $\AnnotatedPOQ=(X_1, X_2, Q, \Value, \SideAnnotation, \GoodWrites)$ be the closure of $\AnnotatedPO$.
Since $\AnnotatedPOQ$ is closed, by \cref{lem:closed_linearizable} we have that $\AnnotatedPOQ$ is linearizable to a trace $\Trace$.
Since $Q\Refines P$, we have that $\Trace$ is also a linearization of $\AnnotatedPO$.

\noindent{$(\Leftarrow)$.}
If $\AnnotatedPO$ is realizable, there exists a trace $\Trace$ such that $\Trace \Refines P$ and for every read event $\Read\in \Reads{\Trace}$ we have $\Observation_{\Trace}(\Read)\in \GoodWrites(\Read)$. We can view $\Trace$ as a partial (total) order, and observe that the annotated partial order
$\AnnotatedPOF=(X_1, X_2, \Trace, \Value, \SideAnnotation, \GoodWrites)$ is closed. Hence $P$ is feasible.

The desired result follows.
\end{proof}

\smallskip{\em Correctness and complexity of $\ClosureAlgo$.}
Hence we argue about the correctness and complexity of $\ClosureAlgo$.
We start with the following straightforward lemma, which captures the complexity.

\smallskip
\begin{restatable}{lemma}{lemclosurecomplexity}\label{lem:closure_complexity}
$\ClosureAlgo$ requires $O(\Poly(n))$ time.
\end{restatable}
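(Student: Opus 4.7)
The plan is to bound the total running time of $\ClosureAlgo$ by a simple potential argument on the number of edges in~$Q$. The key observation is that each invocation of $\RuleOneAlgo$, $\RuleTwoAlgo$, or $\RuleThreeAlgo$ strengthens $Q$ by inserting exactly one new ordering $\Event_1 \to \Event_2$ (together with its transitive consequences). Since $Q$ is a partial order over a set $X$ of at most $n$ events, the relation $<_Q$ can contain at most $\binom{n}{2}$ distinct pairs, so at most $O(n^2)$ strengthening operations can ever occur before either $Q$ becomes a total order or all rules are satisfied and the algorithm exits.

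Next I would account for the cost per iteration of the outer \textbf{while} loop. Every iteration sets $\Flag \gets \False$ and then iterates over all $|\Reads{X}| = O(n)$ read events, for each one testing the three closure conditions and possibly calling the corresponding rule. Checking any of the three conditions for a given $\Read$ reduces to computing $\VisibleWrites_Q(\Read)$, $\HeadWrites_Q(\Read)$ and $\TailWrites_Q(\Read)$ and intersecting with $\GoodWrites(\Read)$ and $\BadWrites(\Read)$; each such set can be computed in $O(\Poly(n))$ time directly from the definitions (e.g.~by iterating over candidate writes and testing $<_Q$-relations, which are maintainable in polynomial time, for instance via transitive-closure bookkeeping). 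Similarly, the $\min_Q$ and uniqueness-based selections inside each rule run in $O(\Poly(n))$ time. Hence each iteration of the \textbf{while} loop costs $O(\Poly(n))$ time.

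Finally, I combine the two bounds. An iteration that does no rule call must terminate the loop, because $\Flag$ remains $\False$. So the number of iterations of the \textbf{while} loop is at most one more than the total number of rule calls, which is $O(n^2)$ as established above. Multiplying the $O(\Poly(n))$ per-iteration cost by the $O(n^2)$ iteration bound yields total runtime $O(\Poly(n))$, establishing the lemma. The only nontrivial point to verify carefully is the invariant that each rule call actually adds a brand-new ordering (so that progress is made), but this is immediate from the preconditions of each rule: by construction, the rule is invoked only when the closure condition is violated, and the inserted edge $\Event_1 \to \Event_2$ is guaranteed not to already be present in $Q$, since its presence would have precluded the corresponding violation.
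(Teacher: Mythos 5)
Your proposal is correct and follows essentially the same argument as the paper's proof: each rule invocation inserts a new ordering into $Q$, so at most $O(n^2)$ strengthenings occur, and each pass over the reads (testing the three closure conditions and executing a rule) costs $O(\Poly(n))$ time. You are in fact slightly more careful than the paper in flagging the progress invariant—that each inserted edge is genuinely new—which the paper asserts without comment; your justification of it is correct, though verifying it for $\RuleThreeAlgo$ takes a little more work than "immediate" suggests (one must use that \cref{item:closure2} of closure already holds for $\Read$ to rule out the inserted edge being already present).
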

\begin{proof}
Let $n=|X_1 \cup X_2|$.
It is straightforward to see that testing whether $\AnnotatedPOQ$ violates any of the closure rules in \cref{line:closure_r1}, \cref{line:closure_r2} and \cref{line:closure_r3} requires polynomial time in $n$.
Every time one of these rules is violated, $\ClosureAlgo$ strengthens $\AnnotatedPOQ$ by inserting some new orderings in $\AnnotatedPOQ$.
Since $\ClosureAlgo$ can insert at most $n^2$ such new orderings, it follows that the running time of $\ClosureAlgo$ is $O(\Poly(n))$.
\end{proof}

We now turn our attention to the correctness of $\ClosureAlgo$. We establish the following lemma.

\smallskip
\begin{restatable}{lemma}{lemclosurecorrectness}\label{lem:closure_correctness}
The following assertions hold.
\begin{compactenum}
\item\label{item:closure_cor1} If $\ClosureAlgo(\AnnotatedPO)$ returns $\AnnotatedPOQ\neq \bot$ then $\AnnotatedPOQ$ is the closure of $\AnnotatedPO$.
\item\label{item:closure_cor2} If $\ClosureAlgo(\AnnotatedPO)$ returns $\bot$ then $\AnnotatedPO$ is not feasible.
\end{compactenum}
\end{restatable}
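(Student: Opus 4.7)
The central device will be a loop invariant: if $\AnnotatedPO$ admits any closure $\AnnotatedPOF=(X_1,X_2,F,\Value,\SideAnnotation,\GoodWrites)$, then throughout the execution of $\ClosureAlgo(\AnnotatedPO)$ the current partial order $Q$ satisfies $F\Refines Q$. Initially this holds because $Q=P$ and every closure refines $P$ by definition. The heart of the argument is that each invocation of $\RuleOneAlgo$, $\RuleTwoAlgo$, $\RuleThreeAlgo$ preserves the invariant by inserting only orderings that $F$ must already contain; equivalently, every edge the algorithm adds is a \emph{necessary} edge.

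For $\RuleOneAlgo(\Read)$: violation of \cref{item:closure1} in $Q$ combined with the closedness of $F$ and $F\Refines Q$ forces the candidate set $Y=\GoodWrites(\Read)\cap\VisibleWrites_{Q}(\Read)$ to be nonempty (any witness $\Write^{\ast}\in \HeadWrites_{F}(\Read)\cap\GoodWrites(\Read)$ with $\Write^{\ast}<_{F}\Read$ lies in $Y$). Using $\Width{P\Project X_1}=\MWidth{P\Project X_2}=1$ together with the fact that no good visible write is currently $Q$-ordered before $\Read$, one shows that $\min_{Q}(Y)$ is unique and equal to $\Write^{\ast}$, so the inserted edge $\Write\to\Read$ is already present in $F$. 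Analogous reasoning, using the uniqueness remarks in Algorithms~\ref{algo:rule2} and~\ref{algo:rule3} that rest on the non-violation of the preceding rules and on the Mazurkiewicz-width-one condition, shows that $\RuleTwoAlgo$ and $\RuleThreeAlgo$ also insert only edges that belong to $F$.

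For Part~\ref{item:closure_cor1}, termination is immediate since every iteration either exits with $\Flag=\False$ or strictly enlarges $Q$, and $Q$ has at most $|X|^2$ edges. At termination no read violates any of the three closure conditions, so $\AnnotatedPOQ=(X_1,X_2,Q,\Value,\SideAnnotation,\GoodWrites)$ is closed; $Q\Refines P$ by construction; and the invariant gives $F\Refines Q$ for every closed refinement $\AnnotatedPOF$ of $\AnnotatedPO$. Thus $Q$ is a weakest closed refinement of $P$, which by \cref{lem:closure_defined} is unique and coincides with the closure of $\AnnotatedPO$. For Part~\ref{item:closure_cor2}, $\ClosureAlgo$ returns $\bot$ precisely when either $\RuleOneAlgo$ encounters $Y=\emptyset$, or an inserted ordering creates a cycle in $Q$. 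If $\AnnotatedPO$ were feasible with closure $\AnnotatedPOF$, the invariant together with the closedness of $F$ would rule out the first case (the witness $\Write^{\ast}$ above sits in $Y$), and the acyclicity of the partial order $F$ together with $F\Refines Q$ would rule out the second. Hence no closure exists and $\AnnotatedPO$ is not feasible.

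The main obstacle will be the uniqueness arguments for the writes chosen inside the three rules: one must carry out a careful case analysis, separated by whether the offending events are local or remote to $\Read$, that combines the violation status of the earlier closure rules with $\Width{P\Project X_1}=\MWidth{P\Project X_2}=1$ (so that conflicting writes on each side are $Q$-totally-ordered per variable). Once these uniqueness facts are established, the invariant proof and both parts of the lemma follow smoothly.
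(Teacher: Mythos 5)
Your proposal is correct and follows essentially the same route as the paper's proof: the identical invariant that any closure $F$ of $\AnnotatedPO$ satisfies $F\Refines Q$ throughout, preserved by showing each rule inserts only orderings already present in $F$ (using the width-one conditions for the uniqueness of the chosen writes), with Part~1 then following from \cref{lem:closure_defined} and Part~2 from the invariant contradicting \cref{item:closure1} of closure when $\GoodWrites(\Read)\cap\VisibleWrites_{Q}(\Read)=\emptyset$. Your additional remark that a cycle in $Q$ is also excluded by $F\Refines Q$ and the acyclicity of $F$ is a small but welcome completeness point beyond what the paper states explicitly.
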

%\begin{proof}[(Sketch).]
%The proof is by showing the following fact:
%\begin{quote}
%Every time $\ClosureAlgo$ inserts an ordering $\Event_1\to\Event_2$ in $Q$, if $\AnnotatedPO$ is feasible then $\Event_1$ is ordered before $\Event_2$ in the closure of $\AnnotatedPO$.
%\end{quote}
%It follows that if the algorithm returns $\bot$, $\AnnotatedPO$ has no closure since the inserted orderings have made it so that \cref{item:closure1} of closure cannot be satisfied for some read event.
%On the other hand, if $\ClosureAlgo$ returns an annotated partial order $\AnnotatedPOQ$, every check in \cref{line:closure_r1}, \cref{line:closure_r2} and \cref{line:closure_r3} passes, and thus $\AnnotatedPO$ is closed. This, together with the above fact, guarantees that $\AnnotatedPOQ$ is the closure of $\AnnotatedPO$.
%We refer to \cref{subsec:app_closure_proofs} for the full proof.
%\end{proof}
\begin{proof}

\noindent{\em Invariant.}
We first show that the following invariant holds at all times:
if $\AnnotatedPO$ has a closure $\AnnotatedPOF=(X_1, X_2, F, \Value, \SideAnnotation)$ then $F\Refines Q$.
The claim holds trivially in the beginning of $\ClosureAlgo$ since $Q=P$. 
Now assume that the algorithm inserts an ordering $\Event_1\to \Event_2$ in $Q$, let $Q'$ be the resulting partial order, and we will argue that $F\Refines Q'$.
By the induction hypothesis, we have that $F\Refines Q$.
We split cases based on which closure rule inserted the ordering $\Event_1\to \Event_2$.
\begin{compactenum}
\item $\RuleOneAlgo(\Read)$.
In this case $\Event_2=\Read$ and $\Event_1=\Write$ as instantiated in \cref{line:ruleonealgo_write} of \cref{algo:rule1}.
By \cref{item:closure1} of closure, there exists a write event $\Write'\in \GoodWrites(\Read)\cap \HeadWrites_{F}(\Read)$ such that $\Write'<_{F}\Read$.
By the induction hypothesis, we have that $F\Refines Q$, thus $\Write'\in \VisibleWrites_{Q}(\Read)$.
Observe that since $\MWidth{P\Project X_1}=\MWidth{P\Project X_2}=1$ and the rule is violated, we have that the set $Y=\GoodWrites(\Read)\cap \VisibleWrites_{Q}(\Read)$ is totally ordered in $Q$, thus $\Write\leq_{Q} \Write'$, and thus $\Write<_{F} \Read$, as desired.

\item $\RuleTwoAlgo(\Read)$.
In this case $\Event_1=\Read$ and $\Event_2=\Write$ as instantiated in \cref{line:ruleonealgo_write} of \cref{algo:rule2}.
By \cref{item:closure2} of closure, there exists a write event $\Write'\in\TailWrites_{F}\cap\GoodWrites(\Read)$.
By the induction hypothesis, we have that $F\Refines Q$, thus $\Write'\in \VisibleWrites_{Q}(\Read)$.
Observe that $\SideIndicator_{\AnnotatedPO}(\Write')=X_{3-\SideIndicator_{\AnnotatedPO}(\Read)}$, otherwise since $\MWidth{P\Project X_{\SideIndicator_{\AnnotatedPO}(\Read)}}=1$ we would have $\Write'\in \TailWrites_{Q}(\Read)$ and thus \cref{item:closure2} of closure would not be violated.
Since $\MWidth{P\Project X_{3-\SideIndicator_{\AnnotatedPO}}(\Read)}=1$, it follows that $\Write'<_{Q}\Write$ and thus $\Read<_{F} \Write$, as desired.

\item $\RuleThreeAlgo(\Read)$.
In this case $\Event_1=\ov{\Write}$ and $\Event_2=\Write$ as instantiated in \cref{line:rulethreealgo_write} and \cref{line:rulethreealgo_writep} of \cref{algo:rule3}, respectively.
Observe that at this point \cref{item:closure1} of closure is not violated for $\Read$, and thus $|\HeadWrites_{Q}(\Read)\cap \BadWrites_{Q}(\Read)|=1$, and $\ov{\Write}$ is the unique event in $\HeadWrites_{Q}(\Read)\cap \BadWrites(\Read)$.
By \cref{item:closure3} of closure, either $\ov{\Write}\not \in \HeadWrites_{F}(\Read)$, or there exists a write event $\Write'\in \GoodWrites(\Read)\cap \VisibleWrites_{F}(\Read)$ such that $\ov{\Write}<_{F} \Write'$.
Since $\MWidth{F\Project X_1}=\MWidth{F\Project X_2}=1$, it is easy to verify that in both cases there exists a write event $\Write''\in \TailWrites_{F}(\Read)\cap X_{3-\SideIndicator_{\AnnotatedPO}(\ov{\Write})}$ such that $\ov{\Write}<_{F} \Write''$ and $\Write''\leq_{F} \Write$, thus $\ov{\Write}<_{F}\Write$, as desired.
\end{compactenum}

\noindent{\em Main proof.}
We are now ready to prove the lemma.
We examine each item separately.
\begin{compactenum}
\item If $\ClosureAlgo(\AnnotatedPO)$ returns $\AnnotatedPOQ=(X_1, X_2, Q, \Value, \SideAnnotation, \GoodWrites)$ then we have that $\AnnotatedPOQ$ is closed and $Q\Refines P$. It follows that the closure of $\AnnotatedPO$ exists, and the above invariant establishes that $\AnnotatedPOQ$ is the closure of $\AnnotatedPO$.
\item If $\ClosureAlgo(\AnnotatedPO)$ returns $\bot$, then at some point the algorithm discovers a read event $\Read$ such that $\GoodWrites(\Read)\cap \VisibleWrites_{Q}(\Read)=\emptyset$. Assume towards contradiction that $\AnnotatedPO$ is feasible and $\AnnotatedPOF=(X_1, X_2, F, \Value, \SideAnnotation,\GoodWrites)$ is the closure of $\AnnotatedPO$. By our invariant above, it follows that $F\Refines Q$. But then $\GoodWrites(\Read)\cap \VisibleWrites_{F}(\Read)=\emptyset$, which contradicts \cref{item:closure1} of closure, a contradiction.
\end{compactenum}

The desired result follows.
\end{proof}

Finally, we prove \cref{them:closure} which concludes the results of \cref{sec:closure}.

\smallskip
\themclosure*
\begin{proof}
By \cref{lem:realizable_iff_feasible}, $\AnnotatedPO$ is realizable if and only if it is feasible.
By \cref{lem:closure_correctness} the algorithm $\ClosureAlgo(\AnnotatedPO)$ runs in $O(\Poly(n))$ time and returns the annotated partial order $\AnnotatedPOQ$ that is the closure of $\AnnotatedPO$ if and only if $\AnnotatedPO$ is feasible.
If $\AnnotatedPO$ is realizable, \cref{lem:closed_linearizable} provides a simple construction of a witness trace in $O(\Poly(n))$ time.
\end{proof}

\section{Details of \cref{{sec:vcdpor}}}\label{sec:app_vcdpor_proofs}

In this section we present details of \cref{sec:vcdpor}.
We first outline our algorithm $\ExtendPO$ that takes as input an annotated partial order $\AnnotatedPO$, 
and extends it to a new set of events.
This operation is central to our $\VCDPOR$ which performs a search of the trace space based on annotated partial orders, and $\ExtendPO$ is used to  add events to such partial orders.
Afterwards, we prove the correctness and complexity of $\VCDPOR$.

The following lemma states the key properties of $\ExtendPO$.
%The proof is rather straightforward, and is relegated to  \cref{sec:app_vcdpor_proofs}.

\smallskip
\begin{restatable}{lemma}{lemextendcorrectness}\label{lem:extend_correctness}
Let $\PartialOrders=\ExtendPO(\AnnotatedPO, X', \Value', \SideAnnotation', \GoodWrites')$.
Then $\ExtendPO$ runs in $O(m\cdot \Poly(n))$ time, where $n=|X'|$ and $m=|\PartialOrders|+1$, and the following assertions hold.
\begin{compactenum}
\item\label{item:extension1} Every annotated partial order $\AnnotatedPOK_i$ is closed and minimal.
\item\label{item:extension2} For every pair $K_i, K_j$, we have that $K_i\not \LeafRefines K_j$ and $K_j\not \LeafRefines K_i$.
\item\label{item:extension3} For every trace $\Trace$ such that 
(i)~$\Events{\Trace}=X'$, 
(ii)~for each read event $\Read\in \Reads{\Trace}$ we have $\Observation_{\Trace}(\Read)\in \GoodWrites'$ and
(iii)~$(\Trace\Project X) \LeafRefines P$,
there exists an annotated partial order $\AnnotatedPOK_i$ such that $\Trace$ is a linearization of $\AnnotatedPOK_i$.
\end{compactenum}
\end{restatable}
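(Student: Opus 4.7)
I would proceed by induction on $k = |X' \setminus X|$, with the base case $k = 1$ and the inductive step peeling off one event at a time along the sequence $Y_0 \subset Y_1 \subset \cdots \subset Y_q = X'$ used by the algorithm. Throughout, the workhorses are Lemma~\ref{lem:closure}, which certifies that every call to $\ClosureAlgo$ produces the unique closure in polynomial time, and Lemma~\ref{lem:realizable_iff_feasible}, which ties realizability to feasibility.

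\textbf{Base case, root event.} When $\Proc{\Event}=\RootProcess$, the algorithm produces a single annotated partial order $\AnnotatedPOK$ by adding the thread-order edges into $\Event$ and invoking $\ClosureAlgo$. Closedness is immediate from Lemma~\ref{lem:closure}; minimality follows from the uniqueness of the closure (Lemma~\ref{lem:closure_defined}), combined with the observation that no ordering between distinct leaf events was ever introduced, so any closed $\AnnotatedPOQ$ whose partial order leaf-refines $K$ must actually refine $K$. Parts~\ref{item:extension2} and~\ref{item:extension3} become trivial: any $\Trace$ satisfying the hypotheses of part~\ref{item:extension3} linearizes the pre-closure structure, hence also its closure by Lemma~\ref{lem:realizable_iff_feasible}.

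\textbf{Base case, leaf event.} When $\Proc{\Event}\neq\RootProcess$, the algorithm enumerates all orderings of $\Event$ with the events of $X_2$ that conflict with $\Event$. Because $\MWidth{P\Project X_2}=1$, these conflicting events form a chain, so the enumerated orderings correspond exactly to the possible insertion positions of $\Event$ into that chain and keep $\MWidth{\cdot\Project X_2'}=1$. For part~\ref{item:extension1}, closedness follows from $\ClosureAlgo$ and minimality again from Lemma~\ref{lem:closure_defined} together with the fact that only thread-order and the chosen insertion orderings were added before closure. For part~\ref{item:extension2}, any two outputs differ on the relative order of $\Event$ with some conflicting leaf event, so neither leaf-refines the other. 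For part~\ref{item:extension3}, a trace $\Trace$ satisfying the hypotheses picks out a unique insertion ordering; the corresponding pre-closure annotated partial order admits $\Trace$ as a linearization, so by Lemma~\ref{lem:realizable_iff_feasible} its closure is not $\bot$ and is also linearized by $\Trace$.

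\textbf{Inductive step and complexity.} For $k > 1$, the algorithm reduces to $k-1$ and then $k$ recursive base-case calls. Properties~\ref{item:extension1} and~\ref{item:extension2} compose along the recursion because the base case preserves closedness, minimality and pairwise leaf-incomparability, and any divergence in leaf orderings at an earlier step propagates to the outputs. For part~\ref{item:extension3}, given $\Trace$ one descends the recursion tree by choosing at each step the unique insertion ordering consistent with the projection of $\Trace$; the inductive hypothesis delivers a matching output $\AnnotatedPOK_i$ at the leaf. For complexity, each base-case invocation runs in $O(\Poly(n))$ by Lemma~\ref{lem:closure}, and the number of invocations along successful branches of the recursion is bounded by $|\PartialOrders|+1 = m$, yielding the $O(m\cdot\Poly(n))$ bound after absorbing the polynomial cost of the failed branches that $\ClosureAlgo$ discards as $\bot$.

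\textbf{Main obstacle.} The most delicate point is minimality under composition: I must argue that the extra orderings produced by an earlier closure never over-constrain a later leaf-event insertion choice, so that the final $\AnnotatedPOK_i$ is still the weakest closed refinement consistent with its leaf orderings. The clean way is to show that for any closed $\AnnotatedPOQ$ whose partial order leaf-refines $K_i$, the partial order is already forced, step by step, by the closure rules applied at each insertion, hence $Q\Refines K_i$.
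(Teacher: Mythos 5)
Your proposal is correct and follows essentially the same route as the paper's proof: recursion peeling off one event at a time, closedness and the time bound from \cref{lem:closure}, minimality from the facts that only leaf-conflict orderings are added before closure and that closure preserves minimality, part~2 by construction, and part~3 via feasibility and the weakest-closure property (\cref{lem:closure_defined,lem:realizable_iff_feasible}). Your write-up is in fact more explicit than the paper's (which disposes of parts~1 and~2 in one line each), and your flagged ``main obstacle'' is exactly the observation the paper records in the main text as ``if $\AnnotatedPO$ is minimal and $\AnnotatedPOK$ is its closure then $\AnnotatedPOK$ is also minimal.''
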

\begin{proof}
\noindent{\em Correctness.}
We first argue about the correctness of the algorithm, i.e., the assertions in \cref{item:extension1}-\cref{item:extension3} above.
\begin{compactenum}
\item This assertion is an immediate consequence of the facts that
(i)~$\AnnotatedPO$ is closed and minimal,
(ii)~$\ExtendPO$ constructs each annotated partial order simply by ordering conflicting events that belong to the leaf threads, and
(iii)~the closure of a minimal annotated partial order is also minimal.
\item This assertion holds trivially by construction.
\item Since $\AnnotatedPO$ is minimal, $\ExtendPO$ creates an annotated partial order $\AnnotatedPOQ=(X'_1, X'_2, Q, \Value', \SideAnnotation', \GoodWrites')$ such that $\Trace\LeafRefines Q$ and $\AnnotatedPOQ$ is also minimal.
Observe that $\AnnotatedPOQ$ is feasible, since $\Trace\Refines Q$ and $(X'_1, X'_2, \Trace, \Value', \SideAnnotation' \GoodWrites')$ is closed. Thus the algorithm will construct $\AnnotatedPOK_i=\ClosureAlgo(\AnnotatedPOQ)$ and include $\AnnotatedPOK_i$ in $\PartialOrders$.
\end{compactenum}
\noindent{\em Complexity.}
Since the number of threads is constant, for every recursive call of $\ExtendPO$, \cref{item:extend_step2} of the algorithm creates $O(\Poly(n))$ partial orders $K_i$, and since computing the closure of $K_i$ requires $O(\Poly(n))$ time, we have that $\ExtendPO$ spends $O(\Poly(n))$ in each recursive call.
It follows that constructing the whole set $\PartialOrders$ takes $O(m\cdot \Poly(n))$ time,
since $m$ is the size of the output (i.e., the number of leaves in the recursion) and every recursive step takes $O(\Poly(n))$ time.

\end{proof}

%\smallskip\noindent{\bf $\VCDPOR$.}
%\input{algorithms/mutateroot}
%\input{algorithms/mutateleaf}

\smallskip\noindent{\bf Correctness and complexity of $\VCDPOR$.}
We now turn our attention to the correctness and complexity properties of $\VCDPOR$.
The proof concepts rely on the tree $T$ induced by the recursive calls of $\VCDPOR$.
We start with introducing the tree $T$ and proceed with the correctness and complexity statements of $\VCDPOR$.

\smallskip\noindent{\em The induced tree $T$.}
An execution of $\VCDPOR$ induces a tree $T$, where each node $u$ is labeled with an annotated partial order $\AnnotatedPO^u$ constructed at some recursive step by the algorithm.
We have two types of nodes.
\begin{compactenum}
\item A type~1 node $u$ is labeled with an annotated partial order $\AnnotatedPO^u$ such that $\AnnotatedPO^u$ was passed as an argument to a recursive call of $\VCDPOR$. These nodes correspond to all annotated partial orders returned by the algorithm $\ExtendPO$ when invoked from within $\MutateRoot$ or $\MutateLeaf$.
\item  A type~2 node $u$ is labeled with with an annotated partial order $\AnnotatedPO^u$ such that $\AnnotatedPO^u$ was not passed as an argument to $\VCDPOR$. These nodes correspond to all annotated partial orders returned by the algorithm $\ExtendPO$ when invoked from within $\VCDPOR$.
\end{compactenum}
We will use the induced tree $T$ to reason about the correctness and complexity of $\VCDPOR$.

\smallskip
\begin{remark}\label{rem:minimal}
For every node $u$, the annotated partial order $\AnnotatedPO^u$ is closed and minimal.
\end{remark}

\smallskip\noindent{\em Correctness.}
We first turn our attention to the correctness of $\VCDPOR$.
We will argue that for every target trace $\Trace^*$, the algorithm discovers the value function $\Value_{\Trace^*}$.
In particular, the induced tree $T$ has a node $u$ such that $\AnnotatedPO^u$ is of the form $\AnnotatedPO^u=(X_1, X_2, P, \Value_{\Trace^*}, \SideAnnotation, \GoodWrites )$, i.e., the value function of $\AnnotatedPO^u$ is the value function of the target trace $\Trace^*$.
In our discussion below, we fix such a target $\Trace^*$ and introduce some notation around it.

\smallskip\noindent{\em Compatible and witness nodes.}
%Consider any target trace $\Trace^*$. 
An annotated partial order $\AnnotatedPO=(X_1, X_2, P, \Value, \SideAnnotation, \GoodWrites)$ is called \emph{compatible} with $\Trace^*$ if the following conditions hold. Let $X=X_1\cup X_2$.
\begin{compactenum}
\item $X\subseteq \Events{\Trace^*}$, $\Value\subseteq \Value_{\Trace^*}$, $\SideAnnotation\subseteq \SideAnnotation_{\Trace^*}$ and $(\Trace^*\Project X)\LeafRefines P$.
\item For every read event $\Read\in \Reads{X}$ we have that $\Observation_{\Trace^*}(\Read)\in \GoodWrites(\Read)$.
\end{compactenum}
A node $u$ of the induced tree $T$ is called \emph{compatible} with $\Trace^*$ if $\AnnotatedPO^u$ is compatible with $\Trace^*$.
We call $u$ a \emph{witness}  if $\Value^u=\Value_{\Trace}$, where $\Value^u$ is the value function of the annotated partial order $\AnnotatedPO^u$.
\smallskip
\begin{remark}\label{rem:compatible_ancestors}
If $u$ is compatible with $\Trace^*$ then every ancestor of $u$ is also compatible with $\Trace^*$.
\end{remark}

\smallskip\noindent{\em Left and leftmost movers.}
Consider a node $u$ of the induced tree $T$ such that $u$ is compatible with $\Trace^*$.
A child $z$ of $u$ in $T$ is called  a \emph{left mover} if 
\begin{compactenum}
\item $z$ is compatible with $\Trace^*$ and
\item $z$ is the first child of $u$ with this property, in the order the execution of $\VCDPOR$.
\end{compactenum}
We call $u$ a \emph{leftmost mover} if $u$ and every ancestor of $u$ (except for the root of $T$) is a left mover.
The correctness of $\VCDPOR$ is based on the following lemma.

\smallskip
\begin{restatable}{lemma}{lemleftmove}\label{lem:left_move}
If $u$ is a leftmost mover then either $u$ is a witness or $u$ has a child that is a leftmost mover.
\end{restatable}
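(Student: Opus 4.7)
\textbf{Proof proposal for \cref{lem:left_move}.}

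The plan is to assume $u$ is a leftmost mover and is not a witness, and to produce a child $z$ of $u$ that is compatible with $\Trace^*$; the \emph{first} compatible child (in the execution order of $\VCDPOR$) will be a left mover by definition, and since every proper ancestor of $z$ is a left mover by the assumption on $u$, $z$ will be a leftmost mover. Thus the whole argument reduces to showing the existence of some compatible child. Since $u$ is not a witness, we have $\Value^u \subsetneq \Value_{\Trace^*}$, and in particular there exists an event of $\Trace^*$ that is not in $X^u = X_1^u \cup X_2^u$, so at least one thread has an event in $\Trace^*$ beyond those captured by $\AnnotatedPO^u$.

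I would proceed by case analysis on the type of $u$ in the induced tree $T$. If $u$ is a type~1 node (one passed to $\VCDPOR$), then its children are obtained from $\ExtendPO(\AnnotatedPO^u, \Events{\Trace}, \Value_{\Trace}, \SideAnnotation^u, \GoodWrites^u)$ where $\Trace$ is the write extension of $\Realize(\AnnotatedPO^u)$ produced in~\cref{line:vcdpor_wextend}. The events added in $\Events{\Trace}\setminus X^u$ are writes that are inevitable for $\AnnotatedPO^u$, hence they appear in every linearization of every extension of $\AnnotatedPO^u$ and in particular belong to $\Events{\Trace^*}$. The projected sequence $\Trace^*{\Project}\Events{\Trace}$ is then a valid trace of $\System$ that satisfies $\Events{\Trace^*{\Project}\Events{\Trace}} = \Events{\Trace}$, has good observations (as the good-writes function was extended only with candidates compatible with $\Trace^*$), and leaf-refines $P^u$ (because $u$ is compatible with $\Trace^*$). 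By \cref{item:extension3} of \cref{lem:extend_correctness}, some output $\AnnotatedPOQ$ of $\ExtendPO$ linearizes $\Trace^*{\Project}\Events{\Trace}$, and this $\AnnotatedPOQ$ is then compatible with $\Trace^*$, giving the required type~2 child.

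If $u$ is a type~2 node, its children arise from $\MutateRoot$ and the $\MutateLeaf$ calls on $\AnnotatedPO^u$ and $\Trace$. Since $u$ is not a witness, some thread $\Process$ has an enabled read $\Read$ in $\Trace$ that is in $\Events{\Trace^*}$, and we focus on the mutation for $\Process$. Let $\Write^* = \Observation_{\Trace^*}(\Read)$ and $v^* = \Value_{\Trace^*}(\Read)$. The key technical step is to argue that $\Write^*\in \CandidateSet_{\Trace}^{\NegativeAnnotation_{\AnnotatedPO^u}}(\Read)$: this is where the leftmost hypothesis on the path from the root to $u$ is used, because the CHB map entries along this path are updated only by $\MutateRoot/\MutateLeaf$ calls that correspond to compatible choices, so no prior mutation has ``forbidden'' $\Write^*$ via the CHB bookkeeping. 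Given this, $v^*\in \ValueDomain_{\Read}$ in \cref{line:mutate_root_values} or \cref{line:mutate_leaf_values}, and the corresponding iteration produces an extension with $\Value_{\Read}$ and $\GoodWrites_{\Read}$ that agree with $\Trace^*$ on $\Read$ and $\Write^*$. Then by \cref{item:extension3} of \cref{lem:extend_correctness} applied to the projection of $\Trace^*$ onto $X^u\cup\{\Read\}$, the call to $\ExtendPO$ returns an annotated partial order $\AnnotatedPOK$ linearizing that projection; together with the fact that $\Trace^*\LeafRefines K$ (inherited from $\Trace^*\LeafRefines P^u$ plus the new orderings), $\AnnotatedPOK$ is a compatible child of $u$.

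The hardest step, and the one I would spend the most care on, is the claim that the CHB map along the leftmost path never excludes the correct candidate write $\Write^*$. This requires a separate inductive invariant over the tree $T$: along any path of left movers from the root, for every read $\Read$ with $\Observation_{\Trace^*}(\Read)=\Write^*$ that has not yet been incorporated into the current annotated partial order, we have $\Write^*\in\CandidateSet_{\Trace}^{\NegativeAnnotation}(\Read)$ where $\NegativeAnnotation$ is the current CHB map. This invariant must be proved by unfolding the definition of $\CandidateSet_{\Trace}^{\NegativeAnnotation}(\Read)$ and tracking the two cases on $\GuardingRead_{\Trace}(\Write^*)$, using that the CHB updates at the end of $\MutateRoot$ and $\MutateLeaf$ occur only after the recursive subtree for each compatible value has been explored. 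Once this invariant is in place, the two cases above yield the required compatible child and the lemma follows.
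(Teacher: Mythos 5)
Your proposal is correct and follows essentially the same route as the paper's proof: the same reduction to exhibiting a compatible child, the same case split on type~1 versus type~2 nodes using \cref{item:extension3} of \cref{lem:extend_correctness}, and the same identification of the crux in the type~2 case, namely that $\Observation_{\Trace^*}(\Read)$ remains in the candidate set. The only cosmetic difference is that you phrase that crux as an invariant along the leftmost path, whereas the paper argues its contrapositive — if the CHB map already excluded $\Write^*$, an earlier compatible sibling would have been explored first, contradicting leftmostness — which is the same argument.
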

%\begin{proof}[(Sketch)]
%Assume that $u$ is a leftmost mover and it suffices to argue that $u$ has a child $z$ that is compatible with $\Trace^*$.
%It will then follow that $u$ has a child that is a left mover, and since $u$ is itself a leftmost mover, that child is also a leftmost mover.
%Let $\AnnotatedPO^u=(X_1, X_2, P, \Value, \SideAnnotation, \GoodWrites)$.
%%Note that since $u$ is a leftmost mover, it is compatible with $\Trace^*$.
%If $u$ is a type 1 node, the statement follows straightforwardly from the properties of $\ExtendPO$ in \cref{lem:extend_correctness}.
%Now consider that $u$ is a type 2 node.
%For any linearization $\Trace$ of $\AnnotatedPO^u$, there exists a read event $\Read$ that is enabled in $\Trace$ and such that $\Observation_{\Trace^*}(\Read)
%\in \Events{\Trace}$. Let $\Write=\Observation_{\Trace^*}(\Read)$.
%The fact that $u$ is a leftmost mover guarantees that $\Write\in \CandidateSet_{\Trace}(\Read)$.
%Hence the algorithm will extend $\AnnotatedPO^u$ with a good-writes function $\GoodWrites'$ such that $\Write\in\GoodWrites'(\Read)$,
%and $\ExtendPO$ will return a partial order $\AnnotatedPO'$ that is also compatible with $\Trace^*$.
%The desired result follows.
%We refer to \cref{sec:app_vcdpor_proofs} for the full proof.
%\end{proof}
\begin{proof}
Assume that $u$ is not a witness and we argue that $u$ has a child $z$ such that $z$ is compatible with $\Trace^*$.
Since $u$ is a leftmost mover, it will follow that $u$ has a child that is a leftmost mover.
We split cases based on whether $u$ is a type 1 or type 2 node.

\noindent{\em The node $u$ is a type 1 node.}
By \cref{rem:minimal}, $\AnnotatedPO^u=(X_1, X_2, P, \Value, \SideAnnotation, \GoodWrites)$ is a minimal, closed annotated partial order.
Consider the trace $\Trace$ constructed by $\VCDPOR$ in \cref{line:vcdpor_wextend}, and observe that
$\Events{\Trace}\subseteq \Events{\Trace^*}$, $\Value_{\Trace}\subseteq \Value_{\Trace^*}$ and $\SideAnnotation_{\Trace}\subseteq \SideAnnotation_{\Trace^*}$.
Consider the trace $\ov{\Trace} = \Trace^*\Project \Events{\Trace}$, and observe that
(i)~for every read event $\Read\in \Reads{\ov{\Trace}}$ we have $\Observation_{\ov{\Trace}}(\Read)\in \GoodWrites(\Read)$, and
(ii)~$\ov{\Trace}\LeafRefines P$.
By \cref{lem:extend_correctness}, $\ExtendPO$ in \cref{line:vcdpor_extend} returns an annotated partial order $\AnnotatedPOK_i$ such that $\ov{\Trace}$ is a linearization of $\AnnotatedPOK_i$.
We associate $z$ with $\AnnotatedPOK_i$.

\noindent{\em The node $u$ is a type 2 node.}
Consider any linearization $\Trace$ of $\AnnotatedPO^u=(X_1, X_2, P, \Value, \SideAnnotation, \GoodWrites)$, and since $u$ is compatible with $\Trace^*$, for every read event $\Read$ that is enabled in $\Trace$ we have that $\Read\in \Events{\Trace^*}$.
In addition, there exists a read event $\Read$ that is enabled in $\Trace$ and $\Observation_{\Trace^*}(\Read)\in \Events{\Trace}$.
Let $\Write=\Observation_{\Trace^*}(\Read)$, and we argue that $\Write\in\CandidateSet_{\Trace}^{\NegativeAnnotation^u}(\Read)$.
We distinguish between the following cases.
\begin{compactenum}
\item $\NegativeAnnotation^u=\bbot$. Then by definition, $\Write\in\CandidateSet_{\Trace}^{\NegativeAnnotation^u}(\Read)$.
\item $\NegativeAnnotation^u=\bot$ or $\NegativeAnnotation^u\in \SysReads$.
Then, there exists a type 2 ancestor $q$ of $u$ and a trace $\Trace_q$ that is a linearization of $\AnnotatedPO^q=(X'_1, X'_2, P', \Value', \SideAnnotation', \GoodWrites')$, $\Read$ is enabled in $\Trace^q$ and $\Write \in \CandidateSet_{\Trace^q}^{\NegativeAnnotation^q}(\Read)$.
It is straightforward to see that at that point the algorithm extended $\AnnotatedPO^q$ with $\Read$ and a good-writes function $\GoodWrites^q$ such that $\Write\in \GoodWrites^q(\Read)$.
A similar analysis as in the previous item shows that $\ExtendPO$ returned an annotated partial order $\AnnotatedPO'$ that is compatible.
In addition, $\AnnotatedPO'$ is associated with a node of $T$ that is a child of $q$ and that was visited before the ancestor of $u$ which is also a child of $q$ .
This contradicts the fact that $u$ is a leftmost mover.
It follows that $\Write\in\CandidateSet_{\Trace}^{\NegativeAnnotation^u}(\Read)$.
The rest follows by \cref{lem:extend_correctness}, similar to the previous case.
\end{compactenum}

The desired result follows.
\end{proof}

\smallskip
\begin{restatable}{lemma}{lemvcdporoptimal}\label{lem:vcdpor_optimal}
For every pair of traces $\Trace'_1, \Trace'_2$ constructed by $\VCDPOR$ in \cref{line:vcdpor_realize} in two recursive calls,
we have that $\Trace'_1\not \VHBE \Trace'_2$.
\end{restatable}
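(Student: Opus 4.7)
The plan is to proceed by induction on the recursion tree $T$ of $\VCDPOR$, examining the least common ancestor (LCA) $u$ of the type-1 nodes $u_1, u_2$ associated with the two recursive calls that produced $\Trace'_1, \Trace'_2$. First I would establish a monotonicity invariant: along any downward path in $T$, the annotated partial order's event set grows and its partial order is strengthened, as a direct consequence of the behaviour of $\ExtendPO$ (which only adds events and orderings) and $\ClosureAlgo$ (which only strengthens). If $u \in \{u_1, u_2\}$, say $u = u_1$, then $u_2$ is a proper descendant of $u_1$, so $\Events{\Trace'_1} \subsetneq \Events{\Trace'_2}$ and $\Trace'_1 \not\VHBE \Trace'_2$ immediately.

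Otherwise $u$ is strictly above both; let $v_1, v_2$ denote the children of $u$ on the paths to $u_1, u_2$ respectively. If $u$ is a type-1 node, then $v_1, v_2$ are distinct type-2 siblings returned by the $\ExtendPO$ call in \cref{line:vcdpor_extend} of $\VCDPOR$. By \cref{lem:extend_correctness}(\ref{item:extension2}), the partial orders $K^{v_1}, K^{v_2}$ disagree on the order of some pair of conflicting leaf events; by monotonicity this disagreement is inherited at $u_1, u_2$, and since any realization respects the partial order on leaf events, we obtain $\HB{}{\Trace'_1}{}\Project\SysLeafEvents \neq \HB{}{\Trace'_2}{}\Project\SysLeafEvents$, giving $\Trace'_1 \not\VHBE \Trace'_2$.

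If $u$ is a type-2 node, then $v_1, v_2$ are type-1 siblings arising from iterations of $\MutateRoot$ or $\MutateLeaf$ applied to $\AnnotatedPO^u$. The divergence corresponds to one of: a different side $i \in [2]$ for the root read (yielding distinct $\SideAnnotation$); a different value $v$ for the same added read (yielding distinct $\Value$); a different ordering output of $\ExtendPO$ inside $\MutateLeaf$ (yielding distinct leaf happens-before via \cref{lem:extend_correctness}(\ref{item:extension2})); or different enabled reads being mutated on the two branches (e.g., the root read from $\MutateRoot$ versus a leaf read from $\MutateLeaf$, or leaf reads of distinct threads). In each of the first three sub-cases, the distinguishing feature propagates through the descending construction and through the realizations by virtue of \cref{lem:closed_linearizable}, yielding $\Trace'_1 \not\VHBE \Trace'_2$.

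The hardest sub-case is when $v_1, v_2$ mutate different enabled reads, since the two subtrees might eventually incorporate the same event set and agree on leaf orderings, so none of the direct arguments above apply. The key here is the behaviour of the CHB map $\NegativeAnnotation_{\AnnotatedPOQ}$, which is updated at the end of each $\MutateRoot/\MutateLeaf$ call and is shared across the subsequent calls issued at $u$; additionally, within any single $\MutateRoot$ or $\MutateLeaf$ call the iterations over $v \in \ValueDomain_{\Read}$ partition the candidate writes of $\Read$ into value-disjoint good-write sets. Hence whenever the same read $\Read$ appears in both branches, the resulting good-write sets $\GoodWrites^{u_1}(\Read)$ and $\GoodWrites^{u_2}(\Read)$ are disjoint, either because the $\NegativeAnnotation$ update prunes writes already considered upstream, or because the value partitioning separates them. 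Since $\Observation_{\Trace'_i}(\Read) \in \GoodWrites^{u_i}(\Read)$, this forces either $\Value_{\Trace'_1}(\Read) \neq \Value_{\Trace'_2}(\Read)$ or $\CHB{}{\Trace'_1}{}\Project\SysReads \neq \CHB{}{\Trace'_2}{}\Project\SysReads$, establishing $\Trace'_1 \not\VHBE \Trace'_2$ and completing the proof.
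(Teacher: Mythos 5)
Your case decomposition matches the paper's almost exactly: the ancestor case via differing event sets, the type-1 LCA case via \cref{lem:extend_correctness}(\ref{item:extension2}) and the leaf restriction of happens-before, and the type-2 LCA case split by side, value, leaf ordering, and finally the "different reads mutated" sub-case resolved through the CHB map. Up to that last sub-case the argument is sound and essentially identical to the paper's.

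In that last sub-case, however, there is a genuine gap. You correctly observe that for the read $\Read_1$ mutated in the earlier branch, the update to $\NegativeAnnotation(\Read_1)$ forces the good-write sets assigned to $\Read_1$ in the two branches to be disjoint, so $\Observation_{\Trace'_1}(\Read_1)\neq\Observation_{\Trace'_2}(\Read_1)$. But you then assert that this "forces either $\Value_{\Trace'_1}(\Read_1)\neq\Value_{\Trace'_2}(\Read_1)$ or $\CHB{}{\Trace'_1}{}\Project\SysReads\neq\CHB{}{\Trace'_2}{}\Project\SysReads$" without justification, and this implication is exactly the non-trivial point: the whole purpose of $\VHB$ is to identify traces in which a read observes \emph{different} write events of the same value, so a mere difference in observation functions proves nothing. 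The paper closes this gap by looking at $\Read=\GuardingRead_{\Trace'_2}(\Observation_{\Trace'_2}(\Read_1))$: because the pruning by $\NegativeAnnotation(\Read_1)$ only admits candidate writes whose guarding read lies strictly beyond the prefix recorded when the first branch was explored, this guarding read satisfies $\CHB{\Read}{\Trace'_2}{\Read_1}$ but does not causally precede $\Read_1$ in $\Trace'_1$, which is what actually yields $\CHB{}{\Trace'_1}{}\Project\SysReads\neq\CHB{}{\Trace'_2}{}\Project\SysReads$ when the values agree. Your write-up needs this guarding-read argument (and should restrict the disjointness claim to the specific read whose mutation call completed at the LCA before the second branch started, rather than to every read appearing in both branches) to be complete.
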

\begin{proof}
Consider the nodes $u_1, u_2$ of the induced tree $T$ that correspond to the recursive calls in which $\VCDPOR$ constructed the traces $\Trace'_1$ and $\Trace'_2$, respectively.
If $u_i$ is ancestor of $u_{3-i}$, for some $i\in[2]$, then clearly $\Events{\Trace'_i}\neq \Events{\Trace'_{3-i}}$.
Otherwise, let $u$ be the lowest common ancestor of $u_1$ and $u_2$ in $T$.
For each $i\in[2]$, let $z_i$ be the child of $u$ that is also an ancestor of $u_i$, and let $\AnnotatedPO^{z_i}=(X^i_1, X'_2, P^i, \Value^i, \SideAnnotation^i, \GoodWrites^i)$ be the annotated partial order that labels node $z_i$.
We distinguish between the following cases.
\begin{compactenum}
\item If $z$ is a type~2 node, then $\AnnotatedPO^{z_i}$ only differ on $P^i$.
By \cref{lem:extend_correctness},  there exists a pair of events $\Event_1, \Event_2\in X^1_2$ such that
(i)~$\Confl{\Event_1}{\Event_2}$ and
(ii)~$\HB{\Event_1}{P^1}{\Event_2}$ and $\HB{\Event_2}{P^2}{\Event_1}$.
It follows that $\HB{\Event_1}{\Trace'_1}{\Event_2}$ and $\HB{\Event_2}{\Trace'_2}{\Event_1}$, and since $\Event_1, \Event_2\in \SysLeafEvents$, we have that $\Trace'_1\not \VHBE\Trace'_2$.
\item If $z$ is a type~1 node, let $\Trace'$ be the trace constructed in \cref{line:vcdpor_realize}  by the recursive call to $\VCDPOR$ for node $z$.
We distinguish between the following cases.
\begin{compactenum}
\item If $\AnnotatedPO^{z_1}$ and $\AnnotatedPO^{z_2}$ occur from the same invocation to $\ExtendPO$, then the proof is similar to the previous item.
\item If $\AnnotatedPO^{z_1}$ and $\AnnotatedPO^{z_2}$ occur from different invocations to $\ExtendPO$, we examine whether both $\AnnotatedPO^{z_1}$ and $\AnnotatedPO^{z_2}$ were constructed by extending to the same read event $\Read$ or not.
In the former case, we examine the values $\Value^1(\Read)$ and $\Value^2(\Read)$ that $\Read$ was forced to read.
If $\Value^1(\Read)\neq \Value^2(\Read)$ then $\Value_{\Trace'_1}(\Read)\neq \Value_{\Trace'_2}$, 
whereas if $\Value^1(\Read)=\Value^2(\Read)$ then $\Proc{\Read}=\RootProcess$ and $\SideAnnotation^1(\Read)\neq \SideAnnotation^2(\Read)$ and thus $\SideAnnotation_{\Trace'_1}(\Read)\neq \SideAnnotation_{\Trace'_2}(\Read)$.
We are left with the case where $\AnnotatedPO^{z_1}$ and $\AnnotatedPO^{z_2}$ were constructed by extending to two different read events $\Read_1$ and $\Read_2$, respectively.
Assume wlog that $\AnnotatedPO^{z_2}$ was constructed after $\AnnotatedPO^{z_1}$.
If $\Value_{\Trace'_2}(\Read_1)\neq \Value_{\Trace'_1}(\Read_1)$ we are done.
Otherwise, let $\Read=\GuardingRead_{\Trace'_2}(\Observation_{\Trace}(\Read_1))$ and note that $\CHB{\Read}{\Trace'_2}{\Read_1}$.
Due to the causally-happens-before map $\NegativeAnnotation$ in that recursive call of $\VCDPOR$, we have that $\Read\not \in \Events{\Trace'}$ 
and thus $\NCHB{\Read}{\Trace'_1}{\Read_1}$.
\end{compactenum}
\end{compactenum}

In all cases, we have $\Trace'_1\not \VHBE \Trace'_2$, as desired.
\end{proof}

\smallskip
\begin{restatable}{lemma}{lemvcdporcomplexity}\label{lem:vcdpor_complexity}
$\VCDPOR$ runs in time $O\left(|\TraceSpaceMax/\VHB|\cdot \Poly(n)\right)$, where $n$ is the length of the longest trace in $\TraceSpaceMax$.
\end{restatable}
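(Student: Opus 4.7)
The plan is to decompose the running time as (number of recursive invocations of $\VCDPOR$) $\times$ (work performed per invocation), and then bound each factor separately. The recursive structure is captured by the induced tree $T$ described before \cref{rem:minimal}, whose nodes correspond one-to-one with calls to $\VCDPOR$ (and to annotated partial orders produced by $\ExtendPO$). It will be convenient to charge the work of $\MutateRoot$ and $\MutateLeaf$ to the corresponding type-2 parent node.

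First I would argue that each recursive call of $\VCDPOR$ on $\AnnotatedPO^u$ performs $O(\Poly(n))$ work, amortized over its immediate children. The call to $\Realize$ in \cref{line:vcdpor_realize} takes $O(\Poly(n))$ time by \cref{them:closure}; computing $\WriteExtend$ in \cref{line:vcdpor_wextend} is straightforward in polynomial time; and the invocation of $\ExtendPO$ in \cref{line:vcdpor_extend} takes $O(m \cdot \Poly(n))$ time, where $m$ is the number of children it produces (\cref{lem:extend_correctness}). Since the number of threads $k$ is constant, each invocation of $\ExtendPO$ in $\VCDPOR$, $\MutateRoot$, or $\MutateLeaf$ adds one event and orders it with at most $O(n^{k-1}) = O(\Poly(n))$ conflicting events from other leaf threads, so $m$ is polynomial in $n$. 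Iterating over candidate values in \cref{line:mutate_root_values} and \cref{line:mutate_leaf_values} yields at most $|\ValueDomain_{\Read}| \le n$ further polynomial-sized iterations. Thus each node $u$ of $T$ contributes $O(\Poly(n))$ work in aggregate once the cost of creating its children is absorbed into the parent's budget.

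Next I would bound $|T|$ by $|\TraceSpaceMax/\VHB| \cdot \Poly(n)$. The depth of $T$ is $O(n)$ because every recursive descent either advances a read event (via $\MutateRoot$ or $\MutateLeaf$) or refines the partial order by adding one event through $\ExtendPO$, and the total number of such events in any branch is bounded by the length of the longest trace. Thus it suffices to bound the number of \emph{leaves} of $T$. Each leaf $u$ corresponds to a recursive call whose realized trace $\Trace'_u$ (constructed in \cref{line:vcdpor_realize}) has a write-extension $\Trace_u$ with no enabled reads, i.e., $\Trace_u$ is maximal. By \cref{lem:vcdpor_optimal}, for any two distinct leaves $u_1, u_2$, the traces $\Trace'_{u_1}, \Trace'_{u_2}$ satisfy $\Trace'_{u_1} \not\VHBE \Trace'_{u_2}$; since each $\Trace_u$ is a write-extension of $\Trace'_u$ and adds only inevitable writes (whose relative ordering with existing events is forced by the thread order and the closure), a simple check verifies that $\Trace_{u_1} \not\VHBE \Trace_{u_2}$ as well. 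Hence the leaves inject into $\TraceSpaceMax/\VHB$, yielding at most $|\TraceSpaceMax/\VHB|$ leaves and $|T| = O(|\TraceSpaceMax/\VHB| \cdot n)$.

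Combining the two bounds gives a total running time of $O(|\TraceSpaceMax/\VHB| \cdot \Poly(n))$, as desired. The main obstacle is the second step: we must argue not only that distinct recursive calls produce $\VHB$-distinct traces (already given by \cref{lem:vcdpor_optimal}) but that this distinctness lifts to \emph{maximal} extensions in order to match the form of the claimed bound. I expect this to require carefully checking that the write-extension $\Trace_u$ together with any canonical maximal completion preserves the distinguishing features --- the value function, the side function, the causally-happens-before order on reads, and the happens-before order restricted to leaf events --- that already separate $\Trace'_{u_1}$ and $\Trace'_{u_2}$ in the three cases of the proof of \cref{lem:vcdpor_optimal} (value disagreement at a root read, side disagreement at a root read, or a leaf-side reordering of conflicting events).
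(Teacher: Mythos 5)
Your per-node cost analysis and the $O(n)$ depth bound are fine, but the leaf-counting step contains a genuine gap: it is not true that every leaf of $T$ corresponds to a maximal trace. A node becomes a leaf not only when the write-extension $\Trace$ has no enabled reads, but also when some thread has an enabled read $\Read$ whose candidate write set $\CandidateSet_{\Trace}^{\NegativeAnnotation}(\Read)$ is empty because the CHB map has already excluded every conflicting write; then $\ValueDomain_{\Read}=\emptyset$ and $\MutateRoot$/$\MutateLeaf$ make no recursive calls. The paper's own running example exhibits exactly this: in nodes $i$, $j$, $k$ the thread $\Process_1$ still has the enabled read $\Read_{\Process_1}^4$, yet no candidate writes remain, so these nodes are leaves whose traces are not maximal. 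Consequently your injection of the leaves into $\TraceSpaceMax/\VHB$ fails, and the bound on the number of leaves does not follow. (Your secondary worry --- lifting $\not\VHBE$ from the realized traces to their maximal extensions --- is legitimate but is not where the argument actually breaks.)

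The paper sidesteps maximality entirely. By \cref{lem:vcdpor_optimal}, \emph{every} node of $T$ (not just the leaves) realizes a trace in a distinct class of $\TraceSpace/\VHB$, the partitioning of the \emph{full} trace space, so $|T|\le|\TraceSpace/\VHB|$ with no depth factor needed. The missing ingredient is then a separate counting argument relating the two partitionings: since the number of threads is constant, a maximal trace has only $O(\Poly(n))$ distinct prefix event-sets (one per tuple of per-thread counts), and prefixes of $\VHB$-equivalent maximal traces over equal event sets are themselves $\VHB$-equivalent; hence $|\TraceSpace/\VHB|=O\left(|\TraceSpaceMax/\VHB|\cdot\Poly(n)\right)$. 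Combining this with the $O(\Poly(n))$ work per node gives the claimed bound. If you want to retain your leaf-based structure, you would have to account for the non-maximal leaves as well, which essentially forces you back to this prefix-counting argument.
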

\begin{proof}
Consider two maximal traces $\Trace_1, \Trace_2\in \TraceSpaceMax$ such that $\Trace_1\VHBE\Trace_2$.
Let $\Trace'_1$, $\Trace'_2$  be prefixes of $\Trace_1$, $\Trace_2$, respectively, such that $\Events{\Trace'_1}=\Events{\Trace'_2}$,
and observe that $\Trace'_1\VHBE \Trace'_2$.
Since we have constantly many threads, it follows that given a maximal trace $\Trace$, there exist $O(\Poly(n))$ different sets $X\subseteq \Events{\Trace}$ for which there exists a trace $\Trace'$ such that (i)~$\Events{\Trace'}=X$ and (ii)~$\Trace$ is a maximal extension of $\Trace'$.
It follows that $|\TraceSpace/\VHB| = O\left(|\TraceSpaceMax/\VHB|\cdot \Poly(n)\right)$, and thus 
it suffices to argue that $\VCDPOR$ runs in time  $O\left(|\TraceSpace/\VHB|\cdot \Poly(n)\right)$.
By \cref{lem:vcdpor_optimal}, for every pair of traces $\Trace'_1$ and $\Trace'_2$ constructed by $\VCDPOR$ in \cref{line:vcdpor_realize},
we have that $\Trace'_1\not \VHBE \Trace'_2$, and hence each such trace falls into a different class of $\TraceSpace/\VHB$.
Thus the size of the induced tree $T$ is bounded by $|\TraceSpace/\VHB|$.
For every internal node $u$ of $T$, the children of $u$ in $T$ are produced by $O(\Poly(n))$ calls to $\ExtendPO$, which, by \cref{lem:extend_correctness}, requires $O(\Poly(n))$ time per child of $u$.
Hence the total time spent by $\VCDPOR$ is
\[
O(|T|\cdot \Poly(n))=O\left(|\TraceSpace/\VHB|\cdot \Poly(n)\right)= O\left(|\TraceSpaceMax/\VHB|\cdot \Poly(n)\right) .
\] 
The desired result follows.

\end{proof}

\section{Details of \cref{sec:experiments}}\label{sec:appendix_experiments}

\smallskip\noindent{\bf Identifying events.}
In the implementation we rely on the Nidhugg model-checker to identify events. 
An event $\Event$ is defined by a pair $(a_{\Event},b_{\Event})$, where $a_{\Event}$ is its thread-id and $b_{\Event}$ is the sequential number of the last LLVM instruction (of the corresponding thread) that is part of $\Event$. 
Note that this way, there can exist two traces $\Trace_1, \Trace_2$ and two different events $\Event_i\in\Trace_i$, for $i\in[2]$ such that $a_{\Event_1}=a_{\Event_2}$ and $b_{\Event_1}=b_{\Event_2}$, i.e., the two events have the same id in their respective traces.
However, this means that the control-flow leading to each event is different. 
In this case $\Trace_1$ and $\Trace_2$ differ in the value of a common event, and hence are treated as inequivalent.

\smallskip\noindent{\bf Root thread and order of threads for extension.}
In our presentation, given a concurrent program $\System=\{ \Process_i \}_{i=1}^k$ of $k$ threads, we always distinguish
$\Process_1$ as the root thread of $\System$. 
In our experiments, we choose $\Process_1$ to be the second thread of the program.
We note the choice of the root thread does not affect the soundness of our approach.
%that distinguishing a different thread instead of $\Process_1$ as the root thread has no effect on the soundness of our approach. 

%Importantly, given a concurrent
%program, the choice of a root thread can have significant impact on the size of the corresponding induced recursive computational tree,
%and thus also on the number of traces explored during the computation. A simple static analysis of a given program before
%executing $\VCDPOR$ on it can help choose a root thread that corresponds to a smaller computational tree.

%Furthermore, in $\VCDPOR$~\cref{algo:vcdpor}, given an extension $\AnnotatedPOQ$, in our presentation we first call $\MutateRoot$,
%and then for each leaf thread we call its corresponding $\MutateLeaf$. However, the order in which we call $\MutateRoot$ and
%different $\MutateLeaf$ can have an effect on the shape of the resulting computational tree. Each call of $\MutateRoot$ resp. $\MutateLeaf$
%updates the CHB map $\NegativeAnnotation$, and this update is passed on to the further calls of
%$\MutateRoot$ resp. $\MutateLeaf$ within the loop that handles $\AnnotatedPOQ$. The updates to $\NegativeAnnotation$ influence
%the set of candidate writes for annotations in these further calls.

\smallskip\noindent{\bf Optimizations.}
Here we briefly report on three straightforward optimizations we have made in our implementation, namely
\begin{compactenum}
\item Choosing the order of reads to extend the annotated partial order.
\item Extensions yielding maximal traces.
\item Reduction in the number of annotated partial orders returned by $\ExtendPO$.
\end{compactenum}

\smallskip\noindent{\em 1. Choosing the order of reads to extend the annotated partial order.}
In $\VCDPOR$~\cref{algo:vcdpor}, given an extension $\AnnotatedPOQ$, we first call $\MutateRoot$,
and then for each leaf thread we call its corresponding $\MutateLeaf$. However, the order in which we call $\MutateRoot$ and
different $\MutateLeaf$ can have an effect on the shape of the induced recursion tree. 
In our experiments, we fix this order by giving higher priority to read events that are succeeded by some write event in their local thread.
Ties are broken arbitrarily.
%Each call of $\MutateRoot$ resp. $\MutateLeaf$
%updates the CHB map $\NegativeAnnotation$, and this update is passed on to the further calls of
%$\MutateRoot$ (resp., $\MutateLeaf$) within the loop that handles $\AnnotatedPOQ$. 
%The updates to $\NegativeAnnotation$ influence
%the set of candidate writes for annotations in these further calls.

\smallskip\noindent{\em 2. Extensions yielding maximal traces.}
Consider a call of $\VCDPOR$~\cref{algo:vcdpor} on a node $u$ of the induced recursion tree, 
In this call, an annotated partial order $\AnnotatedPOQ$ will be (attempted to) be extended with a read event $\Read$ to observe a value $v$.
If this extension is successful and results in a maximal trace, we do not attempt to extend annotated partial orders that correspond to siblings of $u$ with $\Read$ observing value $v$.

\smallskip\noindent{\em 3. Reduction in the number of annotated partial orders returned by $\ExtendPO$.}
In our presentation of $\ExtendPO(\AnnotatedPO, X', \Value', \SideAnnotation', \GoodWrites')$,
given $X'\setminus X=\{\Event \}$ such that $\Event$ belongs to a leaf thread, we consider all possible orderings of $\Event$ with conflicting
events from all leaf threads. However, in our implementation, we relax this in two ways. Given a write event $\Event_{\Write}$, we say it is
\emph{never-good} if it does not belong to $\GoodWrites'(\Read)$ for any read event $\Read$. Further, given $\Event_{\Write}$ and an annotated
partial order $\AnnotatedPOK$, we say that $\Event_{\Write}$ is \emph{unobservable} in $\AnnotatedPOK$, if for every linearization of $\AnnotatedPOK$ there is no
read event such that $\Event_{\Write}$ is its observation. 
Given two unordered conflicting write events from leaf threads, we do not order them if
(i) both of them are never-good, or (ii) at least one of them is unobservable. 
%Note that observability of an event may change during the recursive
%calls of $\ExtendPO$. Therefore, we defer the decisions to order such pairs of events until all new events $X'\setminus X$ have been
%introduced.

\smallskip\noindent{\bf Technical details.}
For our experiments we have used a Linux machine with Intel(R) Xeon(R) CPU E5-1650 v3 @ 3.50GHz (12 CPUs) and 128GB of RAM.
We have run Nidhugg with Clang and LLVM version 3.8.

\smallskip\noindent{\bf SV-COMP benchmark modifications.}
We have made small changes to some of the SV-COMP benchmarks so they can be processed by our prototype implementation in Nidhugg:
\begin{compactitem}
\item Verifier calls to perform acquire and release are handled by a \texttt{pthread\_mutex}.
\item Verifier calls to nondeterministically produce an arbitrary integer are replaced by a constant value.
\end{compactitem}
Further, we have made modifications so that the examples can be used for our experiments:
\begin{compactitem}
\item In critical section benchmarks, the thread routines are put in a loop with scalable size, so threads can reenter a critical section multiple times.
\item In benchmarks that contain an assertion violation, we replace the assertion with a read, so that we can measure the actual efficiency of
the trace space exploration. Note that for each benchmark with a violation, we first used the benchmark unchanged, and we
state that all algorithms considered in our experiments successfully caught the violation and reported a corresponding error trace.
\item We manually perform loop unrolling, i.e., we limit the amount of times each loop is executed by a scalable bound, instead of relying
on the loop bounding technique provided by Nidhugg.
\end{compactitem}

Finally, our $\VCDPOR$ implementation assumes that all global variables are initialized with value 0. Therefore, in benchmarks that contain
a different initialization, we put writes of the corresponding value to the corresponding variable, before any threads are spawned.
Note that this modification does not change the verification problem. We report this fact in case our implementation is used in different future works.

\end{document}